\documentclass[alpha-refs]{wiley-article}
\usepackage{graphicx} 
\usepackage{xcolor}
\usepackage{comment}
\usepackage{csquotes}
\usepackage{booktabs}
\usepackage{multirow}
\usepackage{bbm}
\usepackage{adjustbox}
\usepackage{longtable}
\usepackage{array}
\usepackage{rotating}
\usepackage{pdflscape}
\usepackage{tikz}
\usetikzlibrary{positioning,arrows.meta,calc}

\usepackage{natbib}

\usepackage{caption}
\makeatletter
\def\@opargbegintheorem#1#2#3{%
  \trivlist
  \item[\hskip\labelsep\textbf{#1\ #2\ (#3).}]%
  \itshape
}

\newtheorem{mydefinition}{Definition}
\newtheorem{myassumption}{Assumption} 
\newtheorem{myremark}{Remark}
\newtheorem{mylemma}{Lemma}
\newtheorem{myproposition}{Proposition}
\newtheorem{mycorollary}{Corollary}
\newtheorem{mytheorem}{Theorem}

\newenvironment{myproof}[1][Proof]{%
  \par\noindent\textbf{#1. }\ignorespaces
}{%
  \hfill$\blacksquare$\par
}

\usepackage{siunitx}

\papertype{Original Article}
\paperfield{Journal Section}

\title{Market-implied time to transition to a low-carbon economy: a stochastic modelling and inference framework}

\author[1]{Lorenzo Mercuri}
\author[2]{Andrea Perchiazzo}
\author[3]{Edit Rroji}
\author[4]{Ilaria Stefani}

\affil[1]{Department of Economics, Management and Quantitative Methods, University of Milan, Italy.}

\affil[2]{Department of Economics and Business Studies, University of Piemonte Orientale, Italy.}

\affil[3]{Department of Statistics and Quantitative Methods, University of Milano-Bicocca, Italy.}

\affil[4]{Institute of Insurance Science, University of Ulm, Germany.}

\corraddress{Lorenzo Mercuri, Department of Economics and Quantitative Methods, University of Milan, Italy}
\corremail{lorenzo.mercuri@unimi.it}

\fundinginfo{This work was supported by JST CREST Grant Number JPMJCR2115, Japan
and by the European Union - NextGeneration EU PRIN2022 project ``The effects of climate change in the evaluation of financial instruments'' financed by the ‘Ministero dell’Università e della Ricerca’
with grant number 20225PC98R, CUP Codes: H53D23002200006 and G53D25001960006. Andrea Perchiazzo gratefully acknowledges the University of Piemonte Orientale (UPO) for the support provided through the Starting Kit 2025.}

\runningauthor{Mercuri et al.}

\begin{document}

\begin{frontmatter}
\maketitle

\begin{abstract}
This paper introduces a new market-implied object, \textit{Time to Transition} (TtT), extracted from the difference between two selected nodes of the greenium term structure. TtT is defined as the latent waiting time until this cross-maturity greenium difference vanishes, meaning that the greenium becomes equal across the two selected maturities. We develop an inference theory for this object. To model TtT, we introduce two tractable stochastic frameworks: the \textit{Regulatory Deadline-Constrained Model}, in which the transition date is fixed, and a switching extension, in which alternative transition dates capture heterogeneous perceived deadlines across economic agents. The paper combines two layers of analysis. On a fixed daily grid, a deadline-constrained diffusion provides a tractable benchmark through an exact Gaussian bridge likelihood, while the switching extension preserves tractability through regime-specific bridge densities and filtering recursions. Under a \textit{fixed-horizon infill scheme}, the same framework yields a structural identification result for the regime-wise diffusion parameters, with full or partial consistency depending on the observed region. The paper therefore contributes both a new inferential object, market-implied transition timing based on cross-maturity differences in the greenium term structure, and a two-layer inference framework: finite-sample filtering provides an operational monitoring tool, while fixed-horizon infill asymptotics specify when the regime-wise diffusion parameters carrying information about competing transition dates can be consistently estimated.
% Please include a maximum of seven keywords
\keywords{Market Implied Time To Transition, \emph{Stochastic Bridge}, Regime-Switching Models, Model Calibration, $M_n$-Contrast Estimation, Fixed-Horizon $n$-infill asymptotic, Regime-wise Diffusion Consistency}
\end{abstract}
\end{frontmatter}

\section{Introduction}

The European Union (EU) has made the transition to a low- or zero-carbon economy a central objective in the pursuit of a climate-neutral society, setting explicit targets, policies, and timelines. According to the European Climate Law,\footnote{Regulation (EU) 2021/1119 of the European Parliament and of the Council of 30 June 2021 establishing the framework for achieving climate neutrality and amending Regulations (EC) No 401/2009 and (EU) 2018/1999 (``European Climate Law''). Available at: https://eur-lex.europa.eu/eli/reg/2021/1119/oj/eng. For further details on EU climate action see also https://climate.ec.europa.eu/} the EU aims to reduce greenhouse gas (GHG) emissions by at least $55\%$ by 2030, relative to 1990 levels, and to reach climate neutrality by 2050. These policy targets provide a clear institutional roadmap; however, their economic relevance depends on whether the corresponding timelines are regarded as credible by market participants. The way economic agents perceive such deadlines may affect investment, consumption, policy support, and, more generally, the pricing of transition risk.

Financial prices often reflect not only beliefs about future states of the world, but also beliefs about the timing with which those states may materialize. Recovering such timing information from market data is, however, intrinsically difficult: the relevant date is latent, the observable signal is noisy, and finite-sample evidence need not carry a structural interpretation. This paper studies that problem in a specific but analytically tractable environment and develops an inference framework for a market-implied transition time.

Our observed quantity is the difference between two selected nodes of the greenium term structure extracted from pairs of green and conventional bond prices. We interpret this difference as a market-based signal about the timing of transition to a low-carbon economy. The object is forward-looking: rather than measuring the level of the green premium at a given date, it uses cross-maturity information to infer when investors collectively price the transition as having effectively materialized.  In a similar spirit to the VIX, which extracts forward-looking information from option prices, our goal is to extract from bond prices a market-implied measure of the timing of the low-carbon transition. More precisely, we introduce a market-implied \emph{Time to Transition} (TtT), defined as the waiting time until the greenium term structure becomes flat across maturities. In line with \cite{damico}, we argue that investors are more willing to sacrifice financial returns in the short term than in the long term. This behaviour explains the inverted greenium term structure observed in the German twin bond market. Once the transition to a low-carbon economy is accomplished, we expect investors to believe that their task is complete, as the zero-carbon target will be linked to technological innovation. This expectation motivates the hypothesis of flat greenium term structure in the long run.
The empirical environment is the German sovereign twin-bond market,\footnote{A twin bond refers to a bond issuance structure in which two bonds share identical financial characteristics (e.g., issuer, maturity, coupon) but different labels or funding purposes. For further details, see ``Twin Bond Concept'' of the ``Green Federal Bonds'' section available at https://www.deutsche-finanzagentur.de/en/federal-securities.} observed at daily frequency from 8 September 2021 to 17 January 2025. Its relevance is methodological before being substantive. Since the green and conventional securities differ only in their label, this market provides an unusually clean setting in which the informational content of the green label can be isolated without relying on matching procedures. In this sense, the application is best viewed as a disciplined laboratory for inference on transition timing, rather than as a conventional study of green-bond premia.

The contribution of the paper is not to revisit the existence or the average magnitude of the greenium. That question has already received substantial attention in the literature; see, among others, \cite{ehlers2017green, hachenberg2018green, gianfrate2019green, larcker2020s, baker2018financing, karpf2018changing, Zerbib2019, wang2020market, flammer2021corporate, huang2023rethinking}.
Our question is different: can the difference between two nodes of the greenium term structure be used to extract information about perceived transition timing,
and under which observation schemes does such inference admit a structural
interpretation? This shift in focus moves the analysis from premium measurement
to stochastic inference on latent temporal labels.

% Our question is different: can the difference of two nodes in the greenium term structure be used to infer perceived transition timing, and under which observation schemes does such inference admit a structural interpretation? This shift in focus moves the analysis from premium measurement to stochastic inference on latent temporal labels.

To address this question, we first introduce a benchmark \emph{Regulatory Deadline-Constrained Model} (RDCM). The RDCM is a mean-reverting diffusion with terminal condition, designed so that the time remaining to a candidate deadline affects the pre-terminal dynamics. Economically, it provides a disciplined benchmark for a single perceived transition date. Statistically, it yields an exact Gaussian bridge likelihood, which makes the model tractable despite the terminal constraint. At the same time, this benchmark is intentionally restrictive: it cannot accommodate heterogeneity in perceived transition timing and, when the sample lies entirely before the structural time point $\tau$, it cannot fully identify the components of the model that become visible only closer to the candidate deadline.

This limitation motivates our main extension, the \emph{Switching Regulatory Deadline-Constrained Model} (SRDCM), in which alternative perceived transition dates coexist as latent states. On a fixed observation grid, the SRDCM is used as a filtering device: it reallocates posterior probability mass across competing deadline scenarios and provides an empirical monitoring tool for changes in perceived transition timing.

The theoretical part of the paper addresses a separate identification question. Under a fixed-horizon infill asymptotic scheme, we study when the regime-wise diffusion parameters associated with the competing deadlines can be consistently estimated. The analysis relies on the consistency theory of \(M_n\)-contrast functions; see, among others, \cite{van1998asymptotic, IbragimovHasminskii1981StatisticalEstimation, Kutoyants2004StatisticalInferenceStochasticProcesses, Yoshida2022QLA, cheng2026quasi}. The resulting full or partial consistency depends on which part of the diffusion coefficient is visible over the observation window.

The remainder of the paper is organized as follows. Section \ref{term} focuses on the greenium term-structure and the associated notion of market-implied Time to Transition.  Section \ref{newmodel} develops the RDCM and the SRDCM, discusses their probabilistic structure, and derives the corresponding likelihood and filtering machinery. Section \ref{empanalysis} presents the empirical filtering analysis. Section \ref{sec:highfreq} studies the consistency property of the diffusion parameters under fixed-horizon infill asymptotics. Section \ref{concl} concludes.

\section{Greenium term structure}\label{term}
In this section, we introduce a node-difference measure of the greenium term structure, constructed as the difference between two selected maturity nodes using pairs of twin bonds. Building on this measure, we define a transition point at which the difference between the two nodes vanishes, indicating that the greenium is equal across the selected maturities, and we introduce the associated concept of Time to Transition. Using these theoretical tools, we present empirical evidence illustrating the behaviour of greenium node differences and the related transition features observed in the data.

\subsection{Greenium term structure and market implied Time to Transition}\label{defi}

For each observation date $t$ and maturity $\bar{T}$, we consider market quotations of a  pair of zero-coupon bonds (for example twin bonds). We assume these bonds to have the same issuer (sovereign or private), to be denominated in the same currency, to share the same level of subordination, and not to have embedded optionalities. More formally, $D_{g}\left(t,\bar{T}\right)$ and $D_{b}\left(t,\bar{T}\right)$ denote respectively the price of green and brown zero-coupon bonds with maturity $\bar{T}$. The \textsl{greenium} $g_r(t, \bar{T})$ on day $t$ for maturity $\bar{T}$, seen as the difference of the two yield to maturities, can be expressed as: 
\begin{equation}
g_r\left(t,\bar{T}\right) :=-\frac{1}{\bar{T}-t}\left\{\log\left(D_{b}\left(t,\bar{T}\right)\right)-\log\left(D_{g}\left(t,\bar{T}\right)\right)\right\}.
\label{eq:greenium}
\end{equation} 
\indent In order to extract some information from the greenium term structure, we consider a quantity $X_t$  defined as the difference in greenium between two selected maturities of twin bonds. 
\begin{mydefinition}
Let $\mathcal{T}\left( t\right):= \left\{T_i^{\left(t\right)}\right\}_{i=1, \dots, n}$ be the set of available maturities for which (observable) twin bond exists
at time $t$. Choosing two reference maturities $T_{\text{short}} \left(t\right) \in \mathcal{T}\left( t\right)$ and $T_{\text{long}} \left(t\right) \in \mathcal{T}\left( t\right)$, satisfying
\begin{equation*}
    T_{\text{long}} \left(t\right) - T_{\text{short}} \left(t\right) = h, \quad \forall t:t <  T_{\text{short}}\left(t\right),
\end{equation*}
then $X_t$ is defined as
\begin{equation}
    X_t:= \Delta g_r\left(t, T^{\text{short}}_t , T^{\text{long}}_t \right) = g_r\left(t,T_{\text{short}} \left(t\right)\right) - g_r\left(t,T_{\text{long}} \left(t\right)\right)
    \label{xt}.
\end{equation}
\end{mydefinition}
We assume that the process $X_t$ is not observable from its starting point, but only from the dates that a pair of twin bonds is quoted.  We also assume the process $X_t$ to become constant at a future time when the transition occurs.
We now define the Transition Point and the Time to Transition.
% as the earliest future time at which the difference in greenium vanishes and the greenium term structure remains flat thereafter, that is:
\begin{mydefinition}\label{Def:TimetoTrans}
The transition point $T$ is the earliest future time at which the difference in greenium vanishes and the greenium term structure remains flat thereafter. That is:
\begin{equation}
T := \inf\left\{t \geq0: \left\{X_s\right\}_{s \geq t} =0 \right\}. 
\end{equation}
Accordingly, the \textit{Time to Transition}, denoted by $\text{TtT}\left(t,T\right)$, is:
\begin{equation}\label{eq_ttt}
    \text{TtT}\left(t, T\right):= T - t.
\end{equation}
\end{mydefinition}
Intuitively, the Time to Transition is the ``waiting'' time until the greenium term structure becomes flat across maturities.
%the Time to Transition coincides with the ``waiting'' time in order to observe a flat greenium term structure, 
Specifically, we admit a premium for the label green even if the transition occurs, but it should not be maturity-specific.  In line with \cite{damico}, we argue that investors are more willing to sacrifice financial returns in the short term than in the long term. This behaviour explains the inverted greenium term structure observed in the German twin bond market (see \citealt{mercuri2025} for instance).
Investors do indeed care about the environment, as they are willing to sacrifice financial returns in order to fund green projects, thereby effectively subsidising the green transition. However, they are influenced by real-world events, such as political elections, wars, and financing costs, which can make the transition less realistic within the scheduled time-frame. In particular, we assume that investors may decide that the target of reducing emissions by 55\% is unachievable in the scheduled timeline, meaning their financial sacrifice is no longer worthwhile and the greenium term structure becomes less steep. It is reasonable to think that investors may accept a lower remuneration only if they believe that the positive effects (both economical and social) of the transition would materialize in the near future. In our framework, we allow investors to be influenced by subsequent events; i.e., we could expect periods of acceleration or deceleration of the perceived timeline for the transition.\newline
Conversely, once the transition to a low-carbon economy is accomplished, we expect investors to believe that their task is complete, as the zero-carbon target will be linked to technological innovation. This expectation motivates the flat greenium term structure in the long run.

 \subsection{Empirical observations from German twin bonds}\label{emp}

Building upon the definition of $X_t$ in \eqref{xt}, which we recall is the difference between two nodes in the greenium term structure, we now proceed to investigate its behaviour through an empirical analysis. We consider the German sovereign bond market motivated by the fact that 
% two considerations: i) Germany exhibited relatively high greenhouse gas emissions (GHG) during the period 2019-2023 compared to other EU Member States ; ii) 
Germany is one of the few EU countries that issues twin bonds, which allows us to make an analysis in a distortion-free setting (as the bonds differ only in their label).

% \begin{figure}[!htbp]
%     \centering
%     \includegraphics[width=0.7\textwidth]{img_paper/GHG.eps}
%     \caption{GHG emissions in Germany during 2019-2023 compared with a sample of EU peers}\label{fig:emission_germany}
% \end{figure}

To this end, we select two pairs of German twin bonds (highlighted in bold in Table~\ref{tab:twin_bonds}) maturing in 2025 (ISINs DE0001141828 and DE0001030716) and 2030 (ISINs DE0001102507 and DE0001030708) thereby focusing on a short- to medium-term part of the green bond term structure. These two reference maturities are included because, as observed in \cite{mercuri2025},  the greenium term structure is significantly steeper at earlier maturity nodes. The selected set comprises only zero-coupon bonds with no embedded optionalities. Data are downloaded from the Bloomberg terminal.
\begin{table}[!htbp]
\centering
% \scalebox{0.78}{
\scalebox{0.65}{
\begin{tabular}{lllcc}
\hline
\textbf{ISIN}                                   & \textbf{Maturity} & \textbf{Type}     & \textbf{First Coupon}       & \textbf{Coupon Rate} \\
\hline
\hline
 \textbf{DE0001141828}           &  $10/10/2025$ & \textbf{Brown} & -- &--  \\
 \textbf{DE0001030716}           &  \textbf{$10/10/2025$} &  \textbf{Green} & -- & -- \\
 DE0001141869           & $15/10/2027$ & Brown &$15/10/2023$ & $1.3$ \\
 DE0001030740           & $15/10/2027$ & Green & $15/10/2023$ & $1.3$   \\
 \textbf{DE0001102507}           &  \textbf{$15/08/2030$}  & \textbf{Brown}  & -- &--\\
 \textbf{DE0001030708}           &  \textbf{$15/08/2030$}&  \textbf{Green} & -- & --\\
 DE0001102564           & $15/08/2031$    & Brown& -- &-- \\
 DE0001030732          &$15/08/2031$  & Green & -- & --\\
 DE000BU2Z007          & $15/02/2033$& Brown& $15/02/2024$& $2.3$\\
 DE000BU3Z005         &$15/02/2033$& Green& $15/02/2024$&$2.3$  \\
 DE0001102481          & $15/08/2050$& Brown& --& --\\
 DE0001030724       &$15/08/2050$& Green& --&-- \\
 DE0001102614         & $15/08/2053$& Brown& $15/08/2023$& $1.8$\\
 DE0001030757        &$15/08/2053$& Green& $15/08/2023$&$1.8$  \\
\hline
\end{tabular}
}
\caption{Features of German twin bonds available in the market as of January 17, 2025.} \label{tab:twin_bonds}
\end{table}

% \subsubsection{Greenium term-structure slope and yield curve volatility}

As an exploratory investigation, we compare the time series of the cross-maturity greenium difference $X_t$ (blue line) from October 2021 to January 2025 with the 30-day rolling volatility of the German government yield curve across the 3-month, 1-year, 2-year, and 4 year maturity nodes (see Figure~\ref{fig:rolling_vola}). 

\begin{figure}[h!]
    \centering
    \includegraphics[width= 1\textwidth]{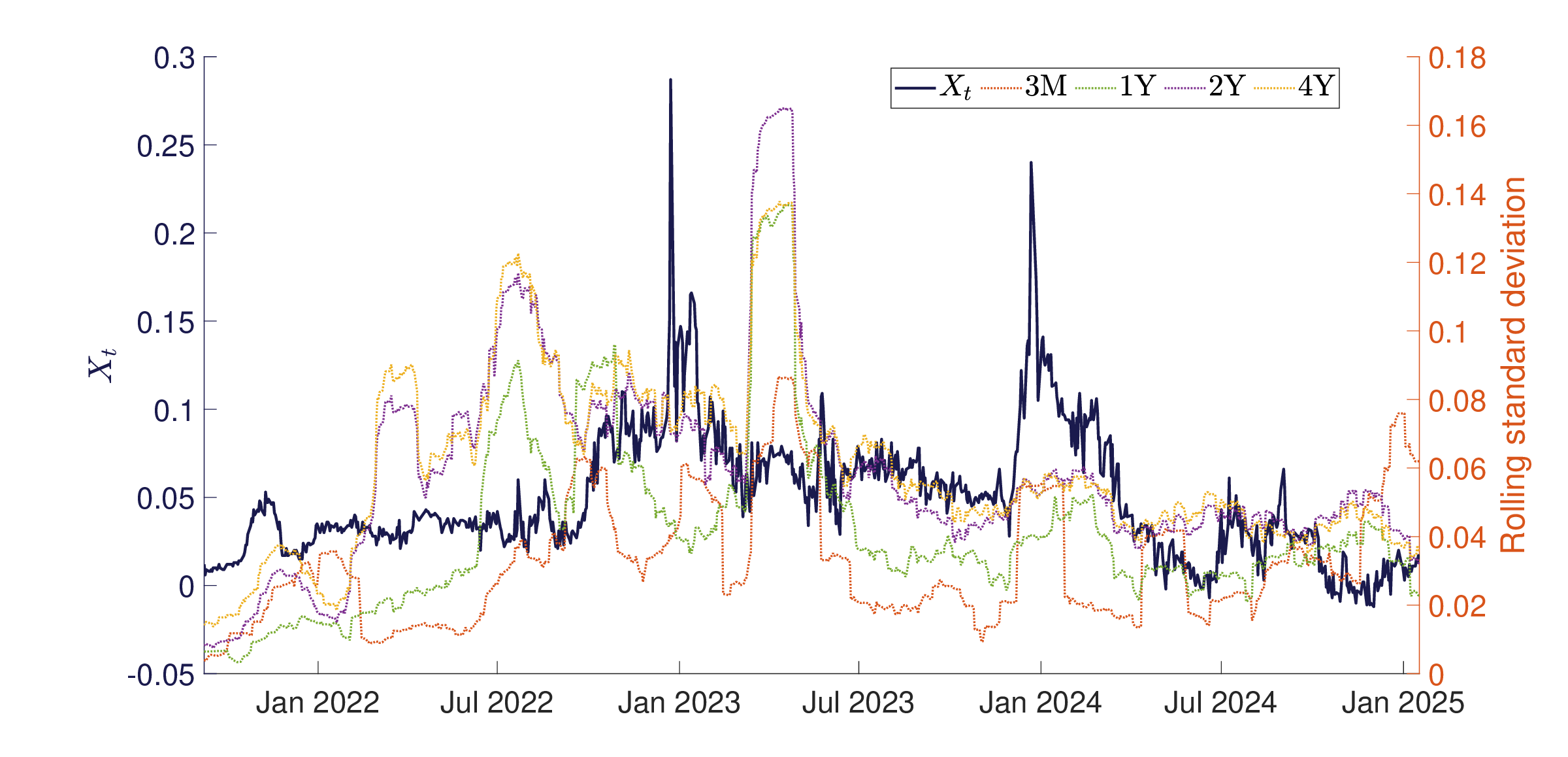}
    \caption{Time series of the cross-maturity greenium difference $X_t$ (blue line, left axis multiplied by 100) from $08/09/2021$ to $17/01/2025$ for the 2025 and  the 2030 maturity nodes compared with 30-day rolling (right axis) volatility of the German yield curve (3M, 1Y, 2Y, 4Y).}
    \label{fig:rolling_vola}
\end{figure}

The visual comparison shown in Figure~\ref{fig:rolling_vola} is used to explore whether the dynamics of $X_t$, and thus the transition expectation toward a low-carbon economy, are associated (or move in line) with conditions in the market interest rate environment. The figure displays that the difference in greenium $X_t$ generally fluctuates at low to moderate levels, with only two significant sharp spikes. After each spike, the series tends to revert toward its baseline, suggesting a mean-reverting behaviour. Regarding the two spikes in the time series of $X_t$, they occur around the introduction of two regulatory directives. The first spike is observed after Regulation (EU) 2022/2577 of the Council (22 December 2022), which establishes a framework to accelerate the development of renewable energy. The second spike appears following Regulation (EU) 2024/223 of the
Council (22 December 2023) that amends Regulation (EU) 2022/2577 with the same objective. The latter regulation, titled ``Accelerating the deployment of renewable energy'', introduces an emergency framework based on Article 122 of the \textit{Treaty on the Functioning of the European Union} for accelerating the diffusion of renewable energy in the short term. It took effect on 30 December 2022, initially valid for 18 months, but was later extended by Regulation (EU) 2024/223 following a review by the European Commission. 

By contrast, the regulatory milestones related to the EU Green Bond Standards for new bond issuances occurred over several key dates\footnote{The EU Green Bond Standards were developed through a gradual regulatory process. Key milestones include the publication of Technical Expert Group (TEG) Interim report and Final Reports in 2019, the European Commission's proposal in July 2019, and the adoption of Regulation (EU) 2023/2631 in October 2023. Such a Regulation entered into force in December 2023, with application date set for December 2024, and was completed by delegated acts in April 2025.} but do not appear to have a direct impact on the series of $X_t$ computed using German twin bonds. This absence of a direct impact of these new standards on government bonds may suggest that the ``green'' label for the traded German Bonds is broadly coherent with the new standards.

The financing of the transition to a low-carbon economy is closely linked to market interest rates. It is therefore reasonable to hypothesise that uncertainty in yield, for instance arising from frequent announcements by the European Central Bank during the post-pandemic inflation period, could affect how the timing of the transition is perceived. 

Although Figure~\ref{fig:rolling_vola} provides valuable insights, it does not allow us to identify the type of relationship (e.g., nonlinear or threshold type) between $X_t$ and interest rate uncertainty. It is important to note that the lack of clear linear co-movement does not necessarily imply the absence of asymmetric or regime-dependent effects. In such cases, yield volatility may influence transition expectations only beyond certain levels or during specific periods.

\section{Perceived time to transition}\label{newmodel}
In the transition toward a low-carbon economy, we expect the difference in greenium between pairs of twin bonds with different maturities to disappear or, at least, to decline in both level and volatility. This motivates the introduction of a model that imposes a terminal condition on the dynamics of the process representing differences across nodes of the greenium term structure. Empirically, $X_t$ exhibits a mean-reverting behaviour (see Figure~\ref{fig:rolling_vola}). Moreover, the dynamics of $X_t$ appear to change across low and high volatility periods, suggesting the presence of regime shifts. To capture these features while allowing for a transition occurring at a specific date, we consider two models. The first, referred to as the Regulatory Deadline-Constrained Model and discussed in Section~\ref{sec_RDCM}, is an extension of the classical \cite{vasicek} model where the drift and the diffusion coefficients  do not depend on \text{TtT} until a time instant that we name $\tau$ with $t<\tau<T$. Starting from $\tau$,
both drift and diffusion coefficients become dependent on \text{TtT}.
% we have a dependence of drift and diffusion w.r.t \text{TtT}. 
Typically, the instantaneous volatility and drift functions remain constant until this date, after which they decrease. The second model, presented in Section~\ref{SectioSRCDM}, incorporates this Regulatory Deadline-Constrained Model specification within a discrete-time regime-switching framework
where the perceived deadline for transitioning to a low-carbon economy defines the regime.

\subsection{Regulatory Deadline-Constrained Model}\label{sec_RDCM}

 We consider two functions $\phi\left(s,b\right):\mathbb{R}\times\mathbb{B}\rightarrow \mathbb{R}, \ \mathbb{B}\subseteq\mathbb{R}^{n_1}$ and $g\left(s,\theta\right):\mathbb{R}\times \Theta\rightarrow \left[0, +\infty\right), \ \Theta\subseteq\mathbb{R}^{n_2}$ with $n_1,n_2$ positive integers that satisfy the following assumption. 

\begin{myassumption}\label{rdcm_assumption}
For any $b \in \mathbb{B}$, $s\mapsto\phi\left(s,b\right)$ has  the following properties:  
\begin{enumerate}
    \item[(i)] it is continuous and integrable; 
    \item[(ii)] (non-negativity) $\phi\left(s,b\right)\geq 0$  $\forall s >0$ and $\phi\left(s,b\right)=0$ for $s\leq 0$.
\end{enumerate}
For any $\theta \in \Theta$, $s\mapsto g\left(s,\theta\right)$ satisfies the following requirements:
\begin{enumerate}
    \item[(i)] it is continuous and square-integrable; 
    \item[(ii)] (non-negativity) $g\left(s,\theta\right)> 0$  $\forall s >0$  and $g\left(s,\theta\right)=0$  $\forall s\leq 0$.

    \end{enumerate}    
\end{myassumption}
The function \(\phi(s,b)\) is used to describe the deterministic target level of the dynamics for $X_t$, whereas \(g(s,\theta)\) controls the magnitude of the random fluctuations around that target.

We make explicit that the dependence of the coefficient functions $\phi\left(\cdot,b\right)$ and $g\left(\cdot,\theta\right)$ on $TtT:=T-t$ (see Definition \ref{Def:TimetoTrans}), is activated only after a deterministic time $\tau<T$, which $\tau$ is named structural time point, fixed in advance. For any $b\in \mathbb{B}$ and $\theta \in \Theta$, we write: 
\[
b=(b^{-},b^{+}), \qquad \theta=(\theta^{-},\theta^{+}),
\]
where $b^{-}$ and $\theta^{-}$ are effective for all $t<T$, whereas $b^{+}$ and $\theta^{+}$ are effective only on the interval $[\tau,T)$.\footnote{In the practical specifications considered below, $b^{-}$ and $\theta^{-}$ determine the constant pre-$\tau$ level and remain effective also after $\tau$, while $b^{+}$ and $\theta^{+}$ govern the decay in the convergence phase.}

\begin{myassumption}
\label{ass:tau_activation}
For a fixed transition point $T>0$, there exists $\tau\in[0,T)$ and continuous functions
\[
f_{\tau^-}:[0,\tau]\to\mathbb{R}_+,
\qquad
f_{\tau^+}:[0,T-\tau]\to\mathbb{R}_+,
\]
such that
\[
s_{T,\tau}(t):=
f_{\tau^-}(t)\mathbf{1}_{\{t<\tau\}}+
f_{\tau^+}(T-t)\mathbf{1}_{\{\tau\le t\le T\}},
\]
with
\[
f_{\tau^-}(\tau)=f_{\tau^+}(T-\tau),
\qquad
f_{\tau^+}(0)=0.
\]
\end{myassumption}
Assumption~\ref{ass:tau_activation} is maintained throughout the paper unless otherwise stated.

\begin{myremark}
\label{rem:tau_notation}
Throughout the paper we write
\[
\phi(T-t,b) := \phi\bigl(s_{T,\tau}(t),b\bigr),
\qquad
g(T-t,\theta) := g\bigl(s_{T,\tau}(t),\theta\bigr).
\]
\end{myremark}

Under Assumptions \ref{rdcm_assumption} and \ref{ass:tau_activation}, we now introduce the Regulatory Deadline-Constrained Model (hereafter RDCM), which describes the evolution of the difference between two nodes in the greenium term structure. The model assumes that the transition occurs deterministically at a fixed regulatory deadline $T$ (for example, $T = 31/12/2030$ aligned with the setting of EU regulations and so a regulatory deadline), at which point the greenium difference across maturities vanishes. 

\begin{mydefinition}\label{RDCMdef}
Let $\left(\Omega,\mathcal{F},\mathbb{F},\mathbb{P}\right)$ be a filtered probability space with filtration $\mathbb{F} =\left\{\mathcal{F}_t\right\}_{t\geq 0}$ and let $\left\{W_t\right\}_{t\ge0}$ be a Wiener process adapted to $\mathbb{F}$. Let $T>0$ denote the regulatory deadline for the transition. Under Assumptions \ref{rdcm_assumption} and \ref{ass:tau_activation}, a stochastic process $\left\{X_t\right\}_{t\ge 0}$ is called a Regulatory Deadline-Constrained Model if it satisfies the stochastic differential equation
\begin{equation}
\mbox{d}X_t = a\left[\phi\left(T-t,b\right)-X_t\right]\mbox{d}t+g\left(T-t,\theta\right) \mbox{d}W_t,
\label{RDCM}
\end{equation}
subject to the terminal condition $X_T = 0$, where $a\in[0+\infty)\subset\left(0,+\infty\right)$, $b \in \mathbb{B}\subset\mathbb{R}^{n_1}$ and $\theta\in \Theta\subset\mathbb{R}^{n_2}$. 
% $\mathcal{A}$, $\mathbb{B}$ and $\Theta$ are compact sets.
\end{mydefinition}
Property (ii) in Assumption \ref{rdcm_assumption} for both coefficient functions ensures that $X_{t}=0$ for all $t\geq T$.

The economic interpretation of this result is that, once the transition to a low-carbon economy is complete, the greenium term structure becomes flat and is no longer affected by bond maturities. At this stage, the model implies a structural convergence where the greenium ceases to compensate for transition risk, which has been neutralized by the adoption of zero-impact technology, and instead serves as a constant premium for the intrinsic value of the green label. Therefore, in the post-transition regime (\(t\geq T\)), the greenium reflects purely idiosyncratic investor preferences for certified green assets, manifesting as a parallel shift relative to brown bonds rather than a maturity-dependent spread. 

In the empirical analysis (Section~\ref{empanalysis}), a convenient specification is the following:
\begin{align}\label{eq:phiandg}
\phi(T-t,b)
&=
b^{-}\min\left\{1,\left[\max\left(\frac{T-t}{T-\tau},0\right)\right]^{b^{+}}\right\},
\nonumber\\
g(T-t,\theta)
&=
\theta^{-}\min\left\{1,\left[\max\left(\frac{T-t}{T-\tau},0\right)\right]^{\theta^{+}}\right\}.
\end{align}
In this class of specifications, $\phi$ and $g$ are constant on $[0,\tau]$, then decay on $(\tau,T)$ as functions of $TtT$, and vanish at $t=T$. Accordingly, for $t\le\tau$ the dynamics reduce to a Vasicek-type regime, whereas for $\tau<t<T$ the process enters a final convergence phase toward the regulatory deadline. Note that $\phi$ and $g$ are continuous at $t=\tau$, although they need not be differentiable there.

\begin{myremark}
Treating $\tau$ as an additional model parameter is left for future research. This would substantially complicate the estimation procedure, since it introduces a change-point effect in the likelihood function; see \citealt{IACUS20121068} and the references therein.
\end{myremark}
Figure \ref{simulated} shows a simulated trajectory of the RDCM model in which both coefficient functions take the form \eqref{eq:phiandg}. Similar functions have been applied in \cite{lyashenko} for pricing  backward-looking caplets.
\begin{figure}[!htbp]
    \centering
\includegraphics[width=0.7\textwidth]{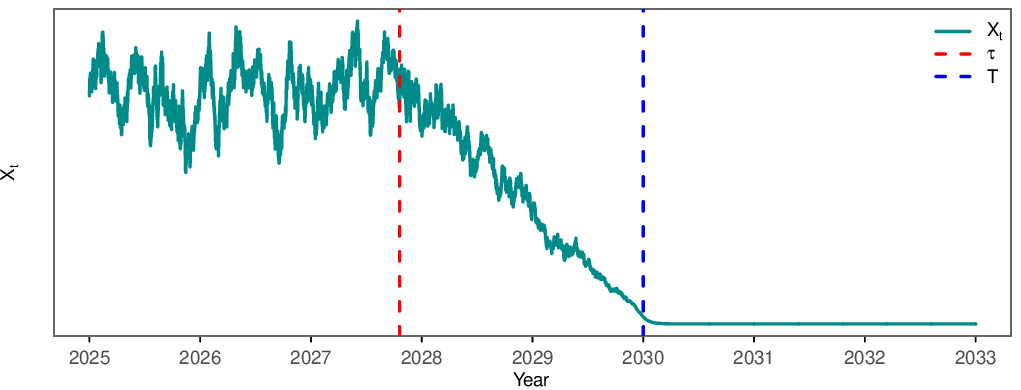}
    \caption{Possible trajectory of the process $X_t$ in the Regulatory Deadline-Constrained Model.\label{simulated}}
    % \label{simulated}
\end{figure}

\noindent The terminal condition $X_T=0$ suggests that classical estimation methods for mean-reverting models cannot be applied in our framework. In the following, we show how to compute the log-likelihood function in the RDCM framework where we exploit the fact that the conditional transition density given the final value is still Gaussian.

\subsubsection{Calibration of RDCM model}\label{Estim_RDCM_SUB}
We recognize that \eqref{RDCM} is a linear SDE with time-varying coefficients. Hence, for any $t>t_0$, the conditional law of $X_t$ given $\{X_{t_0}=x_{t_0}\}$ is Gaussian, namely
\[
X_t \mid X_{t_0}=x_{t_0} \sim \mathcal{N}\bigl(m_{t\mid t_0},v_{t\mid t_0}\bigr),
\]
with
\begin{equation}
m_{t\mid t_0}
:= \mathbb{E}\bigl[X_t\mid X_{t_0}=x_{t_0};a,b\bigr]
= e^{-a(t-t_0)}x_{t_0}
+a\int_{t_0}^{t} e^{-a(t-s)}\phi(T-s,b)\,\mathrm ds,
\label{eq:mean}
\end{equation}
\begin{equation}
v_{t\mid t_0}
= \mathbb{V}\bigl[X_t\mid X_{t_0};a,\theta\bigr]
= \int_{t_0}^{t} e^{-2a(t-s)} g^2(T-s,\theta)\,\mathrm ds.
\label{eq:var}
\end{equation}
Here $m_{t\mid t_0}$ depends on the realized initial value $x_{t_0}$, whereas $v_{t\mid t_0}$ does not.

\noindent For any $t_0<t\le T$, the pair $(X_T,X_t)$, conditional on $\{X_{t_0}=x_{t_0}\}$, is jointly Gaussian. Its conditional covariance is
\begin{equation}
\mathrm{Cov}_{t_0}(X_T,X_t)
:= \mathbb{E}\!\left[\bigl(X_T-m_{T\mid t_0}\bigr)\bigl(X_t-m_{t\mid t_0}\bigr)\mid X_{t_0}=x_{t_0}\right]
= e^{-a(T-t)}v_{t\mid t_0}.
\end{equation}
Therefore, by Gaussian conditioning, the transition law of the RDCM in Definition~\ref{RDCMdef} is
\begin{equation}\label{asc}
X_t\mid (X_T=0,X_{t_0}=x_{t_0})\sim \mathcal{N}(k_{t\mid t_0},\sigma^2_{t\mid t_0}),
\end{equation}
where
\begin{align}
k_{t\mid t_0}
&:= m_{t\mid t_0}
-\frac{e^{-a(T-t)}v_{t\mid t_0}}{v_{T\mid t_0}}\,m_{T\mid t_0},
\label{k}\\
\sigma^2_{t\mid t_0}
&:= v_{t\mid t_0}
-e^{-2a(T-t)}\frac{v_{t\mid t_0}^2}{v_{T\mid t_0}}.
\label{sigma0}
\end{align}
The bridge\footnote{Here ``bridge'' refers to the terminal conditioning \(X_T=0\). In the classical sense, a bridge is usually pinned at both endpoints.} variance $\sigma^2_{t\mid t_0}$ admits the following useful factorization.

\begin{mylemma}\label{Factor}
For any $t_0<t\le T$, $a>0$, and $\theta\in\Theta$,
\begin{equation}
\sigma^2_{t\mid t_0}=v_{t\mid t_0}\,R_{t\mid t_0}(a,\theta),
\label{sigma}
\end{equation}
where $R_{t\mid t_0}(a,\theta)
:=
\frac{v_{T\mid t}}
{e^{-2a(T-t)}v_{t\mid t_0}+v_{T\mid t}}
\in(0,1)$.
\end{mylemma}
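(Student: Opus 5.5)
The plan is to use the additivity (semigroup) property of the conditional variance over adjacent intervals. Splitting the integral in \eqref{eq:var} at $t$ gives, for $t_0<t\le T$,
\[
v_{T\mid t_0}
=\int_{t_0}^{t} e^{-2a(T-s)}g^2(T-s,\theta)\,\mathrm ds+\int_{t}^{T} e^{-2a(T-s)}g^2(T-s,\theta)\,\mathrm ds .
\]
In the first integral we write $e^{-2a(T-s)}=e^{-2a(T-t)}e^{-2a(t-s)}$, which yields the decomposition
\[
v_{T\mid t_0}=e^{-2a(T-t)}v_{t\mid t_0}+v_{T\mid t}.
\]
This is the variance version of the Chapman--Kolmogorov (tower) property for the linear SDE \eqref{RDCM}. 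It is also consistent with the covariance formula $\mathrm{Cov}_{t_0}(X_T,X_t)=e^{-a(T-t)}v_{t\mid t_0}$ already used to derive \eqref{k}--\eqref{sigma0}.

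Next we substitute this decomposition into \eqref{sigma0}, factoring out $v_{t\mid t_0}$:
\[
\sigma^2_{t\mid t_0}
=v_{t\mid t_0}\left(1-\frac{e^{-2a(T-t)}v_{t\mid t_0}}{e^{-2a(T-t)}v_{t\mid t_0}+v_{T\mid t}}\right)
=v_{t\mid t_0}\,\frac{v_{T\mid t}}{e^{-2a(T-t)}v_{t\mid t_0}+v_{T\mid t}} .
\]
This is exactly \eqref{sigma} with the stated $R_{t\mid t_0}(a,\theta)$. The algebra is routine once the decomposition is in hand.

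The only delicate step is the range claim $R_{t\mid t_0}\in(0,1)$, which requires strict positivity of both $v_{t\mid t_0}$ and $v_{T\mid t}$.

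\textbf{Positivity of $v_{t\mid t_0}$.} For $t_0<t$, the integrand $e^{-2a(t-s)}g^2(T-s,\theta)$ is continuous and nonnegative. It is strictly positive wherever the time-to-deadline argument of $g$ is positive, by Assumption \ref{rdcm_assumption}(ii) combined with Assumption \ref{ass:tau_activation}. This holds on a set of positive measure in $(t_0,t)$, so $v_{t\mid t_0}>0$.

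\textbf{Positivity of $v_{T\mid t}$.} For $t<T$ the same argument on $(t,T)$ gives $v_{T\mid t}>0$. Together, the two positivity facts give $0<R<1$.

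\textbf{Boundary case $t=T$.} Here $v_{T\mid T}=0$, so $R=0$ and $\sigma^2_{T\mid t_0}=0$. This is consistent with the pinning $X_T=0$: the factorization still holds, but the open-interval statement should be read for $t<T$. The main point to check carefully is therefore that $g(s_{T,\tau}(\cdot),\theta)$ does not vanish identically on a subinterval before $T$. I would justify this from $s_{T,\tau}>0$ on $[0,T)$, which holds on $[\tau,T)$ if $f_{\tau^+}>0$ on $(0,T-\tau]$, as in specification \eqref{eq:phiandg}.
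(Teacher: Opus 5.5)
Your proof is correct and is the same as the paper's: the paper's one-line argument also splits \eqref{eq:var} at $t$ to obtain $v_{T\mid t_0}=e^{-2a(T-t)}v_{t\mid t_0}+v_{T\mid t}$ and substitutes this into \eqref{sigma0}. The paper's proof does not address the range claim, so your positivity discussion is a useful addition, and your remark that $R_{T\mid t_0}=0$ at $t=T$ is right: $R\in(0,1)$ needs $t<T$ and $g(T-\cdot,\theta)$ not vanishing on a subinterval, which flags a small imprecision in the statement.
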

\begin{myproof}
From \eqref{eq:var} we have \(v_{T\mid t_0}=e^{-2a(T-t)}v_{t\mid t_0}+v_{T\mid t}\) that plugged in \eqref{sigma0} immediately yields \eqref{sigma}.
\end{myproof}
We now introduce a mild set of requirements ensuring the existence of a maximizer of the exact bridge likelihood on a fixed observation grid.

\begin{myassumption}\label{ass:mle}
Let \(\Psi=\mathcal A\times\mathbb B\times\Theta\), with
\(\mathcal A\subset(0,\infty)\), \(\mathbb B\subset\mathbb R^{n_1}\), and
\(\Theta\subset\mathbb R^{n_2}\), be compact.
Assume that there exists $\delta>0$ such that the last observation time $t_n$ satisfies $t_n\le T-\delta$, and that:
\begin{enumerate}
    \item[(i)] the map \((t,b)\mapsto \phi(T-t,b)\) is jointly continuous on \([t_0,T-\delta]\times\mathbb B\);
    \item[(ii)] the map \((t,\theta)\mapsto g(T-t,\theta)\) is jointly continuous on \([t_0,T-\delta]\times\Theta\).
\end{enumerate}
\end{myassumption}
A natural choice for the parameter sets \(\mathbb{B}\) and \(\Theta\) in the RDCM specification \eqref{eq:phiandg} is
\begin{align*}
\mathbb{B} &:= [\underline{b}^{-},\overline{b}^{-}]\times[\underline{b}^+,\overline{b}^+],
\qquad \underline{b}^+>0,\\
\Theta &:= [\underline{\theta}^-,\overline{\theta}^-]\times[\underline{\theta}^+,\overline{\theta}^+],
\qquad \underline{\theta}^->0,\quad \underline{\theta}^+>0.
\end{align*}
This choice is assumed throughout the remainder of the paper whenever the specification in \eqref{eq:phiandg} is adopted.
\begin{myproposition}\label{prop:bridge_mle}
Let \(t_0<t_1<\cdots<t_n\le T-\delta\) be a fixed observation grid and $\psi \in \Psi$. Let
\(f_{t_i\mid t_{i-1}}(x_{t_i};\psi)\) denote the Gaussian transition density of the unconstrained version of the linear SDE in \eqref{RDCM} and let
\(f_{t_i\mid t_{i-1},T}(x_{t_i};\psi)\) denote the corresponding bridge  transition density, conditional on \(X_{t_{i-1}}=x_{t_{i-1}}\) and \(X_T=0\). Given \(X_{t_0}=x_{t_0}\) and \(X_T=0\), define
\[
\ell_n(\psi):=
\sum_{i=1}^n \log f_{t_i\mid t_{i-1},T}(x_{t_i};\psi).
\]
Under Assumptions \ref{rdcm_assumption} and \ref{ass:mle}, the following hold:
\begin{enumerate}
    \item[(i)] the bridge log-likelihood admits the telescopic representation
    \begin{equation}\label{eq:loglikSimpl}
    \ell_n(\psi)
    =
    \log f_{T\mid t_n}(0;\psi)
    -\log f_{T\mid t_0}(0;\psi)
    +\sum_{i=1}^n \log f_{t_i\mid t_{i-1}}(x_{t_i};\psi),
    \end{equation}
    where \(f_{T\mid t_i}(0;\psi)\) denotes the Gaussian density of \(X_T\), conditional on \(X_{t_i}=x_{t_i}\), evaluated at \(0\);
    
    \item[(ii)] the map \(\ell_n:\Psi\to\mathbb R\) is continuous. In particular,
    \[
    \arg\max_{\psi\in\Psi}\ell_n(\psi)\neq\varnothing.
    \]
\end{enumerate}
\end{myproposition}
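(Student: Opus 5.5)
For part (i) I will use the Markov property of the unconstrained linear SDE \eqref{RDCM} and Bayes' rule for each bridge transition. For \(t_{i-1}<t_i<T\), conditioning on \(X_{t_{i-1}}=x_{t_{i-1}}\) and \(X_T=0\) gives
\[
f_{t_i\mid t_{i-1},T}(x_{t_i};\psi)=\frac{f_{t_i\mid t_{i-1}}(x_{t_i};\psi)\,f_{T\mid t_i}(0;\psi)}{f_{T\mid t_{i-1}}(0;\psi)}.
\]
Here \(f_{T\mid t_i}(0;\psi)\) is the \(\mathcal N(m_{T\mid t_i},v_{T\mid t_i})\) density evaluated at \(0\), with \(m\) and \(v\) as in \eqref{eq:mean}--\eqref{eq:var}. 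Taking logarithms and summing over \(i=1,\dots,n\), the terms \(\log f_{T\mid t_i}(0;\psi)\) for \(i=1,\dots,n-1\) cancel telescopically, leaving \eqref{eq:loglikSimpl}.

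As a consistency check, the Gaussian density obtained from the ratio above should match \eqref{asc}. Its variance is \(\sigma^2_{t_i\mid t_{i-1}}\), and Lemma~\ref{Factor} gives the identity \(v_{T\mid t_{i-1}}=e^{-2a(T-t_i)}v_{t_i\mid t_{i-1}}+v_{T\mid t_i}\), which is what makes the ratio reduce to that variance. The only point needing care is that every density in the ratio is well defined, i.e. all variances are strictly positive. Since \(g(s,\theta)>0\) for \(s>0\), and since \(s_{T,\tau}(t)>0\) for \(t<T\) (I will read Assumption~\ref{ass:tau_activation} together with the specification as giving this), the integrand in \eqref{eq:var} is positive on a set of positive measure. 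Hence \(v_{t_i\mid t_{i-1}}>0\) and \(v_{T\mid t_i}>0\).

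For part (ii), the main obstacle is showing that each ingredient of \eqref{eq:loglikSimpl} is jointly continuous in \(\psi=(a,b,\theta)\) and that the variances are bounded away from zero uniformly on the compact set \(\Psi\). Each ingredient is a Gaussian log-density
\[
-\tfrac12\log(2\pi v)-\frac{(x-m)^2}{2v},
\]
which is continuous in \((m,v)\) on \(\mathbb R\times(0,\infty)\). So it suffices to show that \(\psi\mapsto m_{t\mid s}\) and \(\psi\mapsto v_{t\mid s}\) are continuous and that \(v>0\) for the pairs \((s,t)\) that appear.

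\begin{itemize}
\item On \([t_0,T-\delta]\), the integrands \(e^{-a(t-u)}\phi(T-u,b)\) and \(e^{-2a(t-u)}g^2(T-u,\theta)\) are jointly continuous by Assumption~\ref{ass:mle}. They are therefore bounded on the compact set \([t_0,T-\delta]\times\Psi\), and dominated convergence (or uniform continuity) gives continuity of the integrals in \(\psi\).
\item The quantities \(m_{T\mid t_i}\) and \(v_{T\mid t_i}\) involve the segment \([T-\delta,T]\), which Assumption~\ref{ass:mle} does not cover. There I will use Assumption~\ref{rdcm_assumption}: \(g\) is square-integrable and \(\phi\) integrable in \(s\). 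For the joint continuity in the parameter I will rely on the concrete specification \eqref{eq:phiandg}, where the integrands are bounded by \(\max|b^-|\) and \((\overline\theta^-)^2\) on \(\Psi\) and are continuous in \(\psi\) for each \(u<T\). Dominated convergence then applies on the whole of \([t_i,T]\).
\end{itemize}

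Positivity of \(v\) is what keeps the log-densities continuous rather than degenerate. Each \(v\) is a continuous, strictly positive function on the compact \(\Psi\), so it attains a positive minimum there. This makes \(\ell_n\) a finite, continuous real-valued function on \(\Psi\). Since \(\Psi\) is compact, the Weierstrass theorem then gives that \(\arg\max_{\psi\in\Psi}\ell_n(\psi)\) is nonempty.
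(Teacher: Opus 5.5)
Your argument is essentially the paper's: part (i) is the same Bayes-rule/Markov factorization followed by telescoping, and part (ii) is the same continuity-plus-Weierstrass argument, except that you check continuity on the telescoped form \eqref{eq:loglikSimpl}, whereas the paper works through the bridge moments \eqref{k}, \eqref{sigma0} and Lemma~\ref{Factor}. You are right that $m_{T\mid t}$ and $v_{T\mid t}$ integrate over $[T-\delta,T]$, where Assumption~\ref{ass:mle} gives no joint continuity, and the paper's proof silently skips this point (its $k_{t_i\mid t_{i-1}}$ and $\sigma^2_{t_i\mid t_{i-1}}$ also depend on $m_{T\mid t_{i-1}}$ and $v_{T\mid t_{i-1}}$); your appeal to the specification \eqref{eq:phiandg} is therefore genuinely needed, or in general an extra hypothesis such as joint continuity of $\phi$ and $g$ on $[t_0,T]\times\Psi$ (in the spirit of item 4(a) of Assumption~\ref{ass:rdcm_hf}), because Assumptions~\ref{rdcm_assumption} and \ref{ass:mle} alone do not give continuity of these terms in $\psi$.
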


\begin{myproof}
See Appendix \ref{prop:bridge_mle_Appendix}
\end{myproof}

\subsubsection{Weak identification with pre-\(\tau\) data}
\label{subsubSect}

A weak-identification issue arises when the observation window lies entirely before the decay date, namely when \(t_n < \tau\). In this case, the sample does not contain direct information on the part of the drift and diffusion coefficients that is active only on the post-\(\tau\) region. As a consequence, any parameter subvector affecting \(\phi(T-t,b)\) and \(g(T-t,\theta)\) only on \((\tau,T-\delta]\) can enter the likelihood only through the endpoint correction term in \eqref{eq:loglikSimpl}. For the parametric specification in \eqref{eq:phiandg}, this issue concerns the post-decay shape parameters $\left(b^{+},\theta^{+}\right)$.

For simplicity of the derivation, we assume the regularity and differentiability conditions needed to differentiate the bridge moments with respect to the post-\(\tau\) parameters and to exchange differentiation and integration whenever required.

When the sample does not overlap the decay region, the data do not observe the change in shape of the coefficients directly. The corresponding parameters are then informed only indirectly through the terminal bridge correction. This mechanism is too weak to identify the whole post-\(\tau\) block jointly as highlighted in the following lemma.

\begin{mylemma}\label{lem:ident_prob} When $\tau > t_n$, the parameters $b^{+}$ and $\theta^{+}$ in \eqref{eq:phiandg} are not jointly identifiable. Specifically, the first-order stationary conditions for the log-likelihood lead to a mathematical contradiction. 
\end{mylemma}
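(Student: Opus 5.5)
The plan is to start from the telescopic representation \eqref{eq:loglikSimpl} in Proposition~\ref{prop:bridge_mle}(i) and isolate where $(b^{+},\theta^{+})$ actually enter. When $t_n<\tau$, every transition density $f_{t_i\mid t_{i-1}}$ depends only on $\phi$ and $g$ on $[t_0,t_n]\subset[0,\tau)$. On that interval the specification \eqref{eq:phiandg} gives $\phi\equiv b^{-}$ and $g\equiv\theta^{-}$, so the sum in \eqref{eq:loglikSimpl} does not depend on $(b^{+},\theta^{+})$. Hence
\[
\frac{\partial \ell_n}{\partial \eta}
=\frac{\partial}{\partial \eta}\Bigl[\log f_{T\mid t_n}(0;\psi)-\log f_{T\mid t_0}(0;\psi)\Bigr],
\qquad \eta\in\{b^{+},\theta^{+}\}.
\]
Writing each endpoint term as a Gaussian log-density, $-\tfrac12\log(2\pi v_{T\mid t_j})-m_{T\mid t_j}^2/(2v_{T\mid t_j})$, reduces the score equations to explicit expressions in the moments \eqref{eq:mean}--\eqref{eq:var}.

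Next I split these moments at $\tau$. For $j\in\{0,n\}$, $m_{T\mid t_j}=e^{-a(T-t_j)}x_{t_j}+A_j+e^{-a(T-\tau)}\,\cdot\,$(pre-$\tau$ constant part)$\,+M(b^{+})$, where $M(b^{+}):=a\int_\tau^T e^{-a(T-s)}\phi(T-s,b)\,ds$ is \emph{common} to $j=0,n$. Similarly $v_{T\mid t_j}=e^{-2a(T-\tau)}v_{\tau\mid t_j}+V(\theta^{+})$ with $V(\theta^{+}):=\int_\tau^T e^{-2a(T-s)}g^2\,ds$, again common to both endpoints. Thus $b^{+}$ enters only through $M$ and $\theta^{+}$ only through $V$. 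By the chain rule the two first-order conditions become
\[
M'(b^{+})\Bigl[\tfrac{m_{T\mid t_0}}{v_{T\mid t_0}}-\tfrac{m_{T\mid t_n}}{v_{T\mid t_n}}\Bigr]=0,
\]
\[
V'(\theta^{+})\Bigl[\tfrac{1}{2v_{T\mid t_0}}-\tfrac{m_{T\mid t_0}^2}{2v_{T\mid t_0}^2}-\tfrac{1}{2v_{T\mid t_n}}+\tfrac{m_{T\mid t_n}^2}{2v_{T\mid t_n}^2}\Bigr]=0.
\]
Since $M'(b^{+})<0$ and $V'(\theta^{+})<0$ strictly (the powers in \eqref{eq:phiandg} are strictly decreasing in the exponent on $(\tau,T)$, $b^{-}\neq0$, $\theta^{-}>0$), the brackets must vanish.

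The final step, which I expect to be the main obstacle, is to show that the two bracket equations are incompatible. Setting $r_j:=m_{T\mid t_j}/v_{T\mid t_j}$, the first gives $r_0=r_n=:r$. Substituting into the second yields $(1/v_{T\mid t_0}-1/v_{T\mid t_n})(1-\ldots)$, or more directly $\tfrac12(1/v_0-1/v_n)=\tfrac12 r^2(\,\cdot\,)$ with the $r^2$ terms cancelling. This leaves $1/v_{T\mid t_0}=1/v_{T\mid t_n}$. By Lemma~\ref{Factor}'s decomposition, however, $v_{T\mid t_0}=e^{-2a(T-t_n)}v_{t_n\mid t_0}+v_{T\mid t_n}>v_{T\mid t_n}$ because $v_{t_n\mid t_0}>0$ when $g>0$. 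This is the contradiction, so no interior stationary point in $(b^{+},\theta^{+})$ exists jointly. The delicate points are checking the $r^2$ cancellation precisely, and arguing that $M'$ and $V'$ do not vanish so the brackets genuinely must be zero. I would verify the latter by differentiating under the integral, which is permitted by the stated regularity, and using $\partial_\eta x^\eta=x^\eta\log x<0$ for $x\in(0,1)$.
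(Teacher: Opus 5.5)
Your proposal is correct and follows essentially the same route as the paper's proof. You restrict to the terminal correction term, use that $\partial m_{T\mid t}/\partial b^{+}$ and $\partial v_{T\mid t}/\partial\theta^{+}$ coincide at $t=t_0$ and $t=t_n$, get $m_{T\mid t_0}/v_{T\mid t_0}=m_{T\mid t_n}/v_{T\mid t_n}$ from the $b^{+}$-condition and then $v_{T\mid t_0}=v_{T\mid t_n}$ from the $\theta^{+}$-condition, contradicting $v_{T\mid t_0}=e^{-2a(T-t_n)}v_{t_n\mid t_0}+v_{T\mid t_n}>v_{T\mid t_n}$. One small correction: $M'(b^{+})$ has the sign of $-b^{-}$, so it is not negative in general, but only $M'(b^{+})\neq 0$ (i.e.\ $b^{-}\neq 0$) is needed, and you already flag that condition.
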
 

\begin{myproof} 
See Appendix \ref{Appendix_subsubSect}
\end{myproof}

\begin{myremark}
Lemma \ref{lem:ident_prob} does not imply that the model is not useful on pre-\(\tau\) data. It only shows that one should not expect full identification of the post-\(\tau\) block from such a sample. In applications based only on pre-\(\tau\) observations, one should therefore impose a parsimonious specification for the post-\(\tau\) component by fixing, for instance $b^{+}=1$ in \eqref{eq:phiandg}. 
\end{myremark}

\subsection{Transition with daily regime switching}\label{SectioSRCDM}

While the RDCM framework provides a benchmark specification with a deterministic transition deadline, we now relax this assumption by allowing the perceived deadline to vary over time through a latent regime. The switching mechanism is introduced strictly on the observation lattice: the latent component is modeled as a discrete-time hidden Markov model (HMM) that selects the SDE followed by the observable process between two consecutive observation dates. We refer to this extended framework as the Switching Regulatory Deadline-Constrained Model (SRDCM).

Let $\mathcal{S} = \left\{1, \dots, m\right\}$ be the finite set of regimes and let $\bar{\Delta} >0$ denote a fixed observation step. The observation grid is defined by
\[
t_i = t_0 + i \bar{\Delta}, \qquad i = 0,1,\dots,n.
\]
Here $\bar{\Delta}$ is part of the model specification, since it determines the dates at which the latent regime is updated. Note that it should not be confused with the observation mesh introduced later for asymptotic estimation.

\noindent The latent regime is modeled as a discrete-time homogeneous Markov chain
\begin{equation}
    S^{\bar{\Delta}} := \left\{S^{\bar{\Delta}}_{i}\right\}_{i=0}^{n-1}, \qquad S^{\bar{\Delta}}_i \in \mathcal{S},
    \label{HMM_Switching}
\end{equation}
with initial distribution $\pi_0 = \left(\pi_{0,j}\right)_{j\in \mathcal{S}}$, where
\begin{equation}
\pi_{0,j} = \mathbb{P}\left(S^{\bar{\Delta}}_{0} = j\right),
\label{p0_SRDCM}
\end{equation}
and transition matrix $\mathbf{P} = \left(p_{hk}\right)_{h, k =1}^m$ with entries
\begin{equation}
p_{hk} = \mathbb{P}\left(S^{\bar{\Delta}}_{i+1}= k \mid S^{\bar{\Delta}}_{i}= h\right), \qquad i = 0, \dots, n-2.
\label{co}
\end{equation}
Each $p_{hk}$ represents the probability of moving from regime $h$ to regime $k$ between two consecutive observation dates. In particular, the off-diagonal entries $p_{hk}$, with $h\neq k$, describe shifts in the perceived transition deadline. Each regime $j\in\mathcal S$ is associated with its own parameter block $\left(a_j,b_j,\theta_j\right)$, structural time point $\tau_j$, and transition point $T_j$, hence with a specific RDCM-type dynamics.

Let $(\Omega, \mathcal{F}, \mathbb{P})$ be a probability space supporting a Wiener process $W := \left\{W_t\right\}_{t\geq 0}$ and the latent chain $S^{\bar{\Delta}}$. Let $\mathbb{F}^{W} = \{\mathcal{F}^{W}_{t}\}_{t \geq 0}$ be the natural filtration of $W$, with
\[
\mathcal{F}^{W}_{t} = \sigma (W_s : 0 \leq s \leq t),
\]
and let $\mathbb{F}^{S^{\bar{\Delta}}} := \{\mathcal{F}^{S^{\bar{\Delta}}}_{i}\}_{i = 0}^{n-1}$ be the discrete-time filtration generated by the latent chain, namely
\[
\mathcal{F}^{S^{\bar{\Delta}}}_{i}:=\sigma(S^{\bar{\Delta}}_{0},\dots,S^{\bar{\Delta}}_{i}),\qquad i=0,\dots,n-1.
\]
We equip $(\Omega, \mathcal{F}, \mathbb{P})$ with the enlarged filtration $\mathbb{G} := \left\{\mathcal{G}_{t}\right\}_{t \geq0 }$, where for any $t \in \left[t_i, t_{i+1}\right)$,
\[
\mathcal{G}_t = \mathcal{F}_t^W \vee \mathcal{F}_i^{S^{\bar{\Delta}}}.
\]
This reflects the timing convention of the model: the realization of $S^{\bar{\Delta}}_{i}$ at time $t_i$ determines the bridge dynamics over the subsequent interval $\left(t_i,t_{i+1}\right]$.

Conditional on $S^{\bar{\Delta}}_{i}= j$, $\left\{X_t\right\}_{\left(t_i,t_{i+1}\right]}$ evolves according to the regime-$j$ deadline-constrained bridge
\begin{equation}\label{SDE_SRDCM}
\begin{cases}
\mathrm{d} X_t^{(j)} = a_j\left[\phi_j(T_j-t,b_j)-X_t^{(j)}\right]dt +
g_j\left(T_j-t,\theta_j\right)\,\mathrm{d}W_t,
\qquad t\in(t_i,t_{i+1}],
\\
X_{t_i}^{(j)} = X_{t_i}, \quad \text{and} \quad X_{T_j}^{(j)} = 0.
\end{cases}
\end{equation}
Within each regime $j$, the process behaves as a diffusion conditioned to hit the terminal target $X_{T_{j}}=0$. Although this structure is mathematically equivalent to a stochastic bridge (see \citealt{doob1957conditional,doob1984classical}), only the terminal point is fixed by the regulatory deadline, while the starting point $x_{t_i}$ is inherited from the previous interval. 
This leads to the following definition.

\begin{mydefinition}\label{Def_SRDCM_MODEL}
A pair $\left(X, S^{\bar{\Delta}}\right)$ adapted to the enlarged filtration $\mathbb{G}$ is called a Switching Regulatory Deadline-Constrained Model if
\begin{enumerate}
\item $S^{\bar{\Delta}}$ is a discrete-time homogeneous Markov chain on $\mathcal{S}$;
\item for each interval $\left(t_i,t_{i+1}\right]$, conditional on $S^{\bar{\Delta}}_{i}=j$, the continuous component evolves according to \eqref{SDE_SRDCM};
\item the observed process $X$ is obtained by concatenating the regime-specific bridges along the realized latent regime sequence.
\end{enumerate}
\end{mydefinition}
Although the drift and diffusion coefficients of the SRDCM may exhibit jump discontinuities at the observation dates $t_i$ due to regime changes, the sample path of $X$ is almost surely continuous by construction.

For each fixed sample size $n$, the natural state space of the model is $\left(E_n,\mathcal{E}_n,\mathbb{P}_n\right)$ where
\[
E_n:=\mathbb{R}^{n+1}\times \mathcal{S}^n,
\qquad
\mathcal{E}_n:=\mathcal{B}(\mathbb{R}^{n+1})\otimes \mathcal{P}(\mathcal{S}^n),
\]
with $\mathcal{B}(\mathbb{R}^{n+1})$ the Borel $\sigma$-field on $\mathbb{R}^{n+1}$ and $\mathcal{P}(\mathcal{S}^n)$ the power set of $\mathcal{S}^n$. The probability measure $\mathbb{P}_n$ is absolutely continuous with respect to the reference measure $\mu_n$ defined as:
\begin{equation}
\mu _{n}=\nu ^{\otimes (n+1)}\times \lambda ^{\otimes n},
\label{referenceMeasures}
\end{equation}
where $\nu$ and $\lambda$ denote, respectively, the Lebesgue and counting measures. The joint law of $\left(X_{t_0}, \dots, X_{t_n}, S^{\bar{\Delta}}_0, \dots, S^{\bar{\Delta}}_{n-1} \right)
$ has a mixed continuous-discrete joint density representation due to the presence of continuous  $\left(X_{t_0},\ldots,X_{t_n}\right)$ and discrete $\left(S^{\bar{\Delta}}_0, \dots, S^{\bar{\Delta}}_{n-1} \right)$ coordinates. This provides the natural continuous-discrete framework for the likelihood construction where each one-step contribution combines a regime-transition probability with a regime-dependent bridge density.

Figure~\ref{Fig1} illustrates the interaction between the lattice-valued hidden sequence $\left\{S^{\bar{\Delta}}_{i}\right\}_{i=0}^{n-1}$ and the continuous process $X$. At each observation date $t_i$, the realization of $S^{\bar{\Delta}}_{i}$ selects a regime $j\in\mathcal S$, that is, a parameter block together with a target deadline $T_j$. The segment $\left\{X_t\right\}_{t \in (t_i, t_{i+1}]}$ is then generated according to the corresponding bridge dynamics starting from the inherited value $X_{t_i}$.

\begin{figure}[h!]
		\centering
	\resizebox{0.75\textwidth}{!}{ \begin{tikzpicture}[
		circleNode/.style={circle, draw=green!60, fill=green!20, minimum size=9mm},
		squareNode/.style={circle, draw=blue!70, fill=blue!30, minimum size=9mm},
		middleNode/.style={rectangle, draw=gray!70, fill=gray!20, minimum size=9mm},
		>={Latex[scale=2]}
		]
		
		\node[circleNode] (Z0) at (0,4) {$S^{\bar{\Delta}}_{0}$};
		\node[circleNode] (Z1) at (4,4) {$S^{\bar{\Delta}}_{1}$};
		\node[circleNode] (Z2) at (8,4) {$S^{\bar{\Delta}}_{2}$};
		
		\node[squareNode] (Y0) at (0,0) {$X_{t_0}$};
		\node[squareNode] (Y1) at (4,0) {$X_{t_1}$};
		\node[squareNode] (Y2) at (8,0) {$X_{t_2}$};
		\node[squareNode] (Y3) at (12,0) {$X_{t_3}$};
		
		\node[middleNode] (X1) at (4,2) {$\{X_t\}_{t \in \left(t_0, t_1\right]}$};
		\node[middleNode] (X2) at (8,2) {$\{X_t\}_{t \in \left(t_1, t_2\right]}$};
		\node[middleNode] (X3) at (12,2) {$\{X_t\}_{t \in \left(t_2, t_3\right]}$};
		
		\draw[->] (Z0) -- (X1.north west);
		\draw[->] (Y0) -- (X1.south west);
		\draw[->] (X1) -- (Y1);
		
		\draw[->] (Z1) -- (X2.north west);
		\draw[->] (Y1) -- (X2.south west);
		\draw[->] (X2) -- (Y2);
		
		\draw[->] (Z2) -- (X3.north west);
		\draw[->] (Y2) -- (X3.south west);
		\draw[->] (X3) -- (Y3);
		
	\end{tikzpicture}}
\caption{At each observation date $t_i$, the latent variable $S^{\bar{\Delta}}_{i}$ selects a regime $j \in \mathcal{S}$, namely a parameter block targeting deadline $T_j$. The segment $\{X_t\}_{t \in (t_i, t_{i+1}]}$ is then generated according to the corresponding bridge dynamics.}
\label{Fig1}
\end{figure}

In the empirical analysis, we restrict the observation horizon to
\[
[t_0,T_1-\delta], \qquad T_1:=\min_{j\in\mathcal S}T_j, \qquad \delta>0,
\]
so that all regimes remain active over the available sample. Equivalently, for fixed $\bar\Delta$, the sample size $n$ is bounded above by
\[
\bar n:=\max\left\{k\in\mathbb N:\ t_0+k\bar\Delta\leq T_1-\delta\right\}.
\]
This restriction is not part of the definition of the SRDCM itself, but an operational choice that is consistent with an economic interpretation: the date $T_1-\delta$ marks the onset of a pre-transition decision region in which the transition associated with the earliest deadline $T_1$ may already be perceived as nearly deterministic, because agents have undertaken costly and difficult-to-reverse actions that effectively presuppose its realization. Typical examples include large-scale adoption of high-efficiency buildings, electric mobility, electrified industrial equipment, and other technological adjustments that make a delayed transition increasingly implausible. From this point onward, the model is no longer used mechanically: either the transition associated with $T_1$ is effectively accepted, in which case the analysis stops, or the data suggest that such a transition has not materialized, in which case a new SRDCM is specified on an updated set of credible deadlines. In the remaining part of the paper, we refer to this approach as a sequential model updating strategy.

\subsubsection{Calibration discussion on SRDCM}\label{EMSRDCM}
This section discusses two main issues related to the fitting procedure of the SRDCM. The first is to show how the telescopic structure available in the RDCM is broken at each time \(t_i\) where the regime changes, thus reducing the weak-identification issue with pre-\(\tau\) data discussed in Section~\ref{subsubSect}. The second concerns the Forward-Backward recursions in the Baum-Welch algorithm (see \citealt{baum1970maximization} for details) used in the estimation procedure of the SRDCM. Indeed, due to the SRDCM structure described in Figure~\ref{Fig1}, these recursions are slightly modified with respect to the classical hidden Markov model case \citep[see][]{zucchini2009hidden}.

We assume that the process \(\{X_t\}_{t\ge 0}\) is observed only on the grid
\(t_0,t_1,\dots,t_n\), yielding the realizations \((x_{t_0},x_{t_1},\dots,x_{t_n})\).
For simplicity, we assume that the transition probability matrix \(\mathbf{P}\) is ergodic and has no absorbing
states, that is, the probabilities \((p_{hk})^{m}_{h,k=1}\) in \eqref{co} are all strictly
positive, and that the components \((\pi_{0,j})_{j\in\mathcal{S}}\) defined in
\eqref{p0_SRDCM} are strictly positive. Under these assumptions, all HMM paths
\(s^{\bar{\Delta}} \in \mathcal{S}^n\) are admissible and contribute to the computation of
the direct SRDCM log-likelihood conditioned to the first observation $x_{t_0}$.

Let
\(
\psi:=\Big(\left(a_j,b_j,\theta_j\right)_{j\in\mathcal{S}},(p_{hk})^{m}_{h,k=1},(\pi_{0,j})_{j\in\mathcal{S}}\Big)
\)
denote the vector of SRDCM parameters. For any
\(
s^{\bar{\Delta}}\in \mathcal{S}^{n},
\)
we introduce the log-likelihood 
\(
\ell_n^{}\left(\psi\mid s^{\bar{\Delta}} \right)
\)
conditional on the latent path \(s^{\bar{\Delta}}\). This is a natural generalization of
\eqref{eq:loglikSimpl}, since
\(
\ell_n^{}\left(\psi\mid s^{\bar{\Delta}} \right)
\)
can be viewed as a concatenation of RDCM log-likelihood contributions over the sub-intervals of the observation grid on which the regime remains constant. 
Let $N\left(s^{\bar{\Delta}}\right)$ denote the number of regime changes in a path $s^{\bar{\Delta}}$, we introduce the sequence of regime change times 
$\left\{\kappa_{r}\left(s^{\bar\Delta}\right)\right\}_{r=0}^{N(s^{\bar{\Delta}})+1}$ 
such that for $r=1,\ldots N(s^{\bar{\Delta}})$ the $r$-th change time is 
$\kappa_{r}\left(s^{\bar\Delta}\right)=t_i$ where $s_{t_{i}}\neq s_{t_{i-1}}$
with $\kappa_{0}\left(s^{\bar\Delta}\right)=t_0$ 
and $\kappa_{N\left(s^{\bar\Delta}\right)+1}\left(s^{\bar\Delta}\right)=t_n$. 
For each block \(r=1,\dots,N(s^{\bar\Delta})+1\), let
\[
j_r
:=
s^{\bar\Delta}_{\kappa_{r-1}(s^{\bar\Delta})}
\]
denote the regime active on the interval
\(
(\kappa_{r-1}(s^{\bar\Delta}),\kappa_r(s^{\bar\Delta})].
\)
Thus, \(
\ell_n^{}\left(\psi\mid s^{\bar{\Delta}} \right)
\) reads:
\footnotesize
\begin{equation}
\ell_n^{}\left(\psi\mid s^{\bar{\Delta}} \right)
=
\sum_{r=1}^{N\left(s^{\bar\Delta}\right)+1}
\sum_{t_i \in\left(\kappa_{r-1}(s^{\bar\Delta}), \kappa_{r}(s^{\bar\Delta})\right]}
\left[
\log\left(
\frac{
f_{T_{j_r}\mid t_i}
\left(0;\psi_{j_r}\right)
}{
f_{T_{j_r}\mid t_{i-1}}
\left(0;\psi_{j_r}\right)
}
\right)
+
\log
f_{t_i\mid t_{i-1}}
\left(
x_{t_i};
\psi_{j_r}
\right)
\right],
\label{eq:lcndatosdelta}
\end{equation}
\normalsize
where $f_{t_i\mid t_{i-1}}
\left(
x_{t_i};
\psi_{j_r}
\right)$ is the transition density of the linear SDE in \eqref{SDE_SRDCM} without the final condition and  $s_{t_{i-1}}^{\bar \Delta}=j_r$.
The formula in \eqref{eq:lcndatosdelta} highlights that $\ell^{c}_{}\left(\psi\mid s^{\bar \Delta}\right)$ is obtained by $N\left(s^{\bar{\Delta}}\right)+1$ log-likelihood RDCM blocks of the form in \eqref{eq:loglikSimpl}. Moreover, the telescopic part in each block $\left(\kappa_{r-1}\left(s^{\bar{\Delta}}\right),\kappa_{r}\left(s^{\bar{\Delta}}\right)\right]$ simplifies as:
\begin{equation}
    \sum_{t_i \in\left(\kappa_{r-1}(s^{\bar\Delta}), \kappa_{r}(s^{\bar\Delta})\right]}
\log\left(
\frac{
f_{T_{j_r}\mid t_i}
\left(0;\psi_{j_r}\right)
}{
f_{T_{j_r}\mid t_{i-1}}
\left(0;\psi_{j_r}\right)
}
\right)=\log\left(\frac{f_{T_{j_r}\mid \kappa_{r}\left(s^{\bar\Delta}\right)}
\left(0;\psi_{j_r}\right)}{f_{T_{j_r}\mid \kappa_{r-1}\left(s^{\bar\Delta}\right)}
\left(0;\psi_{j_r}\right)}\right).
\label{RegimewisetelescopicTerm}
\end{equation}
Using \textit{the log-sum-exp} approach, the SRDCM direct loglikelihood given the first observation $x_{t_0}$ has the final form:
\begin{equation}
    l_n\left(\psi\right):= \log\left[ \sum_{s^{\bar{\Delta}}\in \mathcal{S}^{n}}e^{\ell_{n}\left(\psi\mid s^{\bar\Delta}\right)}\mathbb{P}\left(s^{\bar\Delta}\right)\right]
    \label{eq:Direct_log_likSRDCM}
\end{equation}
where \(\mathbb P(s^{\bar\Delta})\) denotes the probability assigned to the latent path \(s^{\bar\Delta}\) by the hidden Markov chain on the grid. Therefore, the observed likelihood admits a \textit{log-sum-exp} representation.

\begin{myremark}
Considering pre-$\tau$ data, the issue highlighted in Section~\ref{subsubSect} is mitigated in the SRDCM, since the global telescopic term of the RDCM is fragmented into regime-wise telescopic blocks of the form \eqref{RegimewisetelescopicTerm}.
\end{myremark}

Although the direct observed likelihood \(\ell_n(\psi)\) in
\eqref{eq:Direct_log_likSRDCM} is well defined, its numerical maximization is
cumbersome because it involves a sum over all admissible hidden paths. Thus calibration is carried out by means of the Baum-Welch algorithm% \citep[see][]{baum1970maximization}
, that is, the Expectation--Maximization
procedure (see \citet{dempster1977maximum} for details) naturally associated with a hidden Markov
model. In the SRDCM setting, however, the forward and backward recursions are
not the standard ones, because \(S_{i-1}^{\bar\Delta}\) governs the evolution of
\(X_t\) over the subsequent interval \((t_{i-1},t_i]\), as illustrated in
Figure~\ref{Fig1}. For the general Baum-Welch framework we refer to
\cite{zucchini2009hidden}. 

We define the forward $\alpha_i(j)$ and backward $\beta_i(j)$ quantities with $i=1,\ldots,n$ and $j\in \mathcal{S}$ as the usual joint density-probability objects associated with the mixed continuous-discrete structure introduced Section \ref{SectioSRCDM}.\footnote{A  theoretical measure construction can be obtained from the reference measure $\mu_n$
in \eqref{referenceMeasures}; we omit it here and refer to \cite[][Chapter 4]{cinlar2011probability}.} Specifically, $\alpha_i(j)$ is the joint density-probability of $\left(x_{t_1},\ldots,x_{t_{i-1}}, S^{\bar \Delta}_{i-1}=j,x_{t_i}\right)$
given $x_{t_0}$ while $\beta_i\left(j\right)$ is the conditional density-probability of $\left(x_{t_{i+1}},\ldots,x_{t_n}\right)$ given $\left(x_{t_i},S^{\bar \Delta}_{i-1}=j\right)$. These two quantities can be used to compute the posterior probabilities $\left\{\gamma_{i-1}(j)\right\}_{j\in\mathcal{S}}$ with $i=1,\ldots,n$ used in the E-step. Specifically, they read
\begin{equation}
\gamma_{i-1}(j)
=
\mathbb{P}(S_{t_{i-1}}=j\mid X_{t_0}=x_{t_0},\dots,X_{t_n}=x_{t_n})
=
\frac{\alpha_i(j)\beta_i(j)}
{\sum_{k=1}^m\alpha_i(k)\beta_i(k)}.
\label{gammaPosterior}
\end{equation}
We refer to \cite{zucchini2009hidden}, for integrating \eqref{gammaPosterior} in the E-step of the classical Expectation-Maximization algorithm developed by \cite{dempster1977maximum}.   
We complete this section with the following proposition that shows the structure of the recursions for the forward $\alpha_i(j)$ and backward $\beta_i(j)$ quantities.
\begin{myproposition}
\label{prop:FB_SRDMC}
Let
\[
f_{t_{i}\mid t_{i-1},T_j}(x_{t_i};\psi) :=  \frac{f_{t_i\mid t_{i-1}}\left(x_{t_i};\psi_j\right)f_{T_j\mid t_i}\left(0;\psi_j\right)}{f_{T_j\mid t_{i-1}}\left(0;\psi_j\right)}
\]
denote the regime-$j$ bridge transition density over $(t_{i-1},t_i]$ with $f_{t_i\mid t_{i-1}}\left(x_{t_i};\psi_j\right)$ the regime-wise version transition density  of a linear SDE without final condition, and let $p_{hj}$
be the transition probability in \eqref{co} from regime $h$ to regime $j$. 
Introducing the compact notation $f_j(x_{t_i}\mid x_{t_{i-1}}):=f_{t_{i}\mid t_{i-1},T_j}(x_{t_i};\psi_j)$, we have
\[
\alpha_1(j)=\pi_{0,j}\,f_j(x_{t_1}\mid x_{t_0}),
\qquad
\beta_n(j)=1,
\]
and, for $i=2,\dots,n$,
\begin{equation}
\alpha_i(j)
=
f_j(x_{t_i}\mid x_{t_{i-1}})
\sum_{h=1}^m\alpha_{i-1}(h)p_{hj},
\label{eq:alpha_recursion_our}
\end{equation}
while, for $i=n-1,\dots,1$,
\begin{equation}
\beta_i(j)
=
\sum_{k=1}^m p_{jk}\,f_k(x_{t_{i+1}}\mid x_{t_i})\,\beta_{i+1}(k).
\label{eq:beta_final}
\end{equation}
\end{myproposition}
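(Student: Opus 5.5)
The plan is to obtain both recursions from the joint mixed density of $(x_{t_1},\dots,x_{t_n},S^{\bar\Delta}_0,\dots,S^{\bar\Delta}_{n-1})$ given $x_{t_0}$, taken with respect to the reference measure $\mu_n$ in \eqref{referenceMeasures}. By Definition~\ref{Def_SRDCM_MODEL} and the timing convention of Figure~\ref{Fig1}, this density factorizes as
\[
\pi_{0,s_0}\,f_{s_0}(x_{t_1}\mid x_{t_0})\prod_{i=2}^{n} p_{s_{i-2}s_{i-1}}\,f_{s_{i-1}}(x_{t_i}\mid x_{t_{i-1}}).
\]
This holds for three reasons. The chain is Markov with initial law $\pi_0$ and transition matrix $\mathbf P$. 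Conditional on $S^{\bar\Delta}_{i-1}=j$ and $X_{t_{i-1}}$, the segment on $(t_{i-1},t_i]$ is the regime-$j$ bridge \eqref{SDE_SRDCM}, whose density at $x_{t_i}$ is $f_{t_i\mid t_{i-1},T_j}$. That density takes the stated ratio form by Gaussian conditioning, as in \eqref{asc}, or equivalently by Bayes' rule together with the Markov property of the unconstrained linear SDE, exactly as in Proposition~\ref{prop:bridge_mle}(i). Finally, the next regime $S^{\bar\Delta}_i$ depends on the past only through $S^{\bar\Delta}_{i-1}$; the $\mathbb G$-structure makes the chain independent of $W$, and this is what makes it so.

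Step one is to justify this factorization as a conditional-independence statement. Given $(S^{\bar\Delta}_{i-1}=j,\,x_{t_{i-1}})$, the pair $(x_{t_i},S^{\bar\Delta}_i)$ is independent of $(x_{t_0},\dots,x_{t_{i-2}},S^{\bar\Delta}_0,\dots,S^{\bar\Delta}_{i-2})$, and its law is $f_j(x_{t_i}\mid x_{t_{i-1}})\,p_{jS_i}$. I would argue this from the Markov property of each regime bridge on its interval (the bridge is a Markov process started at the inherited value $x_{t_{i-1}}$) combined with the chain's independence from $W$. I expect this to be the only delicate point, because it requires treating the bridge as a conditioned Markov process whose law depends only on its starting value and on $T_j$. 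Given that, the rest is bookkeeping.

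Step two is the forward recursion. By definition, $\alpha_i(j)$ is the joint density of $(x_{t_1},\dots,x_{t_i})$ with $S^{\bar\Delta}_{i-1}=j$ given $x_{t_0}$, so it equals the sum of the factorized density over $s_0,\dots,s_{i-2}$ with $s_{i-1}=j$, marginalizing over later observations and regimes, which integrate or sum to one. For $i=1$ this is immediately $\pi_{0,j}f_j(x_{t_1}\mid x_{t_0})$. For $i\ge2$, pull out the last factor $p_{s_{i-2}j}f_j(x_{t_i}\mid x_{t_{i-1}})$ and recognize the remaining sum over $s_0,\dots,s_{i-3}$ as $\alpha_{i-1}(s_{i-2})$. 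Summing over $h=s_{i-2}$ then gives \eqref{eq:alpha_recursion_our}.

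Step three is the backward recursion. $\beta_n(j)=1$ holds because no future observations remain. For $i<n$, condition on $S^{\bar\Delta}_i=k$ and decompose:
\[
\beta_i(j)=\sum_k \mathbb P\bigl(S^{\bar\Delta}_i=k\mid S^{\bar\Delta}_{i-1}=j\bigr)\,f_k(x_{t_{i+1}}\mid x_{t_i})\,p(x_{t_{i+2}},\dots,x_{t_n}\mid x_{t_{i+1}},S^{\bar\Delta}_i=k).
\]
The conditional independence from step one shows that the first factor is $p_{jk}$, that the second factor does not depend on $j$ or on the earlier path, and that the third factor is $\beta_{i+1}(k)$. This yields \eqref{eq:beta_final}.

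As a final consistency check, I would verify that $\sum_j\alpha_i(j)\beta_i(j)$ equals the full likelihood $e^{l_n(\psi)}$ in \eqref{eq:Direct_log_likSRDCM}, independently of $i$. This confirms the normalization used in \eqref{gammaPosterior}.
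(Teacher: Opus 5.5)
Your proof is correct and is essentially the paper's argument: the paper proves the forward recursion by induction, writing the event behind $\alpha_i(j)$ as a disjoint union over $S^{\bar\Delta}_{i-2}=h$ and applying Bayes' rule to extract the factor $p_{hj}\,f_j(x_{t_i}\mid x_{t_{i-1}})$. It then obtains the backward recursion by partitioning on the next latent state, which is exactly what your joint factorization encodes. Your version makes the conditional-independence step explicit and gives a more detailed backward derivation than the paper. One small correction: independence of the chain from $W$ does not follow from the filtration $\mathbb{G}$, which only enlarges $\mathbb{F}^W$ with the chain's history. It is a standing modelling assumption of the HMM, stated explicitly as exogeneity only in Section~\ref{subsec:srdcm_highfreq}, and the paper's proof relies on it implicitly in the same way.
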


\begin{myproof}
See Appendix~\ref{app:proof_prop32}.
\end{myproof}

\section{Empirical Analysis}\label{empanalysis}
In this section, we conduct an empirical analysis on daily data observed from 8 September 2021 to 17 January 2025. The analysis is based on  the following three steps: calibration of model parameters, residuals analysis and model selection. 

We first start with the simplest formulation, i.e., we calibrate the RDCM model  conditioned to the assumption that transition will occur at  1 January 2031 with $\phi(\cdot,\cdot)$ and $g(\cdot ,\cdot)$ defined in \eqref{eq:phiandg}. $\tau$ is set to 8 September 2029, i.e., the date on which  both $\phi$ and $g$ start to decrease. Fitted results are presented in Table \ref{TabEstRCDM} in which standard deviations (sd) are computed using the parametric bootstrap\footnote{In practice, we fit the model parameters to the observed series for $X_t$ and use them to simulate the process 1000 times. For each simulated path, we estimate the model parameters again and finally compute their standard deviation.} methodology with 1000 replications \citep[see][for details on the parametric bootstrap]{efron1994introduction}.
% \begin{table}[!htbp]
% \centering
% \begin{tabular}{lcc}
% \toprule
% Param.  &  $est.$  & sd\\   
% \midrule
% $a$   & 18.3233 & 3.4245 \\
% $b^-$ & 0.0489  & 0.0062\\
% %$b_2$ & 0.9916 & 0.0103\\
% $\theta^-$ & 0.2180 & 0.0053\\
% $\theta^+$ & 1.0033 & 0.0099\\
% \midrule
% AIC & -5128.259 &\\
% \bottomrule
% \end{tabular}
% \caption{Calibrated parameters for the RCDM and corresponding standard deviations obtained through bootstrapping.\label{TabEstRCDM}}
% \end{table}

%To measure the ability of the RCDM model to describe the behaviour of the greenium difference $X_t$ we construct the series of the residuals $\hat{Z}_{t_i}$ defined as:
In order to assess the accuracy of the RDCM model, we first compute the filtered residuals as:
\begin{equation}
    \hat{z}_{t_i}=\frac{x_{t_{i}}-\hat{k}_i}{\hat{\sigma}_i}
\end{equation}from fitted model parameters
with $\hat{k}_i$ and $\hat{\sigma}_i$ defined respectively as in \eqref{k} and \eqref{sigma}. From Figure \ref{ResidualsRCDM}, confirmed also by the $p$-value of the KS-test, we observe that residuals $\left\{\hat{z}_i\right\}_{i=1}^{n}$ appear far from  normally distributed; thus,  the RDCM model does not seem adequate to  describe the dynamics of $X_t$ over the considered period. 
% \begin{figure}[!htbp]
%         \centering
%         \includegraphics[width=0.65\textwidth]{img_paper/Residuals_RCDM_VasicekGaussianModel.eps}  % Sostituisci con il percorso della tua immagine
%         \caption{Residuals of a RCDM model. KS test $\text{D} = 0.1142$, $\text{$p$-value} = 3.444\times10^{-10}$ i.e. the null hypothesis of the KS test for the gaussianity of the residuals is rejected at the 95\% confidence level.  \label{ResidualsRCDM}}
% \end{figure}

\begin{figure}[!htbp]
\centering

\begin{minipage}{0.48\textwidth}
\centering
\begin{tabular}{lcc}
\toprule
Param.  &  est.  & sd\\   
\midrule
$a$   & 18.3233 & 3.4245 \\
$b^-$ & 0.0489  & 0.0062\\
$\theta^-$ & 0.2180 & 0.0053\\
$\theta^+$ & 1.0033 & 0.0099\\
\midrule
AIC & -5128.259 &\\
\bottomrule
\end{tabular}
\vspace{1cm}
\captionof{table}{Calibrated parameters (est.) for the RDCM and corresponding standard deviations (sd) obtained through bootstrapping.}
\label{TabEstRCDM}
\end{minipage}
\hfill
\begin{minipage}{0.48\textwidth}
\centering
\includegraphics[width=\textwidth]{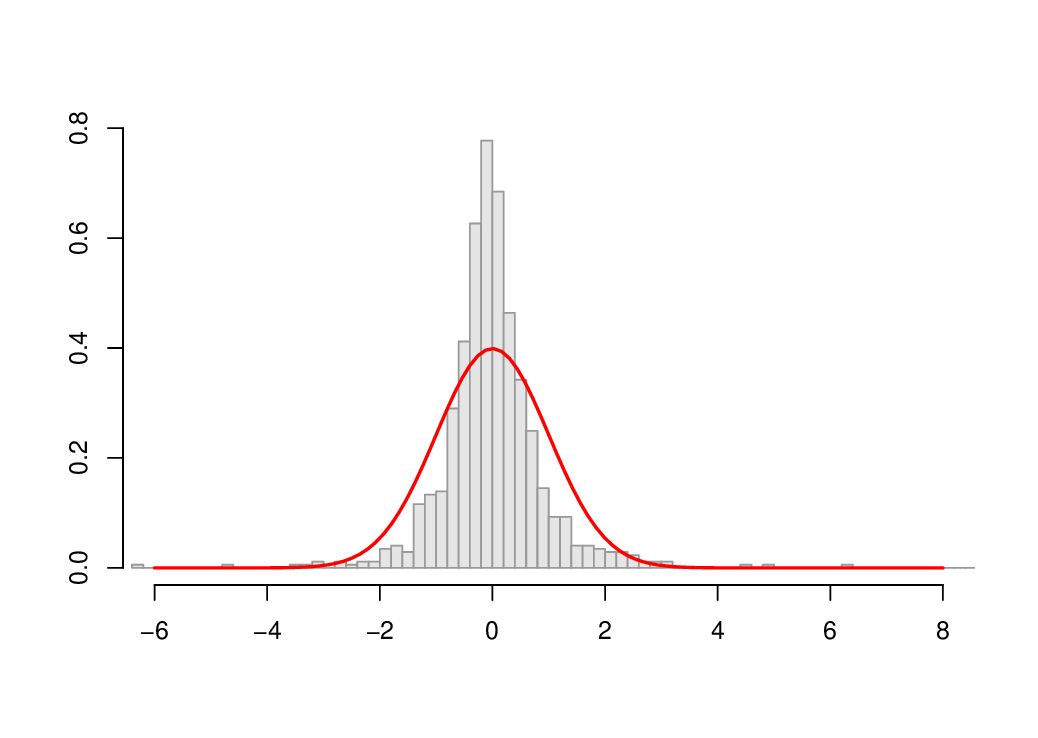}
\vspace{-1.15cm}
\captionof{figure}{Residuals of a RDCM model. KS test $\text{D} = 0.1142$, $p$-value $= 3.444\times10^{-10}$ i.e. the null hypothesis of the KS test for gaussianity of the residuals is rejected at the 95\% confidence level.}
\label{ResidualsRCDM}
\end{minipage}

\end{figure}

We then fit the switching RDCM model introduced in Section \ref{SectioSRCDM} 
to the same observed data. The maximization of $\ell_n\left(\phi\right)$ in \eqref{eq:Direct_log_likSRDCM} is performed by Expectation-Maximization procedure  based on the Baum-Welch algorithm \citep[see][for instance]{baum1970maximization,casgrain}. Table \ref{TabEst2Comp} reports the fitted parameters for a two-state regime switching RDCM model. In the first scenario the model is the RDCM model considered in the previous analysis, i.e., transition occurs within January 1, 2031 with structural point $\tau_1$ set to 8 September 2029 while in the second scenario the transition will happen at 1 January 2041 and the corresponding $\tau_2$ is set to 04 September 2036. %In Table \ref{TabEst2Comp}, the bootstrapped standard errors for all parameters are also reported. 

\begin{table}[!htbp]
\centering
\resizebox{1\textwidth}{!}{ %
\begin{tabular}{lcccccccccccccc}
\toprule
 & $a_1$ & $a_2$ & $b^{-}_{1}$ & $b^{-}_{2}$ & $b^{+}_{1}$ & $b^{+}_{2}$ & $\theta^{-}_{1}$ & $\theta^{-}_{2}$ & $\theta^{+}_{1}$ & $\theta^{+}_{2}$ & $p_{1,1}$ & $p_{2,1}$ & $p_{1,2}$ & $p_{2,2}$ \\
\midrule
est. & 33.9943 & 7.0034 & 0.0755 & 0.0363 & 1.0033 & 0.9918 & 0.3836 & 0.1002 & 1.0058 & 0.9969 & 0.8928 & 0.0457 & 0.1072 & 0.9543 \\
sd   & 2.8601  & 1.9549 & 0.0115 & 0.0110 & 0.0106 & 0.0099 & 0.0181 & 0.0032 & 0.0098 & 0.0098 & 0.0170 & 0.0078 & 0.0170 & 0.0078 \\
\midrule
AIC & \multicolumn{14}{c}{-5616.864} \\
\bottomrule
\end{tabular}}
\caption{Calibrated parameters and corresponding standard deviations obtained with 1000 bootstrapped samples.\label{TabEst2Comp}}
\end{table}
% \begin{table}[!htbp]
% \centering
% \resizebox{0.3\textwidth}{!}{ %
% \begin{tabular}{lcc}
% \toprule
% Param.  &  $est.$  & sd\\   
% \midrule
% $a_1$  & 33.9943 & 2.8601\\
% $a_2$  & 7.0034  & 1.9549   \\
% $b^{-}_{1}$ & 0.0755 & 0.0115 \\
% $b^{-}_{2}$ & 0.0363 & 0.0110 \\
% $b^{+}_{1}$ & 1.0033 & 0.0106 \\
% $b^{+}_{2}$ & 0.9918 & 0.0099\\
% $\theta^-_{1}$ & 0.3836 & 0.0181 \\
% $\theta^{-}_{2}$ & 0.1002 & 0.0032 \\
% $\theta^{+}_{1}$ & 1.0058 & 0.0098\\
% $\theta^{+}_{2}$ & 0.9969 & 0.0098\\
% $p_{1,1}$ & 0.8928 & 0.0170\\
% $p_{2,1}$ & 0.0457 & 0.0078\\
% $p_{1,2}$ & 0.1072 & 0.0170\\
% $p_{2,2}$ & 0.9543 & 0.0078\\
% \midrule
% AIC & -5616.864 &\\
% \bottomrule
% \end{tabular}}
% \caption{Calibrated parameters and corresponding standard deviations obtained with 1000 bootstrapped samples.\label{TabEst2Comp}}
% \end{table}

\noindent Comparing the AIC values in Tables \ref{TabEstRCDM} and \ref{TabEst2Comp} the switching two-regime model performs better than the RDCM. To further control the ability of the switching regime RDCM model to capture the behaviour of the observed data $\left\{x_{t_i}\right\}_{i=1}^n$, we conduct an analysis on residuals obtained with the following procedure:
\begin{itemize}
\item We assign the observation $x_{t_i}$ using the local decoding algorithm described in \cite{zucchini2009hidden}. Indeed, given the calibrated $\hat{\psi}_n$ parameters obtained by the sequence $\left\{x_{t_i}\right\}_{i=0}^{n}$, the fitted path $\hat{s}^{\bar{\Delta}}$ is obtained as a collection of indexes $\left\{\hat{j}_i\right\}_{i=1}^{n}$ whose $i$-th element $\hat{j}_i$ is the solution of the following problem:
\begin{equation}
\hat{j}_i:=\underset{j \in \left\{1,2\right\}}{\text{argmax}} \ \gamma_{t_{i-1}}\left(j; \hat{\psi}_n\right),
\label{algoritmForAssigned}
\end{equation}
 for each $i =1,\ldots, n$ and the map $j\mapsto\gamma_{t_{i-1}}\left(j;\cdot\right)$ has the form in \eqref{gammaPosterior}. 
\item Given  $\hat{s}^{\bar{\Delta}}$, the fitted residual $\hat{z}_{t_i}$ is obtained as:
\[
\hat{z}_{t_i}=\frac{x_{t_i}-k_i\left(\hat{j}_i;\hat{\psi}_n\right)}{\sigma_i\left(\hat{j};\hat{\psi}_n\right)}, 
\]
where $k_i\left(\cdot,\cdot\right)$ and $\sigma^2_i\left(\cdot,\cdot\right)$ are  in \eqref{k} and \eqref{sigma} respectively.  
\end{itemize}

As done in the RDCM model, a graphical comparison between its empirical density and the standard normal distribution is performed. Figure~\ref{Residuals} shows the residuals are closer to normality with respect to those obtained using the RDCM (see Figure \ref{ResidualsRCDM}). 
\begin{figure}[!h]
        \centering
        \includegraphics[width=0.65\textwidth]{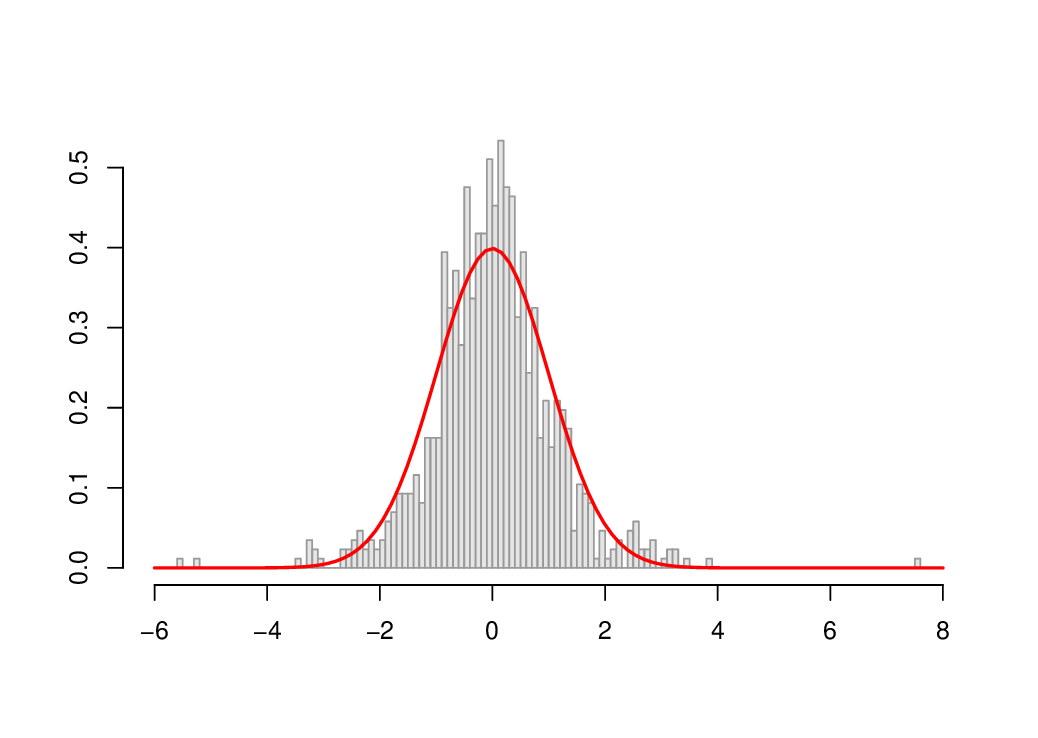}  % Sostituisci con il percorso della tua immagine
        \caption{Residuals of a two-state regime-switching RDCM model. KS test,   $\text{D} = 0.0422$, $\text{$p$-value} = 0.0933$
        %$\text{D} = 0.04004$, $\text{p-value} = 0.1261$  
        \label{Residuals}}
\end{figure}

The Kolmogorov-Smirnov test on the filtered residuals \(\{ \hat{z}_{t_i} \}\) returns a \(p\text{-value} = 0.093\), so we do not reject normality at the 5\% significance level. This suggests that the overall cumulative distribution function (CDF) of the residuals is reasonably close to the normal distribution.

\begin{figure}[!h]
    \centering
    \includegraphics[width=1\textwidth]{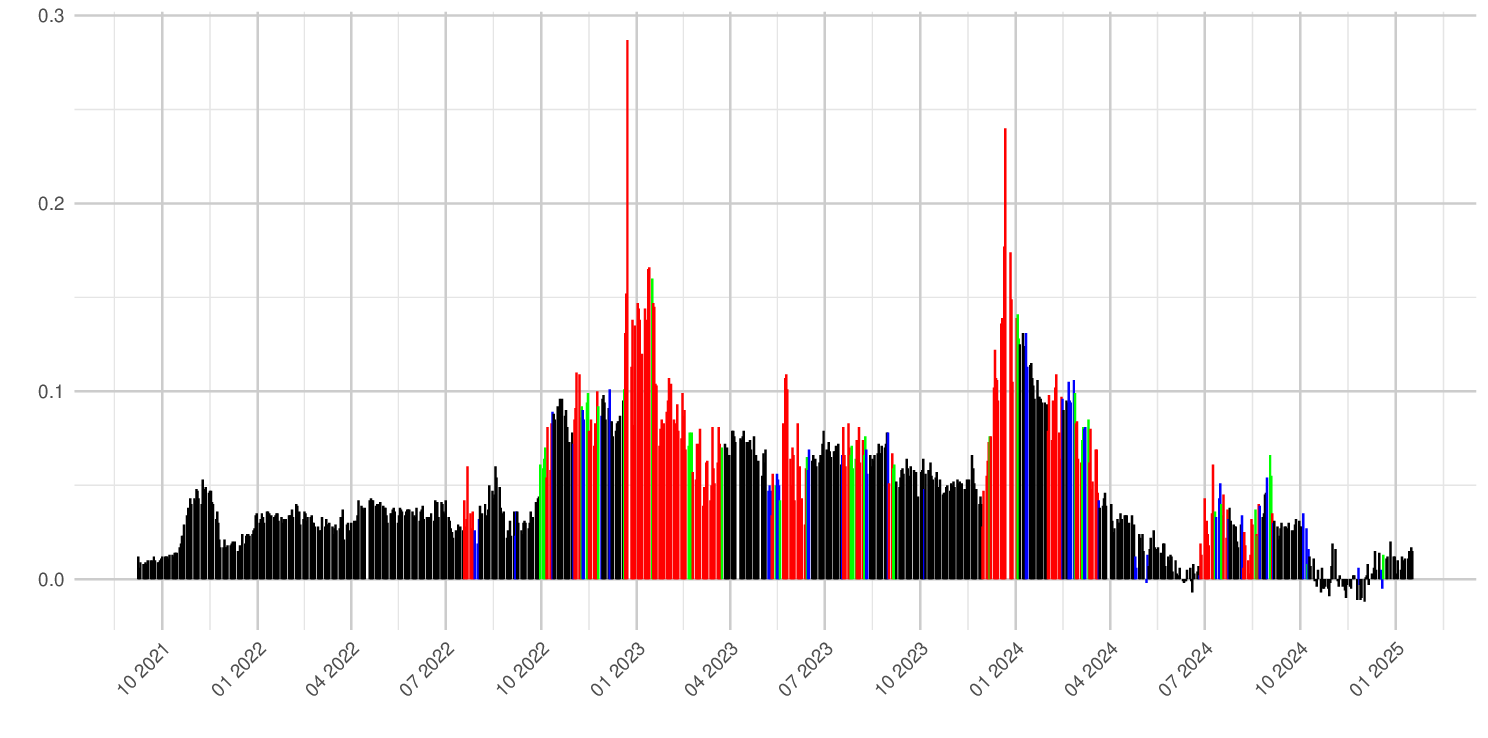}  % Sostituisci con il percorso della tua immagine
    \caption{Scenario detection with the two-state regime switching RDCM model applied to the observed data of $X_t$ ($y$-axis). Vertical red line corresponds to the case in which the transition within the end of year 2030 (first scenario) has $\gamma_{i-1}\left(1,\hat{\psi}_n\right)$ higher than 0.75. Vertical black line corresponds to case where the $\gamma_{i-1}\left(2,\hat{\psi}_n\right)$ is higher than 0.75 where scenario 2 denotes transition date of January 1, 2041. Green vertical line means that $\gamma_{i-1}\left(1,\hat{\psi}_n\right) \in [0.5,0.75)$ while the blue vertical lines are used for observation $x_{t_i}$ with $\gamma_{i-1}\left(2,\hat{\psi}_n\right) \in [0.5,0.75)$.  \label{ScenarioDetection}}
\end{figure}
Figure \ref{ScenarioDetection} shows the results of the scenario detection. For the German market, scenario 1, corresponding to transition by the end of 2030, does not appear to be perceived as realistic for most observations, especially during the last four months.

%Specifically, the red  and green (with $ \gamma_{i-1}\left(1;\hat{\psi}_n\right) \in [0.5,0.75)$) indicate that the observed data (most probable for $\hat{j}_{i-1}=1$), $x_{t_i}$ are classified, in the filtering exercise, as associated with the scenario whose deadline is January 1st, 2031, while black and blue indicate that the data will be assigned to the second scenario.

\section{$n$-Infill asymptotics and diffusion-block identification}
\label{sec:highfreq}
% =========================================================
% Section 5 + Appendix for the high-frequency identification part
% Requires: amsmath, amssymb, amsthm, mathtools, bm, hyperref
% =========================================================
%\section{$n$-Infill asymptotics and diffusion-block identification}
%\label{sec:highfreq}
In this section, we identify observation schemes under which the regime-wise
diffusion parameters can be structurally identified. Specifically, under mild
conditions, we prove that these parameters can be consistently estimated on their
visible part.\footnote{In the RDCM, only \(\theta^{-}\) is visible before \(\tau\),
whereas \((\theta^{-},\theta^{+})\) becomes visible once the observation window
overlaps the post-\(\tau\) region; the same interpretation applies regime by regime
in the SRDCM.}
The result specifies when the diffusion parameters carrying information about
competing transition dates are statistically identifiable. Proofs are collected in
Appendix~\ref{app:proofs_highfreq}.

% In this section, we ask under which observation scheme the identification of the perceived transition-time regime can be given a structural interpretation for the RDCM/SRDCM framework. The previous fitting procedure remains useful as an empirical filtering device, especially on expanding windows. Here, however, we consider a fixed-horizon \(n\)-infill scheme, under which the diffusion block becomes structurally visible once sufficiently high-frequency observations are available after the first structural time point \(\tau\) for RDCM ($\tau_1$ for SRDCM). Our goal is to show that, under mild conditions, the diffusion blocks can be consistently identified on their visible part \footnote{Here, ``visible part'' refers to the portion of the diffusion block that is effectively activated at the observed times. In the RDCM, only \(\theta^{-}\) is visible for \(t<\tau\), whereas both \((\theta^{-},\theta^{+})\) are visible for \(t\ge \tau\); see Assumption~\ref{ass:tau_activation}. The same interpretation applies regime by regime in the SRDCM.}
% , together with the temporal labels attached to them. 
% Proofs are collected in Appendix~\ref{app:proofs_highfreq}.

\subsection{RDCM: identification of the visible diffusion block}
\label{subsec:rdcm_highfreq}
%\input{inputs/Lorenzo/n_infill_old_version}

% The aim of this section is twofold. First,
% we construct a sequence of contrast functions $\{M_n\}_{n\geq 1}$ having, for every $n\geq 1$, the same maximizers as the log-likelihood $\ell_n(\psi)$ in \eqref{eq:loglikSimpl}. Second, we study their convergence on the compact observation interval $[t_0,l]$, with $l<T-\delta$, in order to apply the consistency theorem for $M$-estimators in \cite{van1998asymptotic}, Chapter~5.
The aim of this section is twofold. We first replace the exact bridge log-likelihood \(\ell_n(\psi)\) in \eqref{eq:loglikSimpl} with a sequence of contrast functions \(\{M_n\}_{n\geq 1}\) having, for every \(n\geq 1\), the same maximizers. This reduction is motivated by the fixed-horizon \(n\)-infill scheme on the compact interval \([t_0,l]\), with \(l<T-\delta\), under which \(M_n\) provides the natural normalization of \(\ell_n\) for asymptotic analysis. We then study the convergence of \(\{M_n\}_{n\geq 1}\) in order to apply the consistency theorem for \(M\)-estimators in \citet[ Chapter~5]{van1998asymptotic}.

Before stating the first reduction step, we recall the economic meaning of the region $(T-\delta,T]$. Its mathematical role will become clear in the proof of Proposition~\ref{prop:rdcm_uniform}.

\begin{myremark}
The interval $(T-\delta,T]$ is the pre-transition decision region, in which the deadline $T$ is already perceived as nearly deterministic by most economic agents, who have adjusted their behavior accordingly in view of the expected transition at time $T$. See the end of Section~\ref{SectioSRCDM} for an extension of this interpretation in the SRDCM framework.
\end{myremark}
The relation between \(M_n(\psi)\) and \(\ell_n(\psi)\) is stated in Corollary~\ref{rel_M_l_n}. For each \(n\), the observation grid is
\begin{equation}
t_i=t_0+i\Delta_n,\qquad i=0,\dots,n,\qquad \Delta_n:=\frac{l-t_0}{n},
\label{eq:grid_hf_rdcm}
\end{equation}
with \(l<T-\delta\). In the corollary below, \(M_n\) is written in terms of an observed sequence of realizations \(\{x_{t_i}\}_{i=1}^n\), using the same notation as in Section~\ref{Estim_RDCM_SUB}.  In Proposition~\ref{prop:rdcm_uniform}, by contrast, \(\{M_n\}_{n\ge1}\) will be viewed as a sequence of random contrast functions, since each \(M_n\) depends on the sample \(\{X_{t_i}\}_{i=1}^n\) generated by \eqref{RDCM}.

% \begin{mycorollary}[Relation between \(M_n(\psi)\) and \(\ell_n(\psi)\)]\label{rel_M_l_n}
\begin{mycorollary}\label{rel_M_l_n}
Define
\begin{equation}
M_n(\psi)
:=
\frac{1}{n}\log\!\Big(\frac{f_{T\mid t_n}(0;\psi)}{f_{T\mid t_0}(0;\psi)}\Big)
-\frac{1}{2n}\sum_{i=1}^n \log\!\Big(\frac{v_{t_{i}\mid t_{i-1}}}{\Delta_n}\Big)
-\frac{1}{2n}\sum_{i=1}^n \frac{(x_{t_i}-m_{t_i\mid t_{i-1}})^2}{v_{t_i\mid t_{i-1}}},
\label{observed_M_n_0}
\end{equation}
with $m_{t_i\mid t_{i-1}}$ and $v_{t_i\mid t_{i-1}}$ in \eqref{eq:mean} and \eqref{eq:var} respectively. Under Assumptions~\ref{rdcm_assumption} and \ref{ass:mle}, for any \(n\ge1\) and the observation scheme \eqref{eq:grid_hf_rdcm},
\begin{equation}
\arg \max_{\psi \in \Psi} \ell_n(\psi)
=
\arg \max_{\psi \in \Psi} M_n(\psi).
\label{equivalent_maximizers}
\end{equation}
\end{mycorollary}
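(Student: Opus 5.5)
The plan is to show that $M_n(\psi)$ is obtained from $\ell_n(\psi)$ by an affine transformation $M_n=\frac1n\ell_n+c_n$, with $c_n$ independent of $\psi$. A positive rescaling plus a $\psi$-free shift does not change the set of maximizers, and this gives \eqref{equivalent_maximizers}.

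The starting point is the telescopic representation \eqref{eq:loglikSimpl} of Proposition~\ref{prop:bridge_mle}(i), which holds under Assumptions~\ref{rdcm_assumption} and \ref{ass:mle}, on the grid \eqref{eq:grid_hf_rdcm}, since $t_n=l<T-\delta$:
\[
\ell_n(\psi)=\log f_{T\mid t_n}(0;\psi)-\log f_{T\mid t_0}(0;\psi)+\sum_{i=1}^n\log f_{t_i\mid t_{i-1}}(x_{t_i};\psi).
\]
The first two terms give exactly $n$ times the endpoint correction in \eqref{observed_M_n_0}.

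For the sum, I use that, by \eqref{eq:mean}--\eqref{eq:var}, the unconstrained transition of the linear SDE is $\mathcal N(m_{t_i\mid t_{i-1}},v_{t_i\mid t_{i-1}})$. Hence
\[
\log f_{t_i\mid t_{i-1}}(x_{t_i};\psi)=-\tfrac12\log(2\pi)-\tfrac12\log v_{t_i\mid t_{i-1}}-\frac{(x_{t_i}-m_{t_i\mid t_{i-1}})^2}{2v_{t_i\mid t_{i-1}}}.
\]
Writing $\log v_{t_i\mid t_{i-1}}=\log(v_{t_i\mid t_{i-1}}/\Delta_n)+\log\Delta_n$ and summing over $i$ gives
\[
\ell_n(\psi)=n\,M_n(\psi)-\tfrac n2\log(2\pi)-\tfrac n2\log\Delta_n .
\]
Since $\Delta_n=(l-t_0)/n$ depends only on the design, the additive constant is free of $\psi$. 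Dividing by $n>0$ preserves the argmax, so the two argmax sets coincide. Non-emptiness follows from Proposition~\ref{prop:bridge_mle}(ii), although it is not needed for the equality itself.

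The only point needing care is that all quantities are finite and well defined for every $\psi\in\Psi$, so that the identity holds pointwise on $\Psi$:
\begin{itemize}
\item $v_{t_i\mid t_{i-1}}>0$, because $g(T-s,\theta)>0$ on $[t_0,l]\subset[t_0,T-\delta]$ by Assumption~\ref{rdcm_assumption}(ii) together with Assumption~\ref{ass:tau_activation}, under which $s_{T,\tau}>0$ before $T$.
\item $v_{T\mid t}>0$ for the same reason, so the Gaussian densities $f_{T\mid t_i}(0;\psi)$ are strictly positive and their logarithms are finite.
\end{itemize}
This is the main (mild) obstacle, and it is already handled inside the proof of Proposition~\ref{prop:bridge_mle}.
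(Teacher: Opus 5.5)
Your proposal is correct and follows the paper's own argument: combining the telescopic form \eqref{eq:loglikSimpl} with the explicit Gaussian log-densities gives $M_n(\psi)=\frac{1}{n}\ell_n(\psi)+\frac{1}{2}\log(2\pi\Delta_n)$. This is a positive rescaling plus a $\psi$-free shift, so the two argmax sets coincide.

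One small overstatement in your side remark: Assumption~\ref{ass:tau_activation} does not by itself force $s_{T,\tau}>0$ on $[t_0,T)$, because the maps $f_{\tau^\pm}$ take values in $\mathbb{R}_+$, which must contain $0$ since $f_{\tau^+}(0)=0$. This does not affect your argument, since positivity of the variances is already presupposed by $\ell_n$ being a real-valued map on $\Psi$ in Proposition~\ref{prop:bridge_mle}.
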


\begin{myproof}
By \eqref{eq:loglikSimpl}, we have  $M_n(\psi) =
\frac{1}{n}
\ell_n(\psi)
+\frac{1}{2}\log(2\pi\Delta_n)
$, with standard manipulations.
 Hence, \(M_n\) and \(\ell_n\) have the same maximizers. 
\end{myproof}
While \(\ell_n\) and \(M_n\) have the same maximizers, under the \(n\)-infill scheme \(\Delta_n\to0\) the term \(\frac{1}{2}\log(2\pi\Delta_n)\) diverges to \(-\infty\); for this reason, \(M_n\) is the natural object for the asymptotic analysis below.
 
From this point onward, the contrast functions are viewed as a sequence of random functions, since for each \(n\) they depend on the sample \(\{X_{t_i}\}_{i=1}^n\), where \(X_{t_0}\) is given and the subsequent observations are recursively generated by the RDCM dynamics on the interval \([t_0,l]\). 

As anticipated above, the asymptotic identification on \([t_0,l]\) is driven by the diffusion block. Accordingly, throughout this section the drift parameters \((a,b)\) are treated as fixed, and the contrast is written as a function of \(\theta\in\Theta\) only. This convention is adopted for simplicity.

Once \((a,b)\) are fixed, we use the following notational convention for the one-step moments:
\[
m_{t_i\mid t_{i-1}}:=m_{t_i\mid t_{i-1}}(X_{t_{i-1}}),
\qquad
v_{t_i\mid t_{i-1}}:=v_{t_i\mid t_{i-1}}(\theta).
\]
Depending on the context, we shall use either the compact notation \(m_{t_i\mid t_{i-1}},\,v_{t_i\mid t_{i-1}}\) or the explicit notation \newline \(m_{t_i\mid t_{i-1}}(X_{t_{i-1}}),\,v_{t_i\mid t_{i-1}}(\theta)\). Under this convention, the contrast function \eqref{observed_M_n_0} takes the form
\begin{equation}
M_n(\theta)
:=
\frac{1}{n} R_n(\theta)
-\frac{1}{2n}\sum_{i=1}^n \log\!\left(\frac{v_{t_i\mid t_{i-1}}(\theta)}{\Delta_n}\right)
-\frac{1}{2n}\sum_{i=1}^n \frac{\Big(X_{t_i}-m_{t_i\mid t_{i-1}}(X_{t_{i-1}})\Big)^2}{v_{t_i\mid t_{i-1}}(\theta)},
\label{eq:Mn_rdcm_main}
\end{equation}
where
\begin{equation}
R_n(\theta)
:=
\log\!\Big(\frac{f_{T\mid t_n}(0;\theta)}{f_{T\mid t_0}(0;\theta)}\Big).
\label{eq:Rn_rdcm_main}
\end{equation}

We now study the consistency of the estimators $\hat{\theta}_n \in \arg \underset{\theta\in \Theta}{\max} M_{n}\left(\theta\right)$ as $n\rightarrow +\infty$. To this end, we impose the following assumptions.

% \begin{myassumption}[RDCM high-frequency regularity]
\begin{myassumption}
\label{ass:rdcm_hf}
The following conditions hold:
\begin{enumerate}
    \item $\Theta$ is compact and $\theta_0\in\mathrm{int}(\Theta)$ where $\theta_0$ denotes the true diffusion parameter vector;
    \item (\emph{gap condition}) the observation horizon satisfies $l\leq T-\delta$  
    for some $\delta>0$;
    \item (\emph{uniform ellipticity}) there exists $c_g>0$ such that
    \[
    g^2(T-t,\theta)\geq c_g,
    \qquad \forall (t,\theta)\in [t_0,T-\delta]\times\Theta;
    \]
    \item the map $(t,\theta)\mapsto g(T-t,\theta)$ is:
    
    \begin{enumerate}
    \item[a.] continuous on
    $[t_0,T]\times\Theta$; \item[b.] uniformly Lipschitz on $[t_0,T-\delta]\times\Theta$ in both arguments.
\color{black}
\end{enumerate}
    \item the drift block is fixed and satisfies the regularity conditions ensuring the
    existence of the corresponding linear bridge moments.
\end{enumerate}
\end{myassumption}
Differentiability of \(g(T-t,\theta)\) at \(t=\tau\) is not required, since the change of slope at the structural time point may destroy it; uniform Lipschitz continuity is sufficient below.

% \begin{myassumption}[Identifiability of the diffusion block in the RDCM]
\begin{myassumption}
\label{ass:rdcm_ident}
Define
\begin{equation}
M_\infty(\theta)
:=
-\frac{1}{2(l-t_0)}
\int_{t_0}^{l}
\left\{
\log\!\big(g^2(T-u,\theta)\big)
+
\frac{g^2(T-u,\theta_0)}{g^2(T-u,\theta)}
\right\}\,du.
\label{eq:Minfty_rdcm_main}
\end{equation}
Assume that $M_\infty$ admits a unique maximizer at $\theta_0$ $\forall l\in\left(t_0,T-\delta\right]$.\footnote{For the sake of brevity, Assumption~\ref{ass:rdcm_ident} is stated directly in terms of the limit contrast function $M_\infty(\theta)$. However, since the integrand in \eqref{eq:Minfty_rdcm_main} involves the U-shaped function $y \mapsto \log(y) + c/y$ for each fixed $c>0$, this requirement could be derived from a primitive assumption on $g(T-u, \theta)$; for instance, it is satisfied if the map $\theta \mapsto g(T-u, \theta)$ is injective for almost all $u \in (t_0, l]$.}
\end{myassumption}

The uniqueness condition is understood on the visible part of the diffusion block over \([t_0,l]\): on \(\theta^{-}\) if \(l\le\tau\), and on $(\theta^{-},\theta^{+})$ if \(l>\tau\).

% \begin{myproposition}[Consistency on the visible interval]\label{prop:rdcm_uniform}
\begin{myproposition}\label{prop:rdcm_uniform}
Suppose that Assumptions \ref{ass:rdcm_hf} and \ref{ass:rdcm_ident} hold, and that
\(l>\tau\), where $l$ is the last time in grid \eqref{eq:grid_hf_rdcm}. Then there exists a deterministic continuous function
\(M_\infty:\Theta\to\mathbb{R}\) such that
\begin{equation}
\sup_{\theta\in\Theta}\big|M_n(\theta)-M_\infty(\theta)\big|
\overset{\mathbb P}{\longrightarrow}0.
\label{eq:ConvergUnPofMn_for RDCM}
\end{equation}
If, in addition, \(M_\infty\) admits a unique maximizer \(\theta_0\), then any measurable maximizer
\[
\hat\theta_n\in\arg\max_{\theta\in\Theta}M_n(\theta)
\]
satisfies
\[
\hat\theta_n\overset{\mathbb P}{\longrightarrow}\theta_0.
\]
\end{myproposition}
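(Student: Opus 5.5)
The argument has two parts. First we show that $M_n$ in \eqref{eq:Mn_rdcm_main} converges uniformly in probability to the $M_\infty$ of \eqref{eq:Minfty_rdcm_main}. Then we invoke the consistency theorem for $M$-estimators in \citet[Theorem~5.7]{van1998asymptotic}. For the second part we need two inputs: uniform convergence, and well-separatedness of the maximizer. Well-separatedness follows because $M_\infty$ is continuous on the compact $\Theta$ and, by Assumption~\ref{ass:rdcm_ident}, has a unique maximizer $\theta_0$. Continuity of $M_\infty$ comes from dominated convergence: by Assumption~\ref{ass:rdcm_hf}, $g^2$ is continuous, bounded above on $[t_0,l]\times\Theta$, and bounded below by $c_g$. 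So all the work lies in the uniform convergence, and we split $M_n$ into its three terms.

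\emph{Endpoint term.} We show $\sup_\theta |R_n(\theta)|/n\to0$ in probability. The quantity $f_{T\mid t}(0;\theta)$ is a Gaussian density with variance $v_{T\mid t}(\theta)$ and mean $m_{T\mid t}$, which is affine in $X_t$. By the gap condition, $v_{T\mid t}(\theta)\ge \int_{T-\delta}^T e^{-2a(T-s)}g^2\,ds$. This lower bound is strictly positive uniformly in $\theta$ by continuity and compactness, using $g>0$ on $(0,\infty)$; it is here that the gap $\delta$ enters. The variance is also bounded above. Hence $|\log f_{T\mid t}(0;\theta)|\le C(1+X_t^2)$ uniformly in $\theta$ and $t\in[t_0,l]$. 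Since $\sup_{t\le l}|X_t|$ is a.s. finite, $R_n=O_{\mathbb P}(1)$ uniformly, and dividing by $n$ kills it.

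\emph{Log-variance term.} This term is deterministic. By the uniform Lipschitz property,
\[
v_{t_i\mid t_{i-1}}(\theta)=\Delta_n\, g^2(T-t_{i-1},\theta)\,(1+O(\Delta_n)),
\]
uniformly in $i$ and $\theta$. Because $g^2\ge c_g$, the logarithms are Lipschitz in their argument, so $-\frac1{2n}\sum_i\log(v/\Delta_n)$ is a Riemann sum converging uniformly to $-\frac{1}{2(l-t_0)}\int_{t_0}^l\log g^2(T-u,\theta)\,du$.

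\emph{Quadratic term.} Write $X_{t_i}-m_{t_i\mid t_{i-1}}=\varepsilon_i:=\int_{t_{i-1}}^{t_i}e^{-a(t_i-s)}g(T-s,\theta_0)\,dW_s$. These are independent Gaussians with variance $v_{t_i\mid t_{i-1}}(\theta_0)$. The term becomes $-\frac1{2n}\sum_i w_i(\theta)\,\varepsilon_i^2/\Delta_n$, where $w_i(\theta)=\Delta_n/v_{t_i\mid t_{i-1}}(\theta)$ is deterministic, uniformly bounded, and uniformly close to $1/g^2(T-t_{i-1},\theta)$. Its expectation is a Riemann sum converging uniformly to $-\frac{1}{2(l-t_0)}\int g^2(\theta_0)/g^2(\theta)\,du$. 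For the centred part $\frac1n\sum_i w_i(\theta)(\varepsilon_i^2/\Delta_n-\mathbb E[\cdot])$, pointwise convergence follows from a variance bound of order $1/n$. Uniformity is where we expect the main technical effort, and we plan to handle it by equicontinuity. The weights satisfy $|w_i(\theta)-w_i(\theta')|\le L|\theta-\theta'|$ uniformly in $i$, so the centred sum is Lipschitz in $\theta$ with random constant $L\frac1n\sum_i(\varepsilon_i^2/\Delta_n+\mathbb E[\cdot])=O_{\mathbb P}(1)$. Pointwise convergence plus stochastic equicontinuity on the compact $\Theta$ then gives uniform convergence, via a finite-net argument. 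Combining the three pieces gives \eqref{eq:ConvergUnPofMn_for RDCM}, and Theorem~5.7 of \citet{van1998asymptotic} yields $\hat\theta_n\overset{\mathbb P}{\to}\theta_0$. The hypothesis $l>\tau$ ensures that the post-$\tau$ parameters actually enter $M_\infty$, so that the uniqueness in Assumption~\ref{ass:rdcm_ident} concerns the full block $(\theta^-,\theta^+)$.
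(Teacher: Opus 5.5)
Your proof is correct. It shares the paper's skeleton: the three-term split of $M_n$, the gap-condition bound for the endpoint term, the Riemann-sum argument for the log-variance term, and the argmax theorem to finish.

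\textbf{Where the routes differ: the quadratic term.} The paper first replaces each innovation by $g(T-t_{i-1},\theta_0)\Delta W_i$ plus an $L^2$ remainder of order $\Delta_n$ (Lemmas~\ref{lem:prediction_error_rdcm} and \ref{lem:quadratic_expansion_rdcm}). It then works with the weights $g^2(T-t_{i-1},\theta_0)/g^2(T-t_{i-1},\theta)$ and i.i.d.\ $Z_i^2$. Finally it controls the centred sum by Chebyshev's inequality, using a second-moment bound that is uniform in $\theta$.

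You instead keep the exact innovations and the exact weights $\Delta_n/v_{t_i\mid t_{i-1}}(\theta)$, so no remainder lemmas are needed. You then obtain uniformity from the pointwise $O(1/n)$ variance, a random Lipschitz constant in $\theta$, and a finite net.

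\textbf{What each route buys.} Your equicontinuity step is a genuine gain. A bound on $\sup_\theta\mathbb E[(\cdot)^2]$ only yields $\sup_\theta\mathbb P(|\cdot|>\varepsilon)\to0$, not $\mathbb P(\sup_\theta|\cdot|>\varepsilon)\to0$. Your step supplies exactly what the paper's Chebyshev argument leaves implicit. In return, the paper's expansion-with-remainder format would absorb any extra drift of order $\Delta_n$ per step. Your exact computation has no such slack: it relies on the innovations being exactly independent $\mathcal N(0,v_{t_i\mid t_{i-1}}(\theta_0))$.

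\textbf{A caveat you share with the paper.} That exactness holds only under the unconditioned linear SDE, whereas the data are generated subject to $X_T=0$. The paper's Lemma~\ref{lem:prediction_error_rdcm} makes the same implicit identification. The repair is short:
\begin{itemize}
\item On $\sigma(X_s:s\le l)$ with $l\le T-\delta$, the bridge law has density $f_{T\mid l}(0;X_l)/f_{T\mid t_0}(0;x_{t_0})$ with respect to the unconditioned law.
\item This density is bounded, because $v_{T\mid l}$ is bounded away from zero.
\item Hence every convergence in probability you establish under the unconditioned law transfers directly to the bridge law.
\end{itemize}
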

\begin{myproof}
See Appendix~\ref{proof:rdcm_uniform}.
\end{myproof}

% \begin{myremark}[The case \(l\le \tau\)]\label{rem:rdcm_pre_tau}
\begin{myremark}\label{rem:rdcm_pre_tau}
If \(l\le\tau\), then by Assumption \ref{ass:tau_activation} the data-dependent part of
\(M_n(\theta)\) depends only on \(\theta^{-}\). More precisely, on \([t_0,l]\) one has
\[
g(T-t,\theta)=g(T-t,\theta^-),
\qquad t\in[t_0,l],
\]
so that the contrast reduces to a function of \(\theta^{-}\) only. We denote the corresponding reduced contrasts by
\(M_n(\theta^-)\) and \(M_\infty(\theta^-)\).
\end{myremark}

% \begin{mycorollary}[Consistency before the structural time point]\label{cor:rdcm_pre_tau}
\begin{mycorollary}\label{cor:rdcm_pre_tau}
For \(l\le\tau\), under the assumptions  \ref{ass:rdcm_hf} and \ref{ass:rdcm_ident}, which ensure that
\(M_\infty(\theta^-)\) admits a unique maximizer \(\theta^-_0\), then any measurable maximizer
$\hat\theta^-_n\in\arg\underset{\theta^-\in\Theta^-}{\max}M_n(\theta^-)$ satisfies
\[
\hat\theta^-_n\overset{\mathbb P}{\longrightarrow}\theta^-_0.
\]
\end{mycorollary}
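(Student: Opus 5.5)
The plan is to rerun the argument behind Proposition~\ref{prop:rdcm_uniform} on the reduced parameter space $\Theta^-$ and then invoke the consistency theorem for $M$-estimators in \citet[Theorem~5.7]{van1998asymptotic}.

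\emph{Reduction.} By Remark~\ref{rem:rdcm_pre_tau}, for $l\le\tau$ every one-step variance $v_{t_i\mid t_{i-1}}(\theta)$ with $t_i\le l$ depends only on $\theta^-$, since $g(T-u,\theta)=g(T-u,\theta^-)$ for $u\in[t_0,l]$. The only place where $\theta^+$ can enter $M_n$ is the endpoint term $\frac1n R_n(\theta)$ in \eqref{eq:Rn_rdcm_main}. This term involves $v_{T\mid t_n}$ and $v_{T\mid t_0}$, which integrate $g^2$ over $(\tau,T]$.

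\emph{Disposing of the endpoint term.} I will treat this term by fixing $\theta^+$ at an arbitrary value in its compact set, or equivalently by taking the supremum over $\theta^+$. I then show that $\sup_{\theta\in\Theta}\frac1n|R_n(\theta)|\to0$ in probability. The bound comes from three facts:
\begin{itemize}
\item By the gap condition, uniform ellipticity and continuity on the compact $[t_0,T]\times\Theta$, both $v_{T\mid t_n}$ and $v_{T\mid t_0}$ lie in a fixed interval $[c,C]$ with $c>0$, since each integrates $g^2\ge c_g$ over an interval of length at least $\delta$.
\item The bridge means are affine in $X_{t_0}$ and $X_{t_n}$, with coefficients bounded uniformly in $\theta$.
\item $\sup_{t\le l}|X_t|=O_{\mathbb P}(1)$.
\end{itemize}
Together these show the Gaussian log-densities are $O_{\mathbb P}(1)$ uniformly in $\theta$, so dividing by $n$ kills the term. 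What remains is the reduced contrast $M_n(\theta^-)$ plus a uniformly negligible remainder.

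\emph{Uniform convergence.} I will show $\sup_{\theta^-\in\Theta^-}|M_n(\theta^-)-M_\infty(\theta^-)|\overset{\mathbb P}{\to}0$, where $M_\infty(\theta^-)$ is \eqref{eq:Minfty_rdcm_main} with $g(T-u,\theta^-)$.
\begin{itemize}
\item \emph{Log-variance sum.} By the uniform Lipschitz property and ellipticity, $v_{t_i\mid t_{i-1}}(\theta)/\Delta_n=g^2(T-t_{i-1},\theta)+O(\Delta_n)$ uniformly in $(i,\theta)$. Hence the log-variance sum is a Riemann sum converging uniformly to $\frac{1}{l-t_0}\int\log g^2$, after rescaling by $n=(l-t_0)/\Delta_n$.
\item \emph{Quadratic sum.} Write $X_{t_i}-m_{t_i\mid t_{i-1}}=\varepsilon_i$. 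Under the true law, $\varepsilon_i$ is exactly Gaussian with variance $v_{t_i\mid t_{i-1}}(\theta_0)$: the drift block is fixed and correctly specified, and I take the unconditional forward dynamics here; the bridge tilt is absorbed by the same endpoint argument via the density-ratio telescoping of Proposition~\ref{prop:bridge_mle}. The quadratic sum is then $\frac1n\sum \varepsilon_i^2/v_i(\theta)$. Its mean is a Riemann sum for $\frac{1}{l-t_0}\int g^2(\theta_0)/g^2(\theta)$, and its variance is $O(1/n)$ uniformly in $\theta$.
\item \emph{Uniformity in $\theta$.} The sum factors as $\sum w_i(\theta)\,\varepsilon_i^2/v_i(\theta_0)$ with equicontinuous deterministic weights. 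This gives stochastic equicontinuity, so pointwise convergence upgrades to uniform convergence on the compact $\Theta^-$.
\end{itemize}

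\emph{Conclusion.} Assumption~\ref{ass:rdcm_ident} gives a unique maximizer $\theta^-_0$ of $M_\infty(\theta^-)$. $M_\infty$ is continuous on a compact set, so the maximizer is well separated. The consistency theorem then yields $\hat\theta^-_n\overset{\mathbb P}{\to}\theta^-_0$.

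\emph{Main obstacle.} The delicate step is the bridge versus forward law under which the $\varepsilon_i$ are computed, together with the uniform control of $\frac1n R_n$. I would first try a change of measure on $\mathcal F_l$: the bridge law has density $f_{T\mid l}(0)/f_{T\mid t_0}(0)$ relative to the forward law. By the gap condition this density is bounded and bounded away from zero, so convergence in probability transfers from the forward law to the bridge law.
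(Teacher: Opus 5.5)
Your proposal is correct and follows essentially the paper's route. The paper observes that for $l\le\tau$ the coefficient $g$ on $[t_0,l]$, and hence $M_\infty$, depends only on $\theta^-$, then reruns Proposition~\ref{prop:rdcm_uniform} (endpoint term $O_p(1/n)$, uniform Riemann-sum limits of the log-variance and quadratic terms) on the reduced block and invokes the argmax theorem; your handling of the $\theta^+$-dependence of $R_n$, the forward-versus-bridge law, and uniformity via stochastic equicontinuity is more explicit than the paper's.

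One small correction: the likelihood ratio $f_{T\mid l}(0;X_l)/f_{T\mid t_0}(0;x_{t_0})$ is bounded above but not bounded away from zero, since it decays as $|X_l|\to\infty$. Only the upper bound, or indeed mere absolute continuity, is needed to transfer convergence in probability to the bridge law, so the argument stands.
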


\begin{myproof}
See Appendix~\ref{proof:rdcm_pre_tau}.
\end{myproof}

\subsection{SRDCM: fixed-horizon identification of the diffusion blocks}
\label{subsec:srdcm_highfreq}

% We now move from the empirical daily-grid HMM used for filtering to a fixed-horizon continuous-time embedding of the same switching specification. For this reason, we consider below a piecewise constant version $\left\{S^{\Delta_n}_{t}\right\}_{t\in\left[t_0 T_1-\delta\right]}$\footnote{We recall that $T_1$ is the earliest perceived Transition Time} 
% of the Hidden Markov model in \ref{HMM_Switching} and its natural limit \textit{Continuous Time Markov Chain} (CTMC) process $\left\{S_{t}\right\}_{t\in\left[t_0 T_1-\delta\right]}$. 
% As done in Section \ref{subsec:rdcm_highfreq}, we construct an $M_n$ - contrast function that is the natural normalized version of the \textit{direct}-loglikelihood of the SRDCM in Definition \ref{Def_SRDCM_MODEL}. This new contrast $M_{n}$ 
% is the generalization of that in \eqref{eq:Mn_rdcm_main} for a SRDCM context. We show, under a set of assumption that the consistency results in Section \ref{subsec:rdcm_highfreq} can extended for  $j$-regime "visible" diffusion blocks. Based on We conclude with the introduction of a criterion for the identification of the perceived Transition Dates $\left\{T_{j}\right\}$ and we suggest an operation monitoring rule that distingues when the fitting strategy  or alternatively the consistency criterion presented here, are most appropriate approach for measuring the perception to Transition of the economic agents. 

We now move from the empirical daily-grid HMM used for filtering to a fixed-horizon continuous-time embedding of the same switching specification. To this end, we consider the piecewise-constant process\footnote{Here \(T_1\) denotes the earliest perceived transition date.} \(\{S_t^{\Delta_n}\}_{t\in[t_0,T_1-\delta]}\) associated with the hidden Markov model in \eqref{HMM_Switching}, together with its natural continuous-time limit \(\{S_t\}_{t\in[t_0,T_1-\delta]}\), modeled as a Continuous-Time Markov Chain (CTMC).
As in Section~\ref{subsec:rdcm_highfreq}, we introduce an \(M_n\)-contrast as the normalized version of the direct log-likelihood of the SRDCM in Definition~\ref{Def_SRDCM_MODEL}. This extends \eqref{eq:Mn_rdcm_main} to the switching framework. Under suitable assumptions, we show that the consistency results of Section~\ref{subsec:rdcm_highfreq} extend to the visible diffusion blocks in a finite number of regimes. 

Without loss of generality, we consider only the two-regime case and write the gap condition as \(l<T_1-\delta\), with \(T_1<T_2\). For each \(n\), we consider the grid in \eqref{eq:grid_hf_rdcm} and define a two-state piecewise-constant Markov chain \(\{S_{t_i}^{\Delta_n}\}_{i=0}^{n-1}\) with transition matrix
\begin{equation}\label{TransitionProbPiecewiseConstant}
\mathbb{P}^{\Delta_n}=e^{Q\Delta_n}
\qquad \text{where} \qquad
Q = \begin{pmatrix}
q_{11} & q_{12} \\
q_{21} & q_{22}
\end{pmatrix}, \quad q_{ij} > 0.
\end{equation}
Specifically, for our purposes, this process is extended to
\begin{equation}
\{S_t^{\Delta_n}\}_{t\in[t_0,T_1-\delta]}, \qquad
S_t^{\Delta_n}:=S_{t_{i-1}}^{\Delta_n},\qquad t\in[t_{i-1},t_i).
\label{eq:piecewise_switch_main}
\end{equation}
On the same interval, we define its homogeneous CTMC limit
\begin{equation}
\{S_t\}_{t\in[t_0,T_1-\delta]},
\label{eq:switch_main}
\end{equation}
with the same generator \(Q\) used in \eqref{TransitionProbPiecewiseConstant}. We stress that, while the process in \eqref{eq:piecewise_switch_main} can change regime only at the grid points in \eqref{eq:grid_hf_rdcm}, the process in \eqref{eq:switch_main} can change regime finitely many times over \([t_0,T_1-\delta]\).\footnote{We do not develop this convergence here, since it is classical and not needed for the arguments below.} Throughout this subsection, the switching process is assumed to be exogenous, namely independent of the Brownian motion driving \(X_t\).

The first step is to properly define the set of visible parameters $\Theta$ for the 2-regime SRDCM model, which reads:
\begin{equation}
    \Theta=\Theta_1\times \Theta_2
\label{SRDCM_vector_space_visible}
\end{equation}
% where $\Theta_j$ is the usual set of diffusion parameters $j$-RDCM. 
where $\Theta_j$ is the usual set of diffusion parameters in a RDCM model for a candidate transition date $T_j$. 
For each regime, we assume that a structural time point \(\tau_j\) is given and that the corresponding decomposition \(\theta_j=(\theta_j^-,\theta_j^+)\) holds.
In the following, we consider fixed $b_1, b_2$,
% drift parameters $\left\{b_j\right\}_{j=1}^2$ 
and $Q$.\footnote{We recall that the consistency of drift parameters and $Q$ requires that the final observation $l$ tends to $+\infty$, which is not structurally our case, as discussed previously.}
 We denote by \(\theta_0=(\theta_{0,1},\theta_{0,2})\) the vector of true diffusion parameters, assumed to be an interior point of \(\Theta\). Finally, for each regime \(j\in\{1,2\}\), we define:
\begin{equation}
\Psi_j(t,\theta_j)
:=
\log\!\big(g_j^2(T_j-t,\theta_j)\big)
+
\frac{g_j^2(T_j-t,\theta_{0,j})}{g_j^2(T_j-t,\theta_j)},
\qquad t\in\left[t_0,T_1-\delta\right].
\label{eq:Psi_switch_main}
\end{equation}
The restriction to \([t_0,T_1-\delta]\) is motivated by the \emph{sequential model updating strategy} introduced in Section~\ref{SectioSRCDM}. For the sake of brevity, all definitions below are formulated on the observation interval \([t_0,l]\), with \(l<T_1-\delta\); the corresponding extensions to any admissible subinterval of \([t_0,T_1-\delta]\) are immediate.

Let $D([t_0,l],\{1,2\})$ be the Skorokhod space and let $S_f([t_0,l])\subset D([t_0,l],\{1,2\})$ be the set of c\`adl\`ag paths with a finite number of regime changes.
% finitely many jumps. 
For a path \(s\in S_f([t_0,l])\), let \(N(s)\) denote the number of switches and write
\begin{equation}
    t_0=\kappa_0(s)<\kappa_1(s)<\cdots<\kappa_{N(s)}(s)<\kappa_{N(s)+1}(s)=l
\end{equation}
for its switch times $\left\{\kappa_{r}\left(s\right)\right\}_{r=1}^{N\left(s\right)}$. On each interval \([\kappa_r(s),\kappa_{r+1}(s))\), the path is constant, and its value is denoted by \(j_r(s)\in\{1,2\}\). Given the grid \eqref{eq:grid_hf_rdcm}, define the left-endpoint approximation\footnote{Both \(s_t^{\Delta_n}\) and \(s_t\) are viewed as path-valued functions on \([t_0,l]\) taking values in \(\{1,2\}\).}
\begin{equation}
s_t^{\Delta_n}:=s(t_{i-1}),\qquad t\in[t_{i-1},t_i).
\label{eq:left_endpoint_switch_main}
\end{equation}
For a fixed discretized path $s^{\Delta_n}$,  we introduce the rescaled log-Gaussian ``bridge'' density $\log \tilde f_{t_i\mid t_{i-1},j}\left(\theta\right)$ given the $j$-regime that reads:
\begin{equation}
\log \tilde f_{t_i\mid t_{i-1},j}\left(\theta\right)= \tilde{R}_{i,j}\left(\theta\right)-\frac{1}{2} \log\left(\frac{v_{t_i\mid t_{i-1},j}\left(\theta\right)}{\Delta_n}\right)-\frac{1}{2}\frac{\Big(X_{t_i}-m_{t_{i}\mid t_{i-1},j}\left(X_{t_{i-1}}\right)\Big)^2}{v_{t_i\mid t_{i-1},j}\left(\theta\right)}
\label{eq:rescaled_bridge_dens_my0}
\end{equation}
where $\tilde{R}_{i,j}\left(\theta\right):=\log \frac{f_{T_j\mid t_i}\left(0;\theta_j\right)}{f_{T_j\mid t_{i-1}}\left(0;\theta_j\right)}$ while $v_{t_i\mid t_{i-1},j}\left(\theta\right)$  and $m_{t_{i}\mid t_{i-1},j}\left(X_{t_{i-1}}\right)$ are the regime-wise versions of  \eqref{eq:var} and \eqref{eq:mean}.
Using \eqref{eq:rescaled_bridge_dens_my0}, we define the conditional path-wise contrast $M_n(\theta\mid s^{\Delta_n})$:\footnote{When necessary, we use $M_n(\theta\mid s):=M_n(\theta\mid s^{\Delta_n})$ to remark that the contrast is constructed on the left-end point approximation of $s$}
\begin{equation}
M_n(\theta\mid s^{\Delta_n})
:=
\frac{1}{n}\sum_{i=1}^n
\log
\tilde f_{t_i\mid t_{i-1},s^{\Delta_n}_{t_{i-1}}}
\big(X_{t_i}\mid X_{t_{i-1}};\theta_{s^{\Delta_n}_{t_{i-1}}}\big).
\label{eq:conditional_contrast_switch_main}
\end{equation}
For $s\in S_f([t_0,l])$, define the path-wise limit contrast
\begin{equation}
M_\infty(\theta\mid s)
:=
-\frac{1}{2(l-t_0)}
\sum_{r=0}^{N(s)}
\int_{\kappa_r(s)}^{\kappa_{r+1}(s)}
\Psi_{j_r(s)}(u,\theta_{j_r(s)})\,du.
\label{eq:Uinfty_path_switch_main}
\end{equation}
Using the log-sum-exp approach, the observed contrast is
\begin{equation}
M_n(\theta)
:=
\frac{1}{n}\log\int_{S_f([t_0,l])}
\exp\!\big(n\,M_n(\theta\mid s^{\Delta_n})\big)\,\Pi_Q(ds),
\label{eq:observed_switch_contrast_main}
\end{equation}
where $\Pi_Q$ is the law of the continuous-time Markov chain $S$ on $S_f([t_0,l])$.
Finally, define
\begin{equation}
M_\infty(\theta):=
\sup_{s\in S_f([t_0,l])} M_\infty(\theta\mid s).
\label{eq:Minfty_switch_main}
\end{equation}

% \begin{myassumption}[Switching high-frequency regularity and identifiability]
\begin{myassumption}
\label{ass:srdcm_hf}
The following conditions hold:
\begin{enumerate}
    \item the latent regime process $\{S_t\}_{t\in[t_0,l]}$ is a homogeneous continuous-time
    Markov chain on $\{1,2\}$, independent of the Brownian motion, with generator
    $Q=(q_{hk})_{h,k=1,2}$ satisfying
    \[
    0<\underline q \le q_{hk}\le \bar q<\infty,\qquad h\neq k;
    \]
     \item for each regime $j\in\{1,2\}$, Assumption~\ref{ass:rdcm_hf} holds regime-wise, meaning that all requirements are satisfied by $g_1$ and $g_2$ uniformly in $j$ (i.e., with the same underlying constants);
    % \item \textcolor{red}{for each regime $j\in\{1,2\}$, Assumption~\ref{ass:rdcm_hf} holds regime-wise, meaning that all requirements are satisfied by $g_1$ and $g_2$ with constants independent of $j$;}
    \item the observed limit contrast $M_\infty$ admits a unique maximizer on the relevant
    visible block.
\end{enumerate}
\end{myassumption}

\begin{mytheorem}
\label{thm:srdcm_uniform}
Under Assumptions~\ref{ass:srdcm_hf},
\[
\sup_{\theta\in\Theta^{\times 2}}
|M_n(\theta)-M_\infty(\theta)|
\xrightarrow{\mathbb P}0,
\qquad n\to\infty.
\]
\end{mytheorem}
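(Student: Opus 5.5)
The strategy is a Laplace-type argument for the log-sum-exp contrast \eqref{eq:observed_switch_contrast_main}. Conditional on a fixed path, the problem reduces to the RDCM case; the outer $\frac1n\log\int\exp(n\,\cdot)\,\Pi_Q(ds)$ then converts into a supremum over paths. The proof has three steps.

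\emph{Step 1 (path-wise uniform convergence).} For a fixed $s\in S_f([t_0,l])$, the conditional contrast $M_n(\theta\mid s^{\Delta_n})$ in \eqref{eq:conditional_contrast_switch_main} splits into at most $N(s)+1$ RDCM-type blocks. On each block, the argument used for Proposition~\ref{prop:rdcm_uniform} (Appendix~\ref{proof:rdcm_uniform}) applies regime-wise under Assumption~\ref{ass:srdcm_hf}(2):
\begin{itemize}
\item the telescopic terms $\tilde R_{i,j}$ sum, by \eqref{RegimewisetelescopicTerm}, to at most $N(s)+1$ log-ratios of Gaussian densities. By the gap condition these are bounded uniformly in $\theta$, so after division by $n$ they vanish.
\item Using $v_{t_i\mid t_{i-1},j}(\theta)/\Delta_n=g_j^2(T_j-t_{i-1},\theta_j)+O(\Delta_n)$ uniformly (Lipschitz continuity and ellipticity), the log-variance terms are Riemann sums converging to $\int\log g_j^2\,du$.
\item The quadratic terms converge to $\int g_j^2(\cdot,\theta_{0,j})/g_j^2(\cdot,\theta_j)\,du$ by the law of large numbers for squared Brownian increments. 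Note that the data are generated by the true path $S$, which is a.s. piecewise constant with finitely many jumps. A mismatch between the true regime and the path regime affects only the drift, which is $O(\Delta_n)$, while the diffusion is driven by the true regime. I will therefore write $\Psi$ accordingly, taking the paper's $g_j^2(\cdot,\theta_{0,j})$ as the true-regime volatility along $S$. The intervals straddling a jump of either path number at most $N(s)+N(S)$, so they contribute $O(1/n)$.
\end{itemize}
Uniformity in $\theta$ follows from equicontinuity, since the $\theta$-Lipschitz bound of $g$ together with $g^2\ge c_g$ gives a stochastic equicontinuity modulus, and compactness of $\Theta$ does the rest. The conclusion is
\[
\sup_\theta|M_n(\theta\mid s)-M_\infty(\theta\mid s)|\to0
\]
in probability, and this holds uniformly over paths with $N(s)\le K$ for each fixed $K$. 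The reason is that the error bounds depend on $s$ only through $N(s)$.

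\emph{Step 2 (upper bound).} All quadratic and log terms are dominated by the same envelope: $\Psi_j$ is bounded uniformly by ellipticity and continuity, and $\frac1n\sum (X_{t_i}-m)^2/\Delta_n=O_{\mathbb P}(1)$. Hence $M_n(\theta\mid s)\le C_n$ uniformly in $s$ and $\theta$, with $C_n$ tight. For paths with $N(s)\le K$, Step 1 gives
\[
M_n(\theta\mid s)\le M_\infty(\theta)+o_{\mathbb P}(1).
\]
Paths with $N(s)>K$ carry $\Pi_Q$-mass at most $e^{-cK\log K}$ by the Poisson-type tail guaranteed by the bound $\bar q$. Choosing $K=K_n\to\infty$ slowly, and noting that the errors in Step 1 are $O(K_n/n)+o_{\mathbb P}(1)$, yields
\[
\limsup\sup_\theta\bigl(M_n(\theta)-M_\infty(\theta)\bigr)\le0.
\]

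\emph{Step 3 (lower bound).} This step I expect to be the main obstacle. For fixed $\theta$, fix $\varepsilon>0$ and pick $s^\ast$ with $M_\infty(\theta\mid s^\ast)\ge M_\infty(\theta)-\varepsilon$. Since $M_\infty(\cdot\mid s)$ depends continuously on the switch times, there is a neighbourhood $U$ of $s^\ast$ consisting of paths with the same number of switches and switch times within $\eta$. On $U$ we have $M_\infty(\theta\mid s)\ge M_\infty(\theta)-2\varepsilon$, and $\Pi_Q(U)\ge p(\eta,N(s^\ast))>0$ because $\underline q>0$. Then
\[
M_n(\theta)\ge \inf_{s\in U}M_n(\theta\mid s)+\tfrac1n\log\Pi_Q(U),
\]
and the second term vanishes. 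The difficulty is that $N(s^\ast)$ depends on $\theta$. To handle this I restrict the supremum to $N(s)\le K$: since $\Psi$ is bounded, block integrals are continuous in the switch times, and a compactness argument shows that the approximation error over $N(s)\le K$ tends to zero uniformly in $\theta$. For each fixed $K$ the neighbourhood masses are bounded below uniformly, and $\theta$-uniformity follows from a finite $\varepsilon$-net of $\Theta$ combined with the equicontinuity of Step 1. Steps 2 and 3 together give the claim.
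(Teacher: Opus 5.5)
\textbf{The gap is in your upper bound (Step~2); your lower bound is fine.} Step~3 is essentially the paper's argument: a near-optimal path, a cylinder neighbourhood of it with positive $\Pi_Q$-mass, a finite net in $\theta$, a Lipschitz modulus, and $\frac1n\log\Pi_Q(C)\to0$.

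In Step~2, after truncating at $N(s)\le K_n$, the high-switching paths contribute at most
\[
\frac1n\log\Bigl(e^{nC_n}\,\Pi_Q\bigl(N(S)>K_n\bigr)\Bigr)\le C_n-c\,\frac{K_n\log K_n}{n}.
\]
If $K_n\to\infty$ slowly, the correction vanishes and this piece is bounded only by $C_n$. But $C_n-M_\infty(\theta)$ is tight, not $o_{\mathbb P}(1)$, and no envelope that is uniform over all paths can do better. The left-endpoint discretization lets a path re-choose the regime on every cell according to the realized $Z_i^2$. So $\sup_s M_n(\theta\mid s^{\Delta_n})$ behaves like $\frac1n\sum_i\max_j(\cdot)$ of the per-cell terms. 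Its limit exceeds $M_\infty(\theta)$, where the maximum over regimes is taken only after averaging over the increments, by a fixed amount whenever the two regimes' volatilities differ. Hence Step~2 as written only yields $M_n(\theta)\le C_n+o_{\mathbb P}(1)$.

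For the truncation to work, $K_n\log K_n$ must be at least of order $n$, i.e.\ $K_n$ at least of order $n/\log n$. At that level Step~1 no longer covers the retained paths. It is proved for fixed $K$ and does not control the regime-stable error uniformly over paths whose number of switches grows with $n$. The natural bound is $N(s)$ times the maximal fluctuation of a block sum $\sum_{i\in B}w_i(\theta)(Z_i^2-1)\Delta_n$. That is of order $N(s)/\sqrt n$, which diverges when $N(s)\approx n/\log n$, so you would need a sharper maximal inequality over adaptively chosen blocks.

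\textbf{How the paper handles this, and what to watch for.} The paper avoids truncation altogether. Lemma~\ref{lem:global_positive_comparison} asserts one bound valid for every path,
$\sup_\theta[M_n(\theta\mid s^{\Delta_n})-M_\infty(\theta\mid s)]_+\le r_n+C_\star N(s)/n$, with $r_n=o_{\mathbb P}(1)$ independent of $s$. The switch-cell part uses that, written in bridge form, each rescaled log-density has a deterministically bounded positive part. This holds because the Gaussian kernel enters with a minus sign and, by Lemma~\ref{Factor} and the gap condition, the rescaled bridge variance is bounded below. Integrating against $\Pi_Q$ then costs only $\frac1n\log E_{\Pi_Q}[e^{C_\star N(S)}]\to0$. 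This is the ingredient your Step~2 lacks: an excess over $M_\infty(\theta\mid s)$ that grows with $N(s)/n$ and is weighed against the law of $N(S)$, instead of a flat envelope. Be aware, though, that the uniformity in $s$ of the regime-stable part is exactly where the adaptive-choice phenomenon above resurfaces. On either route, this is the estimate that must be proved with care.

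\textbf{On your true-path remark.} Your observation that the data are generated by the realized latent path $S$ is correct, and the paper does not address it. Its Lemma~\ref{lem:pathwise_uniform_switch}(iv) applies Proposition~\ref{prop:rdcm_uniform} along an arbitrary $s$ as if $s$ had generated the data. However, by reinterpreting $\Psi_j$ you make the limit random and different from the $M_\infty$ in the statement. You would therefore be proving a corrected variant rather than the theorem as written.
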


\begin{myproof}
See Appendix~\ref{proof:srdcm_uniform}.
\end{myproof}

The following corollary shows that, once both post-$\tau_j$ regions are observed, the full diffusion blocks attached
to $T_1$ and $T_2$ are identified.

% \begin{mycorollary}[Consistency of the diffusion-block estimator in the post-$\tau$ case]
\begin{mycorollary}
\label{cor:srdcm_post_tau}
Let
\[
\hat\theta_n\in\arg\max_{\theta\in\Theta^{\times 2}} M_n(\theta).
\]
Under Assumptions~\ref{ass:srdcm_hf}, if
$l>\max\{\tau_1,\tau_2\}$, then
$\hat\theta_n \stackrel{\mathbb P}{\longrightarrow}\theta_0.$
\end{mycorollary}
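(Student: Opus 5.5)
The plan is to apply the argmax consistency theorem for \(M\)-estimators in \citet[Theorem~5.7]{van1998asymptotic}, exactly as in Proposition~\ref{prop:rdcm_uniform}. The two ingredients are a uniform convergence in probability of the contrast and a well-separated maximum of the limit.

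The first ingredient is supplied by Theorem~\ref{thm:srdcm_uniform}. Under Assumption~\ref{ass:srdcm_hf},
\[
\sup_{\theta\in\Theta^{\times2}}|M_n(\theta)-M_\infty(\theta)|\overset{\mathbb P}{\to}0 .
\]
Since \(\hat\theta_n\) maximizes \(M_n\), we have \(M_n(\hat\theta_n)\ge M_n(\theta_0)\). Combined with the uniform convergence this gives \(M_\infty(\hat\theta_n)\ge M_\infty(\theta_0)-o_{\mathbb P}(1)\).

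The second ingredient is a well-separated maximum, and I would obtain it from continuity plus compactness. I would first show that \(M_\infty(\theta)=\sup_{s}M_\infty(\theta\mid s)\) is continuous on the compact set \(\Theta^{\times 2}\). By uniform ellipticity and the uniform Lipschitz property of \(g_j\) in Assumption~\ref{ass:rdcm_hf}, item 4b, the maps \(\theta_j\mapsto\Psi_j(u,\theta_j)\) are Lipschitz uniformly in \(u\in[t_0,l]\) and in \(j\). Hence \(\theta\mapsto M_\infty(\theta\mid s)\) is Lipschitz with a constant independent of the path \(s\), because the integrals run over a partition of \([t_0,l]\). The family \(\{M_\infty(\cdot\mid s)\}_s\) is therefore equi-Lipschitz, so its pointwise supremum \(M_\infty\) is Lipschitz, hence continuous. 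Together with the uniqueness in Assumption~\ref{ass:srdcm_hf}(3), continuity on a compact set gives \(\sup_{\|\theta-\theta_0\|\ge\varepsilon}M_\infty(\theta)<M_\infty(\theta_0)\) for every \(\varepsilon>0\). Thus \(\hat\theta_n\overset{\mathbb P}{\to}\theta_0\) on the visible block.

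The main obstacle is showing that, once \(l>\max\{\tau_1,\tau_2\}\), the visible block is all of \(\theta=(\theta_1^-,\theta_1^+,\theta_2^-,\theta_2^+)\), so that uniqueness pins down every coordinate. I would argue as in Remark~\ref{rem:rdcm_pre_tau}, now in reverse. For \(l>\tau_j\), Assumption~\ref{ass:tau_activation} makes \(g_j(T_j-u,\theta_j)\) depend on \(\theta_j^+\) on the nondegenerate interval \((\tau_j,l]\). Each \(\Psi_j(u,\cdot)\) is minimized, with value \(\log g_j^2(T_j-u,\theta_{0,j})+1\), exactly where \(g_j^2(T_j-u,\theta_j)=g_j^2(T_j-u,\theta_{0,j})\), by the U-shape of \(y\mapsto\log y+c/y\). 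Consequently any maximizer must match \(g_j\) on the post-\(\tau_j\) stretches visited by the optimal path.

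I would try to reduce to the paths \(s\) attaining the supremum. The point to check is that no coordinate of \(\theta_j^+\) can be left unconstrained, which would happen if the optimizing path avoided regime \(j\) on \((\tau_j,l]\). Here I would rely directly on Assumption~\ref{ass:srdcm_hf}(3), read on the full block when \(l>\max\{\tau_1,\tau_2\}\), rather than attempting a primitive derivation. With that identification in hand, the argmax theorem concludes \(\hat\theta_n\overset{\mathbb P}{\to}\theta_0\).
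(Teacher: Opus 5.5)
Your proposal is correct and follows the paper's own route. It combines uniform convergence from Theorem~\ref{thm:srdcm_uniform}, uniqueness of the maximizer of $M_\infty$ on the full block $(\theta_1^-,\theta_1^+,\theta_2^-,\theta_2^+)$ taken directly from Assumption~\ref{ass:srdcm_hf}(3) once $l>\max\{\tau_1,\tau_2\}$, and the argmax theorem. Your equi-Lipschitz argument for the continuity of $M_\infty=\sup_s M_\infty(\cdot\mid s)$, which gives the well-separated maximum, spells out what the paper uses implicitly through the second part of Lemma~\ref{lem:lipschitz_switch}. You are also right that full-block uniqueness is an assumption here, not a consequence of $l>\max\{\tau_1,\tau_2\}$; the paper relies on it in exactly the same way.
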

\begin{myproof}
See Appendix~\ref{proof:srdcm_post_tau}.
\end{myproof}

% \begin{mycorollary}[Partial consistency in the pre-$\tau_j$ case]
\begin{mycorollary}
\label{cor:srdcm_pre_tau}
Assume that for regime $j\in\{1,2\}$ the observation horizon does not overlap the
post-$\tau_j$ region, namely $l\le \tau_j.$
% \[
% l\le \tau_j.
% \]
Then the regime-$j$ contribution to the limiting contrast is flat in $\theta_j^+$ and depends
only on the visible block $\theta_j^-$. Consequently, if the reduced limit contrast admits a
unique maximizer in $\theta_j^-$, then the corresponding estimator is consistent for
$\theta_{0,j}^-$, whereas $\theta_j^+$ is not identified.
\end{mycorollary}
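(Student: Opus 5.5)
The plan is to reduce Corollary~\ref{cor:srdcm_pre_tau} to the uniform convergence of Theorem~\ref{thm:srdcm_uniform}, combined with the structural observation of Remark~\ref{rem:rdcm_pre_tau} applied regime by regime.

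\textbf{Step 1: flatness of the limit contrast.} Fix $j$ with $l\le\tau_j$. By Assumption~\ref{ass:tau_activation}, $s_{T_j,\tau_j}(t)=f_{\tau_j^-}(t)$ for $t<\tau_j$, and by continuity also at $t=\tau_j$. Hence on $[t_0,l]$ we have $g_j(T_j-t,\theta_j)=g_j(T_j-t,\theta_j^-)$, and therefore $\Psi_j(u,\theta_j)=\Psi_j(u,\theta_j^-)$ for all $u\in[t_0,l]$.

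Now take any path $s\in S_f([t_0,l])$. Every interval $[\kappa_r(s),\kappa_{r+1}(s))$ with $j_r(s)=j$ lies in $[t_0,l]$. So in \eqref{eq:Uinfty_path_switch_main} the regime-$j$ summands are independent of $\theta_j^+$. Thus $M_\infty(\theta\mid s)$ does not depend on $\theta_j^+$ for any $s$. Taking the supremum over $s$ in \eqref{eq:Minfty_switch_main} preserves this, so $M_\infty(\theta)=\widetilde M_\infty(\theta_j^-,\theta_{-j})$, where $\theta_{-j}$ denotes the other regime's block. This is the claimed flatness.

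\textbf{Step 2: the finite-$n$ contrast.} $M_n$ does not inherit this flatness exactly. The factors $\tilde R_{i,j}(\theta)$ involve $f_{T_j\mid t_i}(0;\theta_j)$, whose variance $v_{T_j\mid t_i}$ integrates $g_j^2$ over $[t_i,T_j]$ and so depends on $\theta_j^+$.

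To handle this, define the reduced estimator by profiling. Set $\widetilde M_n(\theta_j^-,\theta_{-j}):=\sup_{\theta_j^+}M_n(\theta)$, and take as estimator the $(\theta_j^-,\theta_{-j})$-component of $\hat\theta_n$. Since a supremum over a set is $1$-Lipschitz under the sup norm, Theorem~\ref{thm:srdcm_uniform} gives
\[
\sup\bigl|\widetilde M_n-\widetilde M_\infty\bigr|\le\sup_{\theta}\bigl|M_n(\theta)-M_\infty(\theta)\bigr|\xrightarrow{\mathbb P}0 .
\]

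\textbf{Step 3: consistency of the visible block.} By hypothesis the reduced limit $\widetilde M_\infty$ has a unique maximizer, and I will take this as also including uniqueness of $\theta_{-j}$ on its own visible part. The limit $\widetilde M_\infty$ is continuous on a compact set, being a supremum of functions that are equicontinuous by the Lipschitz and ellipticity conditions of Assumption~\ref{ass:rdcm_hf}. Its maximizer is therefore well separated. The argmax consistency theorem of \citet[Thm.~5.7]{van1998asymptotic} then yields $\hat\theta_{n,j}^-\xrightarrow{\mathbb P}\theta_{0,j}^-$, arguing exactly as in Corollary~\ref{cor:rdcm_pre_tau}.

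\textbf{Step 4: non-identification of $\theta_j^+$.} Because $M_\infty$ is constant in $\theta_j^+$, no value of $\theta_j^+$ is well separated. The limit contrast cannot distinguish $\theta_{0,j}^+$ from any other point of its parameter set, so no consistency can be deduced; this is the meaning of ``not identified''. One can also note that the only $\theta_j^+$-dependence in $M_n$ enters through $\frac1n\sum_i\tilde R_{i,j}$. Under the gap condition this term is $O(1/n)$ per block, since on each constant-regime block it telescopes as in \eqref{RegimewisetelescopicTerm}. The number of blocks is tight under $\Pi_Q$, so the total is $o_{\mathbb P}(1)$, consistent with the flat limit.

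\textbf{Main obstacle.} I expect the hardest step to be Step~3's well-separation under the supremum over paths. The maximizing path $s$ in \eqref{eq:Minfty_switch_main} may itself depend on $\theta$, so continuity and uniqueness must be argued for $\widetilde M_\infty$ directly rather than path by path. I would first try to show that the family $\{M_\infty(\cdot\mid s)\}_s$ is uniformly equicontinuous, using the bound $c_g\le g^2$ and the fact that $g$ is bounded on the compact set. This makes the supremum continuous, and the uniqueness is then exactly the assumed condition.
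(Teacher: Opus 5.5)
Your proposal is correct and follows essentially the paper's route: flatness of $M_\infty$ in $\theta_j^+$ because $g_j$ involves only $\theta_j^-$ on $[t_0,l]$, then Theorem~\ref{thm:srdcm_uniform} plus the argmax theorem on the visible block, as in the proof of Corollary~\ref{cor:srdcm_post_tau}. Your profiling device $\widetilde M_n:=\sup_{\theta_j^+}M_n$ usefully makes explicit an estimator the paper leaves implicit (at finite $n$, $M_n$ still depends on $\theta_j^+$ through the terminal factors $\tilde R_{i,j}$, and the $1$-Lipschitz property of the supremum transfers the uniform convergence), whereas your Step~4 aside should not rely on tightness of the number of blocks under $\Pi_Q$, which does not by itself control a log-sum-exp aggregate, though the proof does not depend on that aside.
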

\begin{myproof}
See Appendix~\ref{proof:srdcm_pre_tau}.
\end{myproof}

The previous results provide consistent estimation of the visible regime-wise diffusion parameters connected to the deadlines. More precisely, in regime \(j\), the diffusion block
\[
\theta_j=(\theta_j^-,\theta_j^+)
\]
enters the coefficient \(g_j(T_j-t,\theta_j)\), so that the post-\(\tau_j\)
component \(\theta_j^+\) carries the information on the convergence profile
associated with the candidate deadline \(T_j\).

\section{Conclusion}\label{concl}
We introduce a market-implied Time to Transition measure and develop an ad-hoc inference framework. The RDCM provides a tractable benchmark through an exact Gaussian bridge likelihood, while the SRDCM captures reallocations of perceived probability mass across competing deadlines on the observation lattice. On fixed
daily grids, the switching model is used as a filtering device for monitoring perceived transition timing. The empirical analysis on German twin bonds illustrates how the framework can be implemented in a clean market setting where
the informational content of the green label can be isolated.

Under fixed-horizon infill schemes, we show that the regime-wise diffusion parameters can be consistently estimated on their visible part. Full or partial consistency depends on whether the observation window reaches the region where
the post-\(\tau\) parameters are active. Thus, the paper provides a continuous-time framework for market-implied transition timing and an identification result for
the diffusion parameters carrying information about competing perceived deadlines.

% \color{red} VERSIONE FINALE 1. 

% These results provide a foundation for a broader theory of market-implied transition timing. A natural next step is the construction of a global penalized detection criterion over admissible belief paths, integrating pathwise temporal decoding rules with an economic switching cost. This approach reflects the fact that revisions of perceived transition dates are not frictionless: they correspond to costly and partially irreversible decisions—such as building renovation, technological adoption, or long-term investment adjustments. Future research can regularize the maximization problem or incorporating augmented switching specifications with controlled belief revisions, ruling out excessive switching and focusing on economically meaningful paths. Such decision-theoretic monitoring procedures will be essential for assessing whether market perception remains effectively anchored to the nearest policy-relevant transition deadline, providing a robust bridge between stochastic inference and climate policy evaluation.

% \color{blue} VERSIONE FINALE 2 

These results provide a foundation for a broader theory of market-implied transition timing. Natural extensions include pathwise temporal decoding rules, augmented switching specifications with controlled belief revisions, and
decision-theoretic monitoring procedures for assessing whether market perception remains anchored to the nearest policy-relevant transition deadline.

\newpage

% \begin{figure}[h]
% \centering
% \includegraphics[width=6cm]{example-image-rectangle}
% \caption{Although we encourage authors to send us the highest-quality figures possible, for peer-review purposes we can accept a wide variety of formats, sizes, and resolutions. Legends should be concise but comprehensive -- the figure and its legend must be understandable without reference to the text. Include definitions of any symbols used and define/explain all abbreviations and units of measurement.}
% \end{figure}

% \begin{description}
% \item[\texttt{blind}] Anonymise all author, affiliation, correspondence
%        and funding information.

% \item[\texttt{lineno}] Adds line numbers.

% \item[\texttt{serif}] Sets the body font to be serif.

% \item[\texttt{twocolumn}] Sets the body text in two-column layout.

% \item[\texttt{num-refs}] Uses the numerical Vancouver bibliography style; see section \ref{sec:bibstyles}.

% \item[\texttt{alpha-refs}] Uses the author-year RSS bibliography style; see section \ref{sec:bibstyles}.
% \end{description}

% \begin{epigraph}{Albert Einstein}
% Anyone who has never made a mistake has never tried anything new.
% \end{epigraph}

\section*{Acknowledgments}
This work was supported by JST CREST Grant Number JPMJCR2115, Japan
and by the European Union - NextGeneration EU PRIN2022 project ``The effects of climate change in the evaluation of financial instruments'' financed by the `Ministero dell’Università e della Ricerca'
with grant number 20225PC98R, CUP Codes: H53D23002200006 and G53D25001960006.
The authors used ChatGPT (OpenAI) solely for language editing, improvement of
English fluency, and minor stylistic checks. All scientific content,
mathematical arguments, empirical analysis, and conclusions were developed,
reviewed, and verified by the authors, who take full responsibility for the
manuscript.

\section*{Conflict of interest}
The authors declare that they have no conflict of interest.

% \section*{Supporting Information}

% Supporting information is information that is not essential to the article, but provides greater depth and background. It is hosted online and appears without editing or typesetting. It may include tables, figures, videos, datasets, etc. More information can be found in the journal's author guidelines or at \url{http://www.wileyauthors.com/suppinfoFAQs}. Note: if data, scripts, or other artefacts used to generate the analyses presented in the paper are available via a publicly available data repository, authors should include a reference to the location of the material within their paper.

% \printendnotes

% Submissions are not required to reflect the precise reference formatting of the journal (use of italics, bold etc.), however it is important that all key elements of each reference are included.
%\bibliographystyle{WileyNJD-APA}
%\bibliographystyle{apalike}
\bibliography{mybib}

\appendix
\section{Proofs for Section~\ref{newmodel}}
\subsection{Proofs of Subsections \ref{Estim_RDCM_SUB} and \ref{subsubSect}}
\begin{myproof}[\textbf{Proof of Proposition~\ref{prop:bridge_mle} }]\label{prop:bridge_mle_Appendix}
For each \(i=1,\dots,n\), Bayes' rule and the Markov property give
\[
f_{t_i\mid t_{i-1},T}(x_{t_i};\psi)
=
\frac{
f_{T\mid t_i}(0;\psi)\,f_{t_i\mid t_{i-1}}(x_{t_i};\psi)
}{
f_{T\mid t_{i-1}}(0;\psi)
}.
\]
Taking logarithms and summing over \(i\) yields \eqref{eq:loglikSimpl}. Moreover, by Assumption \ref{ass:mle}, the maps \(\psi\mapsto m_{t_i\mid t_{i-1}}(\psi)\) and \(\psi\mapsto v_{t_i\mid t_{i-1}}(\psi)\) are continuous for each \(i=1,\dots,n\). Hence, by \eqref{k}, \eqref{sigma0}, and Lemma \ref{Factor}, the bridge mean \(k_{t_i\mid t_{i-1}}(\psi)\) and variance \(\sigma^2_{t_i\mid t_{i-1}}(\psi)\) are continuous on \(\Psi\), with \(\sigma^2_{t_i\mid t_{i-1}}(\psi)>0\). Therefore, each Gaussian term in \(\ell_n(\psi)\) depends continuously on \(\psi\), and so does \(\ell_n\). Since \(\Psi\) is compact, the existence of a maximizer follows from Weierstrass' theorem.
\end{myproof}

% \section{Proofs for Section \ref{subsubSect}}

\begin{myproof}[Proof of Lemma~\ref{lem:ident_prob} ]\label{Appendix_subsubSect}
When $\tau > t_n$, the identification of $b^{+}$ and $\theta^{+}$ relies only on the term $\ln f_{T\mid t_n}(0; \psi) - \ln f_{T\mid t_0}(0; \psi)$ in \eqref{eq:loglikSimpl}, as the decreasing behavior of $\phi(T-t,b)$ and $g(T-t,\theta)$ is active only for $t > \tau$. Differentiating \eqref{eq:loglikSimpl} with respect to $b^{+}$ yields: \begin{equation*} \frac{\partial \ell_n}{\partial b^{+}} = -\left(\frac{0-m_{T\mid t_n}}{v_{T\mid t_n}}\right)\frac{\partial m_{T \mid t_{n}}}{\partial b^{+}} + \left(\frac{0-m_{T\mid t_0}}{v_{T\mid t_0}}\right)\frac{\partial m_{T \mid t_{0}}}{\partial b^{+}}. \end{equation*} Since $\partial m_{T \mid t_{n}}/{\partial b^{+}} = \partial m_{T \mid t_{0}}/\partial b^{+} \neq 0$, we obtain the stationary condition: \begin{equation} \label{eq:identtheta12} m_{T\mid t_n} = R m_{T\mid t_0}, \quad \text{with} \quad R = \frac{v_{T\mid t_n}}{v_{T\mid t_0}} \in (0,1). \end{equation} Similarly, differentiating with respect to $\theta^{+}$ and noting that $\frac{\partial v_{T \mid t_{0}}}{\partial \theta^{+}} = \frac{\partial v_{T \mid t_{n}}}{\partial \theta^{+}} \neq 0$, the condition reduces to: \begin{equation*} \frac{m^2_{T|t_n}-v_{T|t_n}}{v^2_{T|t_n}} = \frac{m^2_{T|t_0}-v_{T|t_0}}{v^2_{T|t_0}}. \end{equation*} Combining this with \eqref{eq:identtheta12} leads to $v^2_{T|t_n} = v^2_{T|t_0}$, which contradicts the integral representation of the variance, since $v_{T|t}$ must be strictly decreasing in $t$. 
\end{myproof}

\subsection{Proofs of Subsections \ref{EMSRDCM}}

\begin{myproof}[\textbf{Proof of Proposition \ref{prop:FB_SRDMC}}]\label{app:proof_prop32}
We prove the forward recursion by induction on $i$.

For $i=1$, by definition $\alpha_1(j)$ is the joint density-probability of
$\left(S^{\bar\Delta}_{0}=j,X_{t_1}=x_{t_1}\right)$ given $X_{t_0}=x_{t_0}$.
Since the event $S^{\bar\Delta}_{0}=j$ selects the regime-$j$ bridge law on
$(t_0,t_1]$, we immediately obtain
\[
\alpha_1(j)=\pi_{0,j}\,f_j(x_{t_1}\mid x_{t_0}).
\]
Now let $i\geq 2$. We consider the joint mixed density $\alpha_i(j)$ defined as
\begin{equation}
\alpha_i(j):=\mathbb{P}\left(x_{t_1},\ldots,x_{t_{i-1}},S^{\bar\Delta}_{i-1}=j,x_{t_i}\mid x_{t_0}\right),
\label{eq:alpha_step_0}
\end{equation}
given $x_{t_0}$\footnote{In \eqref{eq:alpha_step_0} we omit the dependence on the r.v.'s $\left\{X_{t_h}\right\}_{h=0}^{i}$ and we report explicitly only their observations since the meaning is clear from the usual meaning of the probability measure.}. 

Denoting with $\mathcal{A}_{i}(j):=\left\{\omega: X_{t_1}\left(\omega\right)=x_{t_1};\ldots;X_{t_{i-1}}\left(\omega\right)=x_{t_{i-1}}; S^{\bar{\Delta}}_{i-1}=j;X_{t_{i}}\left(\omega\right)=x_{t_i}\right\}$, we have the following set representation:
\begin{equation}
\mathcal{A}_{i}(j)=\bigcup_{h\in\mathcal{S}}\Big(\mathcal{A}_{i-1}(h)\cap\left\{\omega: S^{\bar{\Delta}}_{i-1}=j;X_{t_{i}}\left(\omega\right)=x_{t_i}\right\}\Big).
\label{eq:set_rapresentation}
\end{equation}
The sets $\left\{\mathcal{A}_{i-1}(h)\right\}_{h\in\mathcal{S}}$ are disjoint and, applying the Bayes' rule and assuming \eqref{eq:alpha_step_0} holds at $t_{i-1}$ for all $h\in\mathcal{S}$, we get:
\begin{eqnarray}
\alpha_i(j)&:=&\mathbb{P}\Big(\mathcal{A}_{i}(j) \Big| x_{t_0}\Big)\nonumber\\
&=&\sum_{h \in \mathcal{S}}\Big[\mathbb{P}\Big(\mathcal{A}_{i-1}(h)\Big| x_{t_0}\Big)\mathbb{P}\Big(\left\{\omega: S^{\bar{\Delta}}_{i-1}=j;X_{t_{i}}\left(\omega\right)=x_{t_i}\right\}\Big|\mathcal{A}_{i-1}(h), x_{t_0}\Big)\Big]\nonumber\\
&=&\sum_{h \in \mathcal{S}}\Big[\alpha_{i-1}(h)p_{hj}\,f_j(x_{t_i}\mid x_{t_{i-1}}) \Big]
=
f_j(x_{t_i}\mid x_{t_{i-1}})
\sum_{h=1}^m\alpha_{i-1}(h)\,p_{hj},\nonumber\\
\label{final-res}
\end{eqnarray}
which gives \eqref{eq:alpha_recursion_our}. 
For the backward recursion, the terminal condition $\beta_n(j)=1$ is immediate.
Then, partitioning the future according to the next latent state $S^{\bar\Delta}_{i}=h$ and using the
same argument as above, one obtains the recursion in  \eqref{eq:beta_final} with $\beta_{n}(j)=1$ and $\forall j \in\mathcal{S}$.

% For the backward recursion, the terminal condition $\beta_n(j)=1$ is immediate.
% Then, conditioning on the next latent state $S^{\bar\Delta}_{i}=h$ and using the
% same argument as above, one obtains
% \[
% \beta_i(j)
% =
% \sum_{h\in\mathcal{S}} p_{jh}\,f_h(x_{t_{i+1}}\mid x_{t_i})\,\beta_{i+1}(h),
% \qquad i=n-1,\ldots,1.
% \]
% This proves the result.
\end{myproof}

\section{Proofs for Section~\ref{sec:highfreq}}
\label{app:proofs_highfreq}

\subsection{RDCM proofs}
\label{app:proofs_rdcm}
We first collect three elementary expansions for the RDCM contrast. Throughout this appendix, when no ambiguity arises, the dependence of stochastic quantities on the observed path \(\{X_t\}\) and on the driving Brownian motion \(\{W_t\}\) is kept implicit. In particular, all stochastic order symbols are understood with respect to the underlying probability measure.

\begin{mylemma}
\label{lem:variance_expansion_rdcm}
Under Assumption~\ref{ass:rdcm_hf}, there exists $C_v<\infty$ such that
\[
\sup_{\theta\in\Theta}
\left|
\frac{v_{t_i\mid t_{i-1}}(\theta)}{\Delta_n}-g^2(T-t_{i-1},\theta)
\right|
\le C_v\Delta_n,
\qquad i=1,\dots,n.
\]
In particular,
\[
v_{t_i\mid t_{i-1}}(\theta)=g^2(T-t_{i-1},\theta)\Delta_n+O(\Delta_n^2)
\]
uniformly in $(i,\theta)$.
\end{mylemma}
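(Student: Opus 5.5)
We argue directly from the integral representation \eqref{eq:var} of the one-step variance. With $t=t_i$ and $t_0$ replaced by $t_{i-1}$, and using $t_i-t_{i-1}=\Delta_n$,
\[
v_{t_i\mid t_{i-1}}(\theta)=\int_{t_{i-1}}^{t_i}e^{-2a(t_i-s)}g^2(T-s,\theta)\,\mathrm ds .
\]
Subtracting $g^2(T-t_{i-1},\theta)\Delta_n=\int_{t_{i-1}}^{t_i}g^2(T-t_{i-1},\theta)\,\mathrm ds$ and dividing by $\Delta_n$, we split the difference into two pieces:
\[
\frac{1}{\Delta_n}\int_{t_{i-1}}^{t_i}\big(e^{-2a(t_i-s)}-1\big)g^2(T-s,\theta)\,\mathrm ds
\;+\;
\frac{1}{\Delta_n}\int_{t_{i-1}}^{t_i}\big(g^2(T-s,\theta)-g^2(T-t_{i-1},\theta)\big)\,\mathrm ds .
\]

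\textbf{Step 1: a uniform bound on $g$.} By Assumption~\ref{ass:rdcm_hf}(4a), the map $(t,\theta)\mapsto g(T-t,\theta)$ is continuous on the compact set $[t_0,T]\times\Theta$. Hence
\[
G:=\sup_{[t_0,T-\delta]\times\Theta}|g|<\infty .
\]

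\textbf{Step 2: the exponential piece.} For $s\in[t_{i-1},t_i]$ we have $0\le 1-e^{-2a(t_i-s)}\le 2a(t_i-s)\le 2a\Delta_n$. The drift parameter $a$ is fixed (Assumption~\ref{ass:rdcm_hf}(5)); if it were allowed to vary, we would use $\sup\mathcal A<\infty$ by compactness. Therefore the first piece is bounded in absolute value by $2aG^2\Delta_n$.

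\textbf{Step 3: the increment piece.} By Assumption~\ref{ass:rdcm_hf}(4b), $g(T-\cdot,\theta)$ is Lipschitz in $t$ on $[t_0,T-\delta]$, with a constant $L$ that does not depend on $\theta$. Since $l\le T-\delta$, the whole grid lies in this region. Writing
\[
|g^2(x)-g^2(y)|=|g(x)+g(y)|\,|g(x)-g(y)|\le 2GL|s-t_{i-1}|\le 2GL\Delta_n ,
\]
the second piece is also $O(\Delta_n)$. Note that the Lipschitz requirement is only needed on $[t_0,T-\delta]$, and the kink of $g$ at $\tau$ causes no problem because only Lipschitz continuity (not differentiability) is used.

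\textbf{Conclusion.} Combining Steps 2 and 3 gives the claim with
\[
C_v=2aG^2+2GL .
\]
This constant is independent of $i$, $n$ and $\theta$, which yields the uniform bound; multiplying through by $\Delta_n$ gives the stated $O(\Delta_n^2)$ form.

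The only delicate point is the uniformity in $\theta$. It rests on compactness of $\Theta$ together with the Lipschitz constant in Assumption~\ref{ass:rdcm_hf}(4b) being uniform in $\theta$, and on the gap condition keeping the grid inside $[t_0,T-\delta]$. Uniform ellipticity is not needed for this lemma; it will matter later, when the ratio $v/\Delta_n$ appears in a denominator.
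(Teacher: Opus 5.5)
Your proposal is correct and follows essentially the same route as the paper. Both write $v_{t_i\mid t_{i-1}}(\theta)/\Delta_n-g^2(T-t_{i-1},\theta)$ as an average over $[t_{i-1},t_i]$, bound $e^{-2a(t_i-u)}-1$ by $O(\Delta_n)$, and control the increment of $g^2$ through its uniform Lipschitz continuity in time (continuity on a compact set plus Assumption~\ref{ass:rdcm_hf}(4b)); the only difference is that you make the constant explicit, $C_v=2aG^2+2GL$, where the paper leaves it implicit.
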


\begin{myproof}
Write
\[
\frac{v_{t_i\mid t_{i-1}}(\theta)}{\Delta_n}-g^2(T-t_{i-1},\theta)
=
\frac{1}{\Delta_n}\int_{t_{i-1}}^{t_i}
\Big(\Gamma_i(u)g^2(T-u,\theta)-g^2(T-t_{i-1},\theta)\Big)\,du.
\]
Since \(\Gamma_i(u)=e^{-2a(t_i-u)}\) with \(a\) fixed, one has
\[
\sup_{i,u}|\Gamma_i(u)-1|\le C\Delta_n.
\]
Moreover, as \(g\) is continuous on the compact set \([t_0,T-\delta]\times\Theta\) and uniformly Lipschitz in time, the map \((t,\theta)\mapsto g^2(T-t,\theta)\) is uniformly Lipschitz on the same set. Hence
\[
\sup_{\theta\in\Theta}\sup_{u\in[t_{i-1},t_i]}
\big|g^2(T-u,\theta)-g^2(T-t_{i-1},\theta)\big|
\le C\Delta_n.
\]
Therefore,
\[
\sup_{\theta\in\Theta}
\left|
\frac{v_{t_i\mid t_{i-1}}(\theta)}{\Delta_n}-g^2(T-t_{i-1},\theta)
\right|
\le C\Delta_n,
\]
which proves the claim.
\end{myproof}
\begin{mylemma}
\label{lem:prediction_error_rdcm}
Under Assumption~\ref{ass:rdcm_hf}, for every $i=1,\dots,n$,
\[
X_{t_i}-m_{t_i\mid t_{i-1}}\left(X_{t_{i-1}}\right)=g(T-t_{i-1},\theta_0)\Delta W_i+r_{i,n},
\qquad \Delta W_i:=W_{t_i}-W_{t_{i-1}},
\]
where the remainder satisfies
\[
\sup_{1\le i\le n}\|r_{i,n}\|_{L^2}\le C_r\Delta_n.
\]
Hence, the one-step prediction error is the Brownian term of order $O_p(\sqrt{\Delta_n})$
plus a uniform remainder of smaller order $O_p(\Delta_n)$.
\end{mylemma}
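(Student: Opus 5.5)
The plan is to write the prediction error exactly through the variation-of-constants formula for the linear SDE \eqref{RDCM}, and then isolate the leading Brownian term. Conditioning on \(X_{t_{i-1}}\), the solution on \([t_{i-1},t_i]\) is
\[
X_{t_i}=e^{-a\Delta_n}X_{t_{i-1}}+a\int_{t_{i-1}}^{t_i}e^{-a(t_i-s)}\phi(T-s,b)\,ds+\int_{t_{i-1}}^{t_i}e^{-a(t_i-s)}g(T-s,\theta_0)\,dW_s .
\]
The first two terms are exactly \(m_{t_i\mid t_{i-1}}(X_{t_{i-1}})\) in \eqref{eq:mean}, because the drift block is fixed at its true value. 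The deterministic drift contributions therefore cancel identically, and we obtain
\[
X_{t_i}-m_{t_i\mid t_{i-1}}(X_{t_{i-1}})=\int_{t_{i-1}}^{t_i}e^{-a(t_i-s)}g(T-s,\theta_0)\,dW_s .
\]
The terminal condition plays no role here: \(m\) is the unconstrained one-step mean, and the data are generated by the SDE dynamics on \([t_0,l]\) with \(l\le T-\delta\), so no bridge correction enters.

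Next, set
\[
r_{i,n}:=\int_{t_{i-1}}^{t_i}\Big(e^{-a(t_i-s)}g(T-s,\theta_0)-g(T-t_{i-1},\theta_0)\Big)\,dW_s .
\]
With this choice the decomposition in the statement holds exactly, since \(\int_{t_{i-1}}^{t_i}g(T-t_{i-1},\theta_0)\,dW_s=g(T-t_{i-1},\theta_0)\Delta W_i\).

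The integrand of \(r_{i,n}\) is deterministic, so the Itô isometry gives \(\|r_{i,n}\|_{L^2}^2=\int_{t_{i-1}}^{t_i}h_i(s)^2\,ds\), where \(h_i\) denotes that integrand. We bound \(|h_i(s)|\) by splitting it as
\[
\big|e^{-a(t_i-s)}-1\big|\,|g(T-s,\theta_0)|+\big|g(T-s,\theta_0)-g(T-t_{i-1},\theta_0)\big| .
\]
The first piece is at most \(a\Delta_n\sup|g|\). The supremum is finite because \(g\) is continuous on the compact set \([t_0,T]\times\Theta\) by Assumption~\ref{ass:rdcm_hf}(4a). The second piece is at most \(L\Delta_n\) by the uniform Lipschitz property in Assumption~\ref{ass:rdcm_hf}(4b), which applies since \([t_{i-1},t_i]\subset[t_0,T-\delta]\) by the gap condition. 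Hence \(|h_i(s)|\le C\Delta_n\) uniformly in \(i\) and \(s\).

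The Itô isometry then yields \(\|r_{i,n}\|_{L^2}\le C\Delta_n\sqrt{\Delta_n}\le C_r\Delta_n\) for \(\Delta_n\le1\). This is in fact a sharper bound than the one claimed, of order \(\Delta_n^{3/2}\). Since \(\Delta W_i\sim\mathcal N(0,\Delta_n)\) and \(g\) is bounded, the Brownian term is \(O_p(\sqrt{\Delta_n})\). Markov's inequality applied to \(r_{i,n}\) gives that the remainder is \(O_p(\Delta_n)\), uniformly in \(i\).

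I do not expect a serious obstacle. The only delicate point is making sure the Lipschitz bound is used only on the region where it is assumed, namely \([t_0,T-\delta]\) rather than \([t_0,T]\). It is also worth noting that the possible kink of \(g\) at \(\tau\) is harmless, because only Lipschitz continuity is used, not differentiability.
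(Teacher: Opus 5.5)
Your proposal is correct and follows the paper's proof essentially step by step. Like the paper, you write the innovation exactly as $\int_{t_{i-1}}^{t_i} e^{-a(t_i-u)} g(T-u,\theta_0)\,dW_u$, add and subtract $g(T-t_{i-1},\theta_0)$ inside the stochastic integral, and bound the remainder via the It\^o isometry, $|e^{-a(t_i-u)}-1|\le a\Delta_n$, and the Lipschitz property of $g$ in time on $[t_0,T-\delta]$. Your sharper $O(\Delta_n^{3/2})$ rate matches the paper's own intermediate bound $\mathbb E|r_{i,n}|^2\le C\Delta_n^3$, and your remark that the data are taken to follow the unconstrained linear dynamics on $[t_0,l]$, with no bridge drift correction, states explicitly an assumption the paper leaves implicit.
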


\begin{myproof}
The exact linear representation of the innovation gives
\[
X_{t_i}-m_{t_{i}\mid t_{i-1}}\left(X_{t_{i-1}}\right)
=
\int_{t_{i-1}}^{t_i}\Gamma_i^{1/2}(u)\,g(T-u,\theta_0)\,dW_u.
\]
Add and subtract $g(T-t_{i-1},\theta_0)$ inside the stochastic integral:
\[
r_{i,n}
:=
\int_{t_{i-1}}^{t_i}
\Big(
\Gamma_i^{1/2}(u)\,g(T-u,\theta_0)-g(T-t_{i-1},\theta_0)
\Big)\,dW_u.
\]
By It\^o isometry, the boundedness of $\Gamma_i^{1/2}(u)-1$, and the Lipschitz continuity
of $g$ in time,
\[
\mathbb E|r_{i,n}|^2
\le
C\int_{t_{i-1}}^{t_i}(u-t_{i-1})^2\,du
\le C\Delta_n^3.
\]
Therefore $\|r_{i,n}\|_{L^2}\le C\Delta_n$.
\end{myproof}

\begin{mylemma}
\label{lem:quadratic_expansion_rdcm}
Let
\[
Z_i:=\frac{\Delta W_i}{\sqrt{\Delta_n}}\sim N(0,1).
\]
Under Assumption~\ref{ass:rdcm_hf},
\[
\frac{\Big(X_{t_i}-m_{t_{i}\mid t_{i-1}}\left(X_{t_{i-1}}\right)\Big)^2}{v_{t_{i}\mid t_{i-1}}(\theta)}
=
\frac{g^2(T-t_{i-1},\theta_0)}{g^2(T-t_{i-1},\theta)}\,Z_i^2
+\rho_{i,n}(\theta),
\]
and the remainder satisfies
\[
\sup_{\theta\in\Theta}
\left|
\frac{1}{n}\sum_{i=1}^n \rho_{i,n}(\theta)
\right|
\stackrel{\mathbb P}{\longrightarrow}0.
\]
\end{mylemma}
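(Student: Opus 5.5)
We combine Lemma~\ref{lem:prediction_error_rdcm} and Lemma~\ref{lem:variance_expansion_rdcm}. Write $e_i:=X_{t_i}-m_{t_i\mid t_{i-1}}(X_{t_{i-1}})=g_{0,i}\sqrt{\Delta_n}Z_i+r_{i,n}$, with $g_{0,i}:=g(T-t_{i-1},\theta_0)$ and $g_i(\theta):=g(T-t_{i-1},\theta)$. Also write $v_{t_i\mid t_{i-1}}(\theta)=\Delta_n\bigl(g_i^2(\theta)+\eta_{i,n}(\theta)\bigr)$, where $\sup_{i,\theta}|\eta_{i,n}(\theta)|\le C_v\Delta_n$. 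Then
\[
\frac{e_i^2}{v_{t_i\mid t_{i-1}}(\theta)}=\frac{g_{0,i}^2Z_i^2+2g_{0,i}Z_i r_{i,n}\Delta_n^{-1/2}+r_{i,n}^2\Delta_n^{-1}}{g_i^2(\theta)+\eta_{i,n}(\theta)} .
\]
This suggests defining $\rho_{i,n}(\theta)$ as the difference between the right-hand side and $g_{0,i}^2Z_i^2/g_i^2(\theta)$. That difference splits into three pieces:
\[
\rho^{(1)}_{i,n}(\theta)=g_{0,i}^2Z_i^2\Bigl(\tfrac{1}{g_i^2+\eta_{i,n}}-\tfrac{1}{g_i^2}\Bigr),\quad
\rho^{(2)}_{i,n}(\theta)=\frac{2g_{0,i}Z_ir_{i,n}\Delta_n^{-1/2}}{g_i^2+\eta_{i,n}},\quad
\rho^{(3)}_{i,n}(\theta)=\frac{r_{i,n}^2\Delta_n^{-1}}{g_i^2+\eta_{i,n}} .
\]

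The key device for uniformity in $\theta$ is uniform ellipticity (Assumption~\ref{ass:rdcm_hf}(3)). For $n$ large, $C_v\Delta_n\le c_g/2$, so every denominator is bounded below by $c_g/2$ uniformly in $(i,\theta)$. In addition, $g_{0,i}^2\le \bar g^2:=\sup g^2<\infty$ by continuity on a compact set. Each piece is then dominated by a $\theta$-free random quantity, and the supremum over $\Theta$ can be handled by taking suprema inside the sum, which avoids any chaining or equicontinuity argument.

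\textbf{First piece.} We have $|1/(g_i^2+\eta)-1/g_i^2|\le |\eta|/(g_i^2(g_i^2+\eta))\le 2C_v\Delta_n/c_g^2$. Hence $\sup_\theta|\frac1n\sum\rho^{(1)}|\le C\Delta_n\frac1n\sum Z_i^2$, which is $O_p(\Delta_n)$ because $\mathbb E Z_i^2=1$.

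\textbf{Third piece.} Here $\sup_\theta|\frac1n\sum\rho^{(3)}|\le \frac{2}{c_g}\frac1n\sum r_{i,n}^2\Delta_n^{-1}$. Its expectation is at most $\frac{2}{c_g}C_r^2\Delta_n\to0$, so Markov's inequality gives convergence to zero in probability.

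\textbf{Second piece.} This is the main point to handle carefully, but Cauchy--Schwarz suffices:
\[
\sup_\theta\Bigl|\tfrac1n\textstyle\sum\rho^{(2)}\Bigr|\le \tfrac{4\bar g}{c_g}\Bigl(\tfrac1n\sum Z_i^2\Bigr)^{1/2}\Bigl(\tfrac1n\sum r_{i,n}^2\Delta_n^{-1}\Bigr)^{1/2}=O_p(1)\,O_p(\Delta_n^{1/2})\to0 .
\]
There is no need for martingale arguments, since the $L^2$ bound on $r_{i,n}$ is already of order $\Delta_n$, one order smaller than the Brownian increment. The only subtlety is that $r_{i,n}$ is correlated with $Z_i$, and Cauchy--Schwarz sidesteps this. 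The finitely many small $n$ for which $C_v\Delta_n>c_g/2$ are irrelevant to convergence in probability. Summing the three bounds gives the claim.
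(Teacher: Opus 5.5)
Your proof is correct and follows essentially the same route as the paper. Both split the remainder into a variance-correction piece, a cross term and a squared-remainder piece, use uniform ellipticity to bound the denominators, and use the bound $\|r_{i,n}\|_{L^2}\le C_r\Delta_n$ to send each piece to zero. The differences are minor: the paper bounds the expectation of each cross term by $C\sqrt{\Delta_n}$ instead of applying Cauchy--Schwarz to the empirical average, and your explicit $\theta$-free domination makes the uniformity in $\theta$ more transparent than the paper's sketch.
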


\begin{myproof}
Write
\[
X_{t_i}-m_{t_{i}\mid t_{i-1}}\left(X_{t_{i-1}}\right)=g_i^0\sqrt{\Delta_n}\,Z_i+r_{i,n},
\qquad g_i^0:=g(T-t_{i-1},\theta_0),
\]
and
\[
v_{t_{i}\mid t_{i-1}}(\theta)=g_i^2(\theta)\Delta_n+\delta_{i,n}(\theta),
\qquad g_i(\theta):=g(T-t_{i-1},\theta),
\]
where, by Lemma~\ref{lem:variance_expansion_rdcm},
\[
\sup_{\theta\in\Theta}|\delta_{i,n}(\theta)|\le C\Delta_n^2.
\]
Then
\[
\Big(X_{t_i}-m_{t_{i}\mid t_{i-1}}\left(X_{t_{i-1}}\right)\Big)^2
=
g_i^{0\,2}\Delta_n Z_i^2
+
2g_i^0\sqrt{\Delta_n}\,Z_i r_{i,n}
+
r_{i,n}^2.
\]
Using the lower bound $g_i^2(\theta)\ge c_g$, we may divide by $v_i(\theta)$ and obtain
the announced expansion. The average remainder vanishes because
\[
\mathbb E\!\left[
\left|
\frac{2g_i^0\sqrt{\Delta_n}Z_i r_{i,n}}{v_{t_{i}\mid t_{i-1}}(\theta)}
\right|
\right]
\le C\sqrt{\Delta_n},
\qquad
\mathbb E\!\left[
\left|
\frac{r_{i,n}^2}{v_{t_{i}\mid t_{i-1}}(\theta)}
\right|
\right]
\le C\Delta_n,
\qquad
\sup_{\theta\in\Theta}
\left|
\frac{\delta_{i,n}(\theta)}{v_{t_{i}\mid t_{i-1}}(\theta)}
\right|
\le C\Delta_n.
\]
Averaging over $i$ yields the result.
\end{myproof}

\begin{myproof}[Proof of Proposition~\ref{prop:rdcm_uniform} ]\label{proof:rdcm_uniform}

We split $M_n$ as in \eqref{eq:Mn_rdcm_main}.

\medskip
\noindent

\emph{Step 1: terminal bridge term.}
Since \(l\le T-\delta\), the Gaussian bridge log-density \(\log f_{T\mid t}(0;X_t,\theta)\) is uniformly \(O_p(1)\) on \([t_0,l]\times\Theta\). Indeed, \(\log v_{T\mid t}(\theta)\) is uniformly \(O(1)\), because \(v_{T\mid t}(\theta)\) is continuous on \([t_0,T]\times\Theta\) and, by the gap condition together with continuity and uniform ellipticity up to \(T-\delta\), it is uniformly bounded away from zero. In particular, this lower bound does not degenerate even in the limit case \(t=T-\delta\), as follows from the variance representation in \eqref{eq:var}.  The quadratic kernel is uniformly \(O_p(1)\), since \(m_{T\mid t}(X_t)=O_p(1)\) on \([t_0,l]\) by the continuity of the sample paths and the boundedness of the deterministic integral term in \eqref{eq:mean}, while \(v_{T\mid t}(\theta)=O(1)\) uniformly. Therefore,
\begin{equation}
\frac{1}{n}\sup_{\theta\in\Theta}|R_n(\theta)|=O_p(n^{-1})=o_p(1).
\label{eq:ControlTelescopicTerm}
\end{equation}

\medskip
\noindent
\emph{Step 2: log-variance term.}
By Lemma~\ref{lem:variance_expansion_rdcm},
\[
\frac{v_{t_{i}\mid t_{i-1}}(\theta)}{\Delta_n}
=
g^2(T-t_{i-1},\theta)\big(1+O(\Delta_n)\big)
\]
uniformly over all observation times $t_i\leq l$ and over $\Theta$. Using the uniform ellipticity, we have 
\[
\log\left(\frac{v_{t_{i}\mid t_{i-1}}(\theta)}{\Delta_n}\right)
=
\log\Big(g^2(T-t_{i-1},\theta)\Big)+O(\Delta_n)
\]
where the last term comes from $\log(1+x)=O(x)$. Therefore, considering the grid in \eqref{eq:grid_hf_rdcm}, we obtain:
\begin{equation}
\frac{1}{l-t_0}\sum_{t_i\leq l}\left[\log\left(\frac{v_{t_{i}\mid t_{i-1}}(\theta)}{\Delta_n}\right)-\log\Big(g^2(T-t_{i-1},\theta)\Big)\right]\Delta_n=o(1)
\label{aaaaauffa}
\end{equation}
uniformly on $\Theta$. Since $\left(t,\theta\right)\mapsto\log\Big(g^{2}\left(T-t,\theta\right)\Big)$ is continuous on $\left[t_0,l\right]\times\Theta$, the corresponding left-Riemann sums converge uniformly in $\Theta$ to the integral. Combining this uniform convergence of the left-Riemann sums with \eqref{aaaaauffa}, finally we get: 
\begin{equation}
    \sup_{\theta\in\Theta}
\left|
-\frac{1}{2(l-t_0)}\sum_{t_i\leq l}\log\left(\frac{v_{t_{i}\mid t_{i-1}}(\theta)}{\Delta_n}\right)\Delta_n
+
\frac{1}{2(l-t_0)}\int_{t_0}^{l}\log\!\big(g^2(T-u,\theta)\big)\,du
\right|
\to 0.
\label{eq:stepcontrolkernel}
\end{equation}

\medskip
\noindent
\emph{Step 3: quadratic term.}
By Lemma~\ref{lem:quadratic_expansion_rdcm},
\[
-\frac{1}{2n}\sum_{i=1}^n
\frac{\Big(X_{t_i}-m_{t_i\mid t_{i-1}}(X_{t_{i-1}})\Big)^2}{v_{t_i\mid t_{i-1}}(\theta)}
=
-\frac{1}{2n}\sum_{i=1}^n w_i(\theta)Z_i^2+o_p(1),
\]
uniformly in \(\theta\), where
\[
w_i(\theta):=\frac{g^2(T-t_{i-1},\theta_0)}{g^2(T-t_{i-1},\theta)}.
\]
Using the observation grid in \eqref{eq:grid_hf_rdcm}, we decompose
\[
-\frac{1}{2n}\sum_{i=1}^n w_i(\theta)Z_i^2
=
-\frac{1}{2(l-t_0)}\sum_{t_i\le l} w_i(\theta)\Delta_n
-\frac{1}{2(l-t_0)}\sum_{t_i\le l} w_i(\theta)(Z_i^2-1)\Delta_n.
\]
Since \((t,\theta)\mapsto g^2(T-t,\theta_0)/g^2(T-t,\theta)\) is continuous on
\([t_0,l]\times\Theta\), the corresponding left-Riemann sums converge uniformly in
\(\Theta\), that is,
\[
\sup_{\theta\in\Theta}
\left|
-\frac{1}{2(l-t_0)}\sum_{t_i\le l} w_i(\theta)\Delta_n
+
\frac{1}{2(l-t_0)}
\int_{t_0}^{l}\frac{g^2(T-u,\theta_0)}{g^2(T-u,\theta)}\,du
\right|
\to 0.
\]
It remains to control the centered term. We claim that
\begin{equation}
\sup_{\theta\in\Theta}
\left|
\frac{1}{2(l-t_0)}\sum_{t_i\le l} w_i(\theta)(Z_i^2-1)\Delta_n
\right|
\stackrel{\mathbb P}{\longrightarrow}0.
\label{eq:weighted_lln_appendix}
\end{equation}
Indeed, to prove \eqref{eq:weighted_lln_appendix}, it is enough to apply Chebyshev's inequality and show that the corresponding second moment converges to zero uniformly in \(\Theta\), namely
\begin{equation}
\frac{1}{4(l-t_0)^2}
\sup_{\theta\in\Theta}
\mathbb E_{\theta_0}\!\left[
\left(
\sum_{t_i\le l} w_i(\theta)(Z_i^2-1)\Delta_n
\right)^2
\right]
\longrightarrow 0,
\qquad n\to+\infty.
\label{eq:Control_Estimated_Variance}
\end{equation}
Expanding the square in \eqref{eq:Control_Estimated_Variance}, the mixed terms corresponding to \(t_i\neq t_j\) have zero mean by the independence of the sequence \(\{Z_i\}_{i=1}^n\). Hence,
\[
\sup_{\theta\in\Theta}
\sum_{t_i\le l}
w_i(\theta)^2\,
\mathbb E_{\theta_0}\!\left[(Z_i^2-1)^2\right]
\Delta_n^2
\le
2C(l-t_0)\Delta_n
\longrightarrow 0.
\]
Here \(\{w_i\}_{i=1}^n\) is uniformly bounded on the grid \eqref{eq:grid_hf_rdcm} by ellipticity and continuity of \((t,\theta)\mapsto g(T-t,\theta)\) on the relevant compact set. Therefore \eqref{eq:Control_Estimated_Variance} yields \eqref{eq:weighted_lln_appendix}. Combining this with \eqref{eq:stepcontrolkernel} and \eqref{eq:ControlTelescopicTerm} proves the uniform convergence of \(M_n(\theta)\) to \(M_\infty(\theta)\) in \eqref{eq:ConvergUnPofMn_for RDCM}.
By continuity of \(M_n(\cdot)\) and compactness of \(\Theta\), \(\arg\max_{\theta\in\Theta}M_n(\theta)\neq\varnothing\). The consistency of any measurable selection \(\hat\theta_n\) then follows from \eqref{eq:ConvergUnPofMn_for RDCM}, Assumption~\ref{ass:rdcm_ident}, and the argmax theorem in \cite[Chapter~5]{van1998asymptotic}.
% Thus it is enough to prove that
% \begin{equation}
% \sup_{\theta\in\Theta}
% \left|
% \frac{1}{n}\sum_{i=1}^n w_i(\theta)(Z_i^2-1)
% \right|
% \stackrel{\mathbb P}{\longrightarrow}0.
% \label{eq:weighted_lln_appendix}
% \end{equation}
% Fix $\eta>0$ and let $\{\theta^{(1)},\dots,\theta^{(m)}\}$ be a finite $\eta$-net of $\Theta$.
% By the uniform Lipschitz continuity of $g$ and the lower bound $g^2\ge c_g$, there
% exists $C_w<\infty$ such that
% \[
% |w_i(\theta)-w_i(\vartheta)|\le C_w\|\theta-\vartheta\|_2,
% \qquad \forall i,\ \forall\theta,\vartheta\in\Theta.
% \]
% Therefore
% \[
% \sup_{\theta\in\Theta}
% \left|
% \frac{1}{n}\sum_{i=1}^n w_i(\theta)(Z_i^2-1)
% \right|
% \le
% \max_{1\le k\le m}
% \left|
% \frac{1}{n}\sum_{i=1}^n w_i(\theta^{(k)})(Z_i^2-1)
% \right|
% +
% C_w\eta\cdot \frac{1}{n}\sum_{i=1}^n |Z_i^2-1|.
% \]
% The first term converges to zero in $L^2$ because the net is finite and the $Z_i$ are
% independent with $\mathbb E[Z_i^2-1]=0$ and $\mathrm{Var}(Z_i^2-1)<\infty$.
% The second term can be made arbitrarily small by choosing $\eta$ small and using the
% law of large numbers for $n^{-1}\sum |Z_i^2-1|$. This proves \eqref{eq:weighted_lln_appendix}.

% Consistency of $\hat\theta_n$ then follows from the standard argmax theorem on the
% compact parameter space $\Theta$ and the uniqueness of the maximizer in
% Assumption~\ref{ass:rdcm_ident}.
\end{myproof}

\begin{myproof}[Proof of Corollary~\ref{cor:rdcm_pre_tau}]
\label{proof:rdcm_pre_tau}
If $l\leq\tau$, then on the whole observation interval $[t_0,l]$ the diffusion coefficient
does not depend on $\theta^+$. Hence the same is true for the limiting contrast, and
\[
M_\infty(\theta^-,\theta^+)=M_\infty^-(\theta^-).
\]
The consistency of $\hat\theta_n^-$ follows by applying Proposition~\ref{prop:rdcm_uniform}
to the reduced visible block. No consistency claim can be made for $\hat\theta_n^+$
because the limit contrast is flat in that component.
\end{myproof}

\subsection{Switching proofs}
\label{app:proofs_switching}

% Requires in the preamble:
% \usepackage{graphicx}
% \usepackage{tikz}
% \usetikzlibrary{arrows.meta,calc,positioning}

\begin{figure}[htbp]
\centering
\resizebox{\textwidth}{!}{%
\begin{tikzpicture}[x=1cm,y=1cm,>=Latex]

% -------------------------------------------------
% Bounding box: enough depth so caption stays below
% -------------------------------------------------
\path[use as bounding box] (-0.25,-4.90) rectangle (12.55,4.85);

% -------------------------------------------------
% Colors
% -------------------------------------------------
\definecolor{goodgreen}{RGB}{207,238,205}
\definecolor{goodedge}{RGB}{45,123,47}
\definecolor{badred}{RGB}{246,197,197}
\definecolor{badedge}{RGB}{155,27,27}
\definecolor{jumpblue}{RGB}{31,90,166}

% -------------------------------------------------
% Basic coordinates
% -------------------------------------------------
\def\xmin{0}
\def\xmax{12}

\def\kone{2.5}
\def\ktwo{6.5}
\def\kthree{9.5}

\def\okone{2}
\def\ukone{3}
\def\oktwo{6}
\def\uktwo{7}
\def\okthree{9}
\def\ukthree{10}

% -------------------------------------------------
% Vertical layout: only spacing increased
% -------------------------------------------------
\def\ytitle{4.50}
\def\ysubtitle{4.08}

\def\ystateLine{3.10}
\def\ystateText{3.35}

\def\ykappa{2.45}
\def\ydashTop{2.15}
\def\ydashBottom{-0.25}

\def\ygoodText{1.72}
\def\ygoodBottom{0.82}
\def\ygoodTop{1.20}

\def\ygridTop{0.50}
\def\ygridMid{0.00}
\def\ygridBot{-0.50}
\def\ytickTop{0.48}
\def\ytickBot{-0.48}

\def\ybadText{-0.3}
\def\yticksText{-0.92}

\def\yover{-1.60}
\def\yunder{-2.18}

\def\yboxTop{-2.10}

% -------------------------------------------------
% Title and subtitle
% -------------------------------------------------
\node[anchor=west,font=\bfseries] at (0,\ytitle)
{SRDCM pathwise decomposition on the observation grid};

\node[anchor=west,font=\normalsize] at (0,\ysubtitle)
{\ \  \ Timeline with $N(s)=3$: green = regime-stable cell, 
red = switch cell.};

% -------------------------------------------------
% State path on top
% -------------------------------------------------
\draw[line width=1.2pt,color=blue] (0,\ystateLine) -- (\kone,\ystateLine);
\draw[line width=1.2pt,color=orange] (\kone,\ystateLine) -- (\ktwo,\ystateLine);
\draw[line width=1.2pt,color=green!60!black] (\ktwo,\ystateLine) -- (\kthree,\ystateLine);
\draw[line width=1.2pt,color=red] (\kthree,\ystateLine) -- (12,\ystateLine);

\node[font=\small] at (1.08,\ystateText) {state $j_0(s)$};
\node[font=\small] at (4.20,\ystateText) {state $j_1(s)$};
\node[font=\small] at (7.86,\ystateText) {state $j_2(s)$};
\node[font=\small] at (10.74,\ystateText) {state $j_3(s)$};

% -------------------------------------------------
% Jump times
% -------------------------------------------------
\draw[dashed,line width=0.8pt,color=jumpblue] (\kone,\ydashBottom) -- (\kone,\ydashTop);
\draw[dashed,line width=0.8pt,color=jumpblue] (\ktwo,\ydashBottom) -- (\ktwo,\ydashTop);
\draw[dashed,line width=0.8pt,color=jumpblue] (\kthree,\ydashBottom) -- (\kthree,\ydashTop);

\node[font=\normalsize,text=jumpblue] at (\kone,\ykappa) {$\kappa_1(s)$};
\node[font=\normalsize,text=jumpblue] at (\ktwo,\ykappa) {$\kappa_2(s)$};
\node[font=\normalsize,text=jumpblue] at (\kthree,\ykappa) {$\kappa_3(s)$};

% -------------------------------------------------
% Good-part labels
% -------------------------------------------------
\node[font=\small,text=goodedge,align=center] at (1.0,\ygoodText)
{stable part 0\\$[t_0,\overline{\kappa}_1)$};

\node[font=\small,text=goodedge,align=center] at (4.5,\ygoodText)
{stable part 1\\$[\underline{\kappa}_1,\overline{\kappa}_2)$};

\node[font=\small,text=goodedge,align=center] at (8.0,\ygoodText)
{stable part 2\\$[\underline{\kappa}_2,\overline{\kappa}_3)$};

\node[font=\small,text=goodedge,align=center] at (11.0,\ygoodText)
{stable part 3\\$[\underline{\kappa}_3,\ell]$};

% -------------------------------------------------
% Good-part bars
% -------------------------------------------------
\filldraw[fill=goodgreen,draw=goodedge,line width=0.6pt]
    (0,\ygoodBottom) rectangle (2,\ygoodTop);
\filldraw[fill=goodgreen,draw=goodedge,line width=0.6pt]
    (3,\ygoodBottom) rectangle (6,\ygoodTop);
\filldraw[fill=goodgreen,draw=goodedge,line width=0.6pt]
    (7,\ygoodBottom) rectangle (9,\ygoodTop);
\filldraw[fill=goodgreen,draw=goodedge,line width=0.6pt]
    (10,\ygoodBottom) rectangle (12,\ygoodTop);

% -------------------------------------------------
% Observation grid
% -------------------------------------------------
\foreach \i in {0,...,11}{
    \pgfmathtruncatemacro{\j}{\i+1}
    \def\fillcolor{white}
    \def\drawcolor{black}
    \def\fillopacity{1}
    \ifnum\i=2 \def\fillcolor{badred} \def\drawcolor{badedge} \def\fillopacity{0.35}  \fi
    \ifnum\i=6 \def\fillcolor{badred} \def\drawcolor{badedge} \def\fillopacity{0.35} \fi
    \ifnum\i=9 \def\fillcolor{badred} \def\drawcolor{badedge} \def\fillopacity{0.35} \fi
     \filldraw[
        fill=\fillcolor,
        fill opacity=\fillopacity,
        draw=\drawcolor,
        draw opacity=1,
        line width=0.5pt
    ] (\i,\ygridBot) rectangle (\j,\ygridTop);
}

\draw[line width=1.2pt] (\xmin,\ygridMid) -- (\xmax,\ygridMid);

\foreach \i in {0,...,12}{
    \draw[line width=0.5pt] (\i,\ytickBot) -- (\i,\ytickTop);
}

\fill[jumpblue] (\kone,\ygridMid) circle (1.2pt);
\fill[jumpblue] (\ktwo,\ygridMid) circle (1.2pt);
\fill[jumpblue] (\kthree,\ygridMid) circle (1.2pt);

% -------------------------------------------------
% Bad-cell labels
% -------------------------------------------------
\node[font=\small,text=badedge] at (2.5,\ybadText) {switch};
\node[font=\small,text=badedge] at (6.5,\ybadText) {switch};
\node[font=\small,text=badedge] at (9.5,\ybadText) {switch};

% -------------------------------------------------
% Grid-point labels t_i
% -------------------------------------------------
\node[font=\small] at (0,\yticksText) {$t_0$};
\foreach \i in {1,...,12}{
    \node[font=\small] at (\i,\yticksText) {$t_{\i}$};
}

% -------------------------------------------------
% Overline / underline labels below
% -------------------------------------------------
\node[font=\small,text=badedge] at (\okone,\yover) {$\overline{\kappa}_1$};
\node[font=\small,text=badedge] at (\oktwo,\yover) {$\overline{\kappa}_2$};
\node[font=\small,text=badedge] at (\okthree,\yover) {$\overline{\kappa}_3$};

\node[font=\small,text=badedge] at (\ukone,\yover) {$\underline{\kappa}_1$};
\node[font=\small,text=badedge] at (\uktwo,\yover) {$\underline{\kappa}_2$};
\node[font=\small,text=badedge] at (\ukthree,\yover) {$\underline{\kappa}_3$};

% -------------------------------------------------
% Definition box
% -------------------------------------------------
\node[
    draw=gray!70,
    rounded corners=2pt,
    fill=gray!10,
    align=left,
    font=\normalsize,
    anchor=north west,
    text width=8cm,
    inner sep=6pt
] at (0,\yboxTop)  {%
$\overline{\kappa}_r(s) := \sup\{t_i : t_i \leq \kappa_r(s)\}$\\
$\underline{\kappa}_r(s) := \inf\{t_i : t_i \geq \kappa_r(s)\}$\\
Regime-stable cells on regime segment $r$: $[\underline{\kappa}_r(s),\overline{\kappa}_{r+1}(s))$\\
Switch cells around  switching time  $r$: $[\overline{\kappa}_r(s),\underline{\kappa}_r(s))$
};

\end{tikzpicture}%
}
\caption{Pathwise decomposition of the observation grid for a trajectory \(s\in S_f([t_0,\ell])\) with \(N(s)=3\). Green intervals correspond to regime-constant cells, while red cells contain
regime-switching times and are referred to as switch cells.
% Green intervals correspond to the good part handled regimewise. Specifically within the green cells the SRDCM behaves as a RDCM since the regime does not change. The red cells contain the jump times and generate the bad part.
}
\label{fig:srdcm-good-bad-decomposition}
\end{figure}

Following the decomposition in Figure \ref{fig:srdcm-good-bad-decomposition}, we define the switch-cell set 
\begin{equation}
J_n(s)
:=
\Big\{
i\in\{1,\dots,n\}:
\{\kappa_1(s),\dots,\kappa_{N(s)}(s)\}\cap (t_{i-1},t_i)\neq\emptyset
\Big\},
\label{bad_cell}
\end{equation}
for any $s\in S_f([t_0,l])$, as the set of indices of subintervals in grid \eqref{eq:grid_hf_rdcm} that contain a switch-time $\kappa_{r}(s)$.
The set $G_n(s):=\{1,\dots,n\}\setminus J_n(s)$ is the complement of \eqref{bad_cell} over the observation grid \eqref{eq:grid_hf_rdcm}. 
As discussed in the following lemma, it is possible to identify for any $s\in S_f([t_0,l])$ a positive integer $n_0(s)$ such that, for any $i \in J_n(s)$, the subinterval $\left[t_{i-1}, t_i\right)$ contains only one switch time.

\begin{mylemma}
\label{lem:one_jump_per_cell}
For every fixed path $s\in S_f([t_0,l])$, there exists $n_0(s)<\infty$ such that for all
$n\ge n_0(s)$ each interval $\left[t_{i-1},t_i\right)$ contains at most one switch of $s$.
Consequently,
\[
|J_n(s)|\le N(s),\qquad n\ge n_0(s).
\]
\end{mylemma}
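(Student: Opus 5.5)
The idea is to use that a fixed path $s\in S_f([t_0,l])$ has finitely many switch times, all distinct, so they are separated by a positive minimal gap. For $N(s)=0$ there is nothing to prove: $J_n(s)=\varnothing$ for every $n$, and we take $n_0(s)=1$. For $N(s)\ge 1$, we would define
\[
\eta(s):=\min_{1\le r\le N(s)}\big(\kappa_{r+1}(s)-\kappa_r(s)\big),
\]
where $\kappa_{N(s)+1}(s)=l$ is included only when $N(s)\ge 2$ (otherwise we set $\eta(s):=l-t_0$). Because the switch times satisfy $t_0<\kappa_1(s)<\cdots<\kappa_{N(s)}(s)<l$ strictly, we have $\eta(s)>0$.

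Next we would choose $n_0(s):=\big\lceil (l-t_0)/\eta(s)\big\rceil+1$. Then for $n\ge n_0(s)$ the mesh in \eqref{eq:grid_hf_rdcm} satisfies $\Delta_n=(l-t_0)/n<\eta(s)$. Suppose some cell $[t_{i-1},t_i)$ contained two switch times $\kappa_r(s)<\kappa_{r'}(s)$. Their distance would then be at most $\kappa_{r'}(s)-\kappa_r(s)<\Delta_n<\eta(s)$. This contradicts the definition of $\eta(s)$, since consecutive switch times are at least $\eta(s)$ apart and non-consecutive ones are even farther apart. Hence each cell contains at most one switch.

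For the cardinality bound, we would define a map $J_n(s)\to\{1,\dots,N(s)\}$ sending $i$ to the index $r$ of a switch time $\kappa_r(s)\in(t_{i-1},t_i)$. Such an $r$ exists by the definition \eqref{bad_cell}, and it is unique by the previous step. The map is injective because the cells $(t_{i-1},t_i)$ are pairwise disjoint, so one switch time cannot be assigned to two different cells. It follows that $|J_n(s)|\le N(s)$.

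In fact this last bound holds for every $n$, since the switch times lying in distinct open cells are distinct; the one-switch-per-cell property is only needed for the first claim. The only point needing care is the strictness of the inequalities: a switch falling exactly on a grid point $t_i$ is not counted in $J_n(s)$, and this only helps the bound. There is no real obstacle, as the argument is purely deterministic and path-by-path. Note that $n_0(s)$ is not uniform in $s$, which matters later when integrating against $\Pi_Q$ but not for this lemma.
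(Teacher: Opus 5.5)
Your proof is correct and follows the paper's route: the paper takes the minimal separation $d_*(s)$ of the finitely many distinct switch times and picks $n_0(s)$ with $\Delta_n<d_*(s)$, exactly as you do with $\eta(s)$. Your version is somewhat more careful than the paper's. It makes $n_0(s)$ explicit and covers $N(s)=1$, where the paper's minimum runs over an empty index set. It also proves $|J_n(s)|\le N(s)$ from the disjointness of the open cells, correctly noting that this bound holds for every $n$.
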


\begin{myproof}
If $N(s)=0$, there is nothing to prove. Otherwise define
\[
d_*(s):=\min_{1\le r\neq h\le N(s)}|\kappa_r(s)-\kappa_h(s)|>0.
\]
Since $\Delta_n\to 0$, there exists $n_0(s)$ such that $\Delta_n<d_*(s)$ for all
$n\ge n_0(s)$. For such $n$, no cell can contain two switches.
\end{myproof}
Let 
\[\overline{\kappa}_r(s) := \sup\{t_i : t_i \leq \kappa_r(s)\}, \qquad \underline{\kappa}_r(s) := \inf\{t_i : t_i \geq \kappa_r(s)\}
\]
We decompose the path-wise limit contrast $M_\infty\left(\theta\mid s\right)$ in \eqref{eq:Uinfty_path_switch_main} in two parts. Specifically, we have: 
\begin{equation}
M_{\infty}\left(\theta \mid s\right)=M^{(0)}_{\infty}\left(\theta \mid s\right)  + M^{(J)}_{\infty}\left(\theta \mid s\right)  
\label{eq:limit_decop_pathwise}
\end{equation}
with $$M^{(0)}_{\infty}\left(\theta \mid s\right):=-\frac{1}{2(l-t_0)}\overset{N(s)}{\underset{r=0}{\sum}} \int_{\underline{\kappa}_{r}(s)}^{\overline{\kappa}_{r+1}(s)}\Psi_{j_r(s)}(u,\theta_{j_r(s)})\,du,$$ $\underline{\kappa}_{r}(s)=0$,  and $M^{(J)}_{\infty}\left(\theta \mid s\right):=-\frac{1}{2(l-t_0)}\overset{N(s)}{\underset{r=1}{\sum}} \int_{\overline{\kappa}_{r}(s)}^{\underline{\kappa}_{r}(s)}\Psi_{j_r(s)}(u,\theta_{j_r(s)})\,du$

Accordingly, we decompose the contrast \eqref{eq:conditional_contrast_switch_main}
\begin{equation}
M_n(\theta\mid s^{\Delta_n})
=
M_n^{(0)}(\theta\mid s^{\Delta_n})
+
M_n^{(J)}(\theta\mid s^{\Delta_n}).
\label{eq:DecomMnPath}
\end{equation}
where $M_n^{(0)}:=\frac{1}{n}\underset{i \in G_n\left(s\right)}{\sum}
\log
\tilde f_{t_i\mid t_{i-1},s^{\Delta_n}_{t_{i-1}}}
\big(X_{t_i}\mid X_{t_{i-1}};\theta_{s^{\Delta_n}_{t_{i-1}}}\big)$ is the contribution of the regime-stable cells and $$M_n^{(J)}:=\frac{1}{n}\underset{i \in J_n\left(s\right)}{\sum}
\log
\tilde f_{t_i\mid t_{i-1},s^{\Delta_n}_{t_{i-1}}}
\big(X_{t_i}\mid X_{t_{i-1}};\theta_{s^{\Delta_n}_{t_{i-1}}}\big)$$ that of the switch cells. Using the structure in \eqref{eq:DecomMnPath} and 
\eqref{eq:limit_decop_pathwise} we can naturally achieve the following path-wise uniform convergence in probability for $M_{n}\left(\theta\mid s^{\Delta_n}\right)$ to $M_{\infty}\left(\theta\mid s^{\Delta_n}\right)$ over the compact set. 
\begin{mylemma}\label{lem:pathwise_uniform_switch}
Under Assumption~\ref{ass:srdcm_hf}, for every fixed path
$s\in S_f([t_0,l])$, let $n_0(s)<\infty$ be as in Lemma~\ref{lem:one_jump_per_cell}.
Then, for all $n\ge n_0(s)$, the following results hold:
\begin{enumerate}
    \item[(i)] There exists a random variable $\Xi_n(s)=O_p(1)$, uniform in
$\theta\in\Theta^{\times 2}$, such that
\begin{equation}
\sup_{\theta\in\Theta^{\times 2}}
\bigl|M_n^{(J)}(\theta\mid s^{\Delta_n})\bigr|
\le
\frac{N(s)}{n}\,\Xi_n(s).
\label{aglioprezzemolo}
\end{equation}
\item[(ii)] $M_{\infty}^{J}\left(\theta\mid s\right)$ converges uniformly to 0 as $n\rightarrow+\infty$ over $\Theta^2$, specifically $\underset{{\theta\in\Theta^{\times 2}}}\sup\big| M_{\infty}^{J}\left(\theta\mid s\right)\big|=O\left(\frac{N(s)}{n}\right)$.
\item[(iii)] $M_n^{(J)}(\theta\mid s^{\Delta_n})$ converges uniformly in probability to $M_{\infty}^{(J)}\left(\theta\mid s\right)$ on the compact set $\Theta^{\times2}$, specifically:
\begin{equation}
\sup_{\theta\in\Theta^{\times 2}}\Big|M_{n}^{\left(J\right)}\left(\theta\mid s^{\Delta_n}\right)-M_{\infty}^{\left(J\right)}\left(\theta\mid s\right)\Big|=O_p\left(\frac{N(s)}{n}\right), \qquad n\rightarrow \infty.
\label{echecavolo}
\end{equation}
\item[(iv)] $M_n\left(\theta\mid s^{\Delta_n}\right)$ and $M_{\infty}\left(\theta\mid s\right)$ satisfy:
\begin{equation}
\sup_{\theta\in\Theta^{\times2}}\Big|M_{n}\left(\theta\mid s^{\Delta_n}\right)-M_{\infty}\left(\theta\mid s\right)\Big|=o_p(1)+O_p\left(\frac{N(s)}{n}\right), \qquad n\rightarrow \infty.    
\label{uniformlyBoundenessPathwise}
\end{equation}
since, for any $s \in S_f([t_0,l])$, $N(s)<\infty$, the result in  \eqref{uniformlyBoundenessPathwise}
implies the following uniform path-wise convergence, that is:
\begin{equation}
\sup_{\theta\in\Theta^{\times2}}\Big|M_{n}\left(\theta\mid s^{\Delta_n}\right)-M_{\infty}\left(\theta\mid s\right)\Big| \overset{\mathbb P}{\longrightarrow}0.
\label{eq:PathwiseUNIFORMCONVERGENCEOFMn}
\end{equation}
\end{enumerate}
\end{mylemma}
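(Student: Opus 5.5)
The plan is to fix $s\in S_f([t_0,l])$ and $n\ge n_0(s)$, so that Lemma~\ref{lem:one_jump_per_cell} gives $|J_n(s)|\le N(s)$, and then treat the four items in order, reusing the RDCM expansions of Lemmas~\ref{lem:variance_expansion_rdcm}--\ref{lem:quadratic_expansion_rdcm} regime by regime. The uniformity in $j$ granted by Assumption~\ref{ass:srdcm_hf}(2) lets all constants be taken independent of the regime.

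\textbf{Item (i).} On a switch cell $i\in J_n(s)$ the summand is $\log\tilde f_{t_i\mid t_{i-1},j}(\theta)$ with $j=s^{\Delta_n}_{t_{i-1}}$. The observation $X_{t_i}$, however, was generated partly under the other regime, since the true $S$ switches inside $(t_{i-1},t_i)$.
\begin{itemize}
\item \emph{Bridge term.} By the gap condition $l<T_1-\delta$, the term $\tilde R_{i,j}$ is the difference of two Gaussian log-densities whose variances $v_{T_j\mid t}$ are bounded above and below. It is therefore bounded by $C(1+\sup_{t\le l}|X_t|^2)$, as in Step~1 of the proof of Proposition~\ref{prop:rdcm_uniform}.
\item \emph{Log-variance term.} By ellipticity and Lemma~\ref{lem:variance_expansion_rdcm}, this term is bounded uniformly in $\theta$.
\item \emph{Quadratic term.} It is at most $c_g^{-1}(X_{t_i}-m_{t_i\mid t_{i-1},j})^2/\Delta_n$. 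Writing the true increment as a drift part of size $O(\Delta_n)$ plus a stochastic integral with bounded integrand (piecewise $g_1$ or $g_2$), the quadratic term is $O_p(1)$ per cell.
\end{itemize}
I would set $\Xi_n(s)$ equal to the maximum over the at most $N(s)$ switch cells of these bounds. A maximum of finitely many $O_p(1)$ variables is $O_p(1)$, which yields \eqref{aglioprezzemolo} after the $1/n$ factor.

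\textbf{Items (ii) and (iii).} For (ii), each integral in $M_\infty^{(J)}$ runs over an interval of length $\underline\kappa_r-\overline\kappa_r\le\Delta_n=(l-t_0)/n$. Its integrand $\Psi_j$ is bounded uniformly on $[t_0,l]\times\Theta$ by continuity and ellipticity. Summing over the $N(s)$ switch times gives $O(N(s)/n)$. Item (iii) then follows from the triangle inequality combining (i) and (ii).

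\textbf{Item (iv).} This is the main step: showing that $M_n^{(0)}(\theta\mid s^{\Delta_n})\to M_\infty^{(0)}(\theta\mid s)$ uniformly in probability. On each stable segment $[\underline\kappa_r,\overline\kappa_{r+1})$ the true regime equals $j_r(s)$ and coincides with the one used in the contrast. There I would repeat Steps~2--3 of the proof of Proposition~\ref{prop:rdcm_uniform} verbatim on the sub-grid:
\begin{itemize}
\item the left-Riemann sums of $\log g_j^2$ and of $g_j^2(\cdot,\theta_{0,j})/g_j^2(\cdot,\theta_j)$ converge uniformly in $\theta$;
\item the centered term $\sum w_i(\theta)(Z_i^2-1)\Delta_n$ is killed by the same Chebyshev second-moment bound.
\end{itemize}
The bridge pieces $\tilde R_{i,j}$ telescope inside each segment, as in \eqref{RegimewisetelescopicTerm}, to $N(s)+1$ boundary terms of order $O_p(1)$. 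Divided by $n$, these are $O_p(N(s)/n)$.

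\textbf{Expected obstacle.} The delicate point is that Lemma~\ref{lem:prediction_error_rdcm} uses the true dynamics of $X$ under the random CTMC $S$, whereas the path $s$ is fixed and may differ from the realized one. I would resolve this by conditioning on $S$, which is allowed because $S$ is independent of $W$, and applying the expansions on the event where the realized switches coincide with those of $s$, which is the relevant case. For segments where the regimes differ, the $Z_i^2$ weights pick up $g^2_{\text{true}}$ instead of $g^2_{j,0}$. The Riemann-sum argument still yields a deterministic limit, but the identification of $\Psi_j$ in \eqref{eq:Psi_switch_main} requires the matching case.

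Summing the stable and switch contributions gives \eqref{uniformlyBoundenessPathwise}. Since $N(s)<\infty$, this also gives \eqref{eq:PathwiseUNIFORMCONVERGENCEOFMn}.
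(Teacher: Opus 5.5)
Your route for (i)--(iii) and for the regime-stable cells in (iv) is the paper's. You bound each switch cell by an $O_p(1)$ quantity and use $|J_n(s)|\le N(s)$ from Lemma~\ref{lem:one_jump_per_cell}. You bound $M_\infty^{(J)}$ by $\sup|\Psi_j|$ times cell lengths at most $\Delta_n$, and combine via the triangle inequality. Finally, you rerun Steps~1--3 of Proposition~\ref{prop:rdcm_uniform} regime-wise on the segments $[\underline\kappa_r,\overline\kappa_{r+1})$. Your per-cell bound in (i) is more explicit than the paper's. It only uses that the true increment is an $O(\Delta_n)$ drift plus a stochastic integral with bounded integrand, so (i)--(iii) hold whichever regime actually generated the data.

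The problem is the fix in your ``expected obstacle'' paragraph. Restricting to the event on which the realized switches of $S$ coincide with those of $s$ does not give (iv) in $\mathbb P$-probability. For $N(s)\ge1$ that event is $\Pi_Q$-null, since CTMC holding times have densities. For $N(s)=0$ it has probability strictly below one.

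Off that event, as you observe, the weights become $g^2_{S_{t_{i-1}}}(T_{S_{t_{i-1}}}-t_{i-1},\theta_{0,S_{t_{i-1}}})/g^2_{j}(T_j-t_{i-1},\theta_j)$; the drift mismatch is asymptotically negligible. The quadratic part of $M_n(\theta\mid s^{\Delta_n})$ then converges to a random, $S$-dependent limit. This limit differs from the corresponding term of $M_\infty(\theta\mid s)$ whenever the true diffusion coefficients of the two regimes differ where $S\neq s$. For example, take $s\equiv1$ with $g_1^2(T_1-u,\theta_{0,1})\neq g_2^2(T_2-u,\theta_{0,2})$ for all $u$, on the positive-probability event that $S$ spends positive time in regime~2. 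So (iv), read as convergence in $\mathbb P$-probability for a fixed $s$ with data generated under the random $S$, is false in general.

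What your argument does prove is (iv) under the conditional law $\mathbb P(\cdot\mid S=s)$. This is legitimate because $S$ is independent of $W$: given $S=s$, $X$ is the solution driven by $W$ along the fixed path $s$. The paper's own step for (iv) silently assumes the same: it says the active regime is constant on each stable interval of $s$, so Proposition~\ref{prop:rdcm_uniform} applies regime-wise. You should state the conditional interpretation as an explicit hypothesis rather than as a restriction to an event. You should also drop the claim that the matched case is ``the relevant case''. The log-sum-exp contrast \eqref{eq:observed_switch_contrast_main} pairs a single realized data set with every path $s$, and most of those paths are mismatched with the realized $S$.
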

\begin{myproof}
\begin{enumerate}
\item[(i)] To obtain \eqref{aglioprezzemolo}, using set $J_n(s)$ in \eqref{bad_cell} we write $M_{n}^{\left(J\right)}\left(\theta\mid s^{\Delta_n}\right)$ as: 
\begin{equation*}
    M_{n}^{\left(J\right)}\left(\theta\mid s^{\Delta_n}\right)=\frac{1}{n}\sum_{i \in J_n\left(s\right)}\log \tilde{f}_{t_i\mid t_{i-1}, s^{\Delta_n}_{t_{i-1}}}\Big(X_{t_i}\mid X_{t_{i-1}};\theta_{s^{\Delta_n}_{t_{i-1}}}\Big)
\end{equation*}
where the log-rescaled density $\log \tilde{f}_{t_i\mid t_{i-1}, s^{\Delta_n}_{t_{i-1}}}\Big(X_{t_i}\mid X_{t_{i-1}};\theta_{s^{\Delta_n}_{t_{i-1}}}\Big)$ is defined in \eqref{eq:rescaled_bridge_dens_my0}. Using the same arguments for the RDCM, the latter quantity $\log \tilde{f}_{t_i\mid t_{i-1}, s^{\Delta_n}_{t_{i-1}}}\Big(X_{t_i}\mid X_{t_{i-1}};\theta_{s^{\Delta_n}_{t_{i-1}}}\Big)$  is $O_p(1)$ uniformly on $\left[t_0, l\right]\times \Theta^{\times2}$ due to Assumption \ref{ass:srdcm_hf} - (2) that upgrades Assumption \ref{ass:rdcm_hf} regime-wise.\footnote{In particular, the uniform boundedness of the log-rescaled density arises from the regime-wise uniform ellipticity condition and the uniform Lipschitz continuity of $g_j$, regime-wise.} Therefore:
\[
\frac{1}{n}\sup_{\theta \in \Theta^{\times2}}\Big|\sum_{i \in J_n(s)}\log \tilde{f}_{t_i\mid t_{i-1}, s^{\Delta_n}_{t_{i-1}}}\Big(X_{t_i}\mid X_{t_{i-1}};\theta_{s^{\Delta_n}_{t_{i-1}}}\Big)\Big|=O_p(1)\frac{\left|J_n(s)\right|}{n}
\] where $\left|J_n(s)\right|$ is the cardinality of \eqref{bad_cell}. Using Lemma \ref{lem:one_jump_per_cell} we have the result in \eqref{aglioprezzemolo}.
\item[(ii)] By the regime-wise version of Assumption~\ref{ass:srdcm_hf},
the function $\Psi_{j_r(s)}(u,\theta_{j_r(s)})$ is uniformly bounded on the
relevant compact set. Moreover, by Lemma~\ref{lem:one_jump_per_cell}, each switch
cell contains only one switch, hence on each switch interval the integrand has at
most one jump discontinuity. Therefore,

\[
\frac{1}{2\left(l-t_0\right)}\sup_{\theta \in \Theta^{\times2}}\Big|\sum_{r=1}^{N(s)}\int^{\underline{\kappa}_{r}(s)}_{\overline{\kappa}_{r}(s)}\Psi_{j_r(s)}(u,\theta_{j_r(s)})\,du\Big|=O\left(\frac{N(s)\Delta_n}{2\left(l-t_0\right)}\right).
\]
Since $\left(l-t_0\right)=n \Delta_n$ and $N(s)<\infty$ for any $s\in S_f\left(\left[t_0,l\right]\right)$ we obtain $M_{\infty}^{\left(J\right)}\left(\theta|s\right)=O\left(\frac{N(s)}{n}\right)$.
\item[(iii)] The result in \eqref{echecavolo} comes directly from the previous two points.
 \item[(iv)] On each regime-stable interval determined by the fixed path $s$, the active regime is constant. Hence, by applying Proposition~\ref{prop:rdcm_uniform} regime-wise, the regime-stable cell contribution is $o_p(1)$. Combining this with point (iii) and using the triangle inequality yields \eqref{uniformlyBoundenessPathwise}. Since $N(s)<\infty$ for every fixed $s\in S_f([t_0,l])$, \eqref{eq:PathwiseUNIFORMCONVERGENCEOFMn} follows.

\end{enumerate}
\end{myproof}
The following Lemma plays the role of a switching-specific global comparison bound for the log-sum-exp aggregation over admissible paths, ensuring that the positive deviation of the observed contrast remains uniformly comparable across paths; compare \cite{Yoshida2022QLA,cheng2026quasi}

\begin{mylemma}
\label{lem:global_positive_comparison}
Under Assumption~6, there exist a deterministic constant \(C_{\star}>0\) and a sequence of nonnegative random variables \(\{r_n\}_{n\ge1}\) such that $r_n=o_P(1),$
% \[
% r_n=o_P(1),
% \]
and, for every \(s\in S_f([t_0,l])\) and every \(n\ge1\),
\begin{equation}
\sup_{\theta\in\Theta^{\times 2}}
\Bigl[
M_n\!\left(\theta \mid s^{\Delta_n}\right)
-
M_\infty\!\left(\theta \mid s\right)
\Bigr]_+
\le
r_n+\frac{C_{\star}N(s)}{n}.
\label{eq:global_positive_comparison}
\end{equation}
\end{mylemma}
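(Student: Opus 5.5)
The idea is to build a single bound that works for every path at once, with only the number of switches $N(s)$ depending on the path. The randomness is collected into finitely many path-independent quantities, one per regime and one global.

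For a path $s$ and grid index $i$, the cell $(t_{i-1},t_i]$ is governed by regime $j=s(t_{i-1})$. The rescaled log-density in \eqref{eq:rescaled_bridge_dens_my0} depends on $s$ only through this label $j$. Hence, for $j\in\{1,2\}$ and $i=1,\dots,n$, I will work with the two path-free arrays
\[
L_{i,j}(\theta_j):=\log\tilde f_{t_i\mid t_{i-1},j}(\theta_j),
\qquad
\lambda_{i,j}(\theta_j):=-\frac{1}{2\Delta_n}\int_{t_{i-1}}^{t_i}\Psi_j(u,\theta_j)\,du .
\]
With this notation, $M_n(\theta\mid s^{\Delta_n})=\frac1n\sum_i L_{i,s(t_{i-1})}$.

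On regime-stable cells the limit contrast $M_\infty(\theta\mid s)$ equals $\frac1n\sum_i\lambda_{i,s(t_{i-1})}$. Each switch cell contributes an error of $O(\Delta_n)$ to the integral, by boundedness of $\Psi_j$ (Lemma~\ref{lem:pathwise_uniform_switch}(ii)). This gives the deterministic part
\[
\sup_\theta\Bigl|M_\infty(\theta\mid s)-\frac1n\sum_i\lambda_{i,s(t_{i-1})}(\theta)\Bigr|\le \frac{C_1N(s)}{n}.
\]
That bound holds for every $n$, without needing Lemma~\ref{lem:one_jump_per_cell}: a cell containing several switches still has length $\Delta_n$, and there are at most $N(s)$ such cells.

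Next I split $L_{i,j}-\lambda_{i,j}$, following Lemmas~\ref{lem:variance_expansion_rdcm}--\ref{lem:quadratic_expansion_rdcm} regime-wise, into four pieces:
\begin{itemize}
\item a telescopic piece $\tilde R_{i,j}$;
\item a deterministic Riemann error $e_{i,j}(\theta)$ with $\sup|e_{i,j}|\le C\Delta_n$;
\item a martingale-type piece $-\tfrac12 w_{i,j}(\theta)(Z_i^2-1)$;
\item a remainder $\rho_{i,j}(\theta)$.
\end{itemize}
The key fact is that the innovation in a cell is driven by the Brownian increment, with true coefficient $g_{s(t_{i-1})}(\cdot,\theta_{0})$ of the \emph{true} regime. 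The analogues of Lemmas~\ref{lem:prediction_error_rdcm}--\ref{lem:quadratic_expansion_rdcm} hold uniformly over the finitely many label pairs (true, fitted).

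Summing over $i$ along $s$, the pieces are handled as follows.
\begin{itemize}
\item Telescopic piece: it collapses within each constant block, as in \eqref{RegimewisetelescopicTerm}. This leaves at most $2(N(s)+1)$ endpoint terms $\log f_{T_j\mid t}(0;\theta_j)$. Each is bounded by $\Xi:=\sup_{j,t\le l,\theta}|\log f_{T_j\mid t}(0;\theta_j)|=O_p(1)$, using the Step~1 argument of Proposition~\ref{prop:rdcm_uniform} with the gap condition. This contributes $2\Xi(N(s)+1)/n$.
\item Riemann errors: they contribute at most $C$ times the mesh, which is $o(1)$ and path-free.
\item Remainders: I bound $|\rho_{i,j}|$ by path-free nonnegative quantities $\bar\rho_i:=\max_j\sup_\theta|\rho_{i,j}|$. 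The proof of Lemma~\ref{lem:quadratic_expansion_rdcm} gives $\frac1n\sum_i\bar\rho_i=o_P(1)$.
\end{itemize}

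The hard step is the martingale term $\frac1n\sum_i w_{i,s(t_{i-1})}(\theta)(Z_i^2-1)$, where the selection depends on $s$. A supremum over all paths of a weighted sum with path-chosen weights is not small in general. The first thing I would try uses the sign structure. Split $Z_i^2-1=(Z_i^2-1)_+-(Z_i^2-1)_-$. Since $w_{i,j}\ge0$ is bounded by some $\bar w$, we get
\[
-\tfrac12 w_{i,s}(Z_i^2-1)\le \tfrac12\bar w\,(Z_i^2-1)_- ,
\]
which is path-free but has positive mean, so by itself it is too crude.

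I would therefore sharpen this by a crossing-free bound. Write $w_{i,j}(\theta)=w_j(t_{i-1},\theta)$, a Lipschitz function of time. On each constant block of $s$, use a summation-by-parts (Abel) inequality: the block sum is bounded by $\bar w\cdot 2\max_k|\sum_{i\le k}(Z_i^2-1)|/n$ plus a Lipschitz-variation term. The Lipschitz-variation term is $\mathrm{Lip}\cdot\Delta_n\sum_i|Z_i^2-1|/n=o_P(1)$. Summing over the $N(s)+1$ blocks gives
\[
\frac{1}{n}\Bigl|\sum_i w_{i,s(t_{i-1})}(Z_i^2-1)\Bigr|\le (N(s)+1)\,\frac{4\bar w}{n}\max_{k\le n}\Bigl|\sum_{i\le k}(Z_i^2-1)\Bigr|+o_P(1).
\]
By Doob's (Kolmogorov's) inequality, $\frac1n\max_k|\sum_{i\le k}(Z_i^2-1)|=O_P(n^{-1/2})$. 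The resulting term has the form $N(s)\cdot o_P(1)$, which does not yet match the required $C_\star N(s)/n$ with deterministic $C_\star$. To reconcile this I would absorb it as follows. Set
\[
r_n:=\text{all path-free }o_P(1)\text{ terms}+\frac{1}{n}\max_k\Bigl|\sum_{i\le k}(Z_i^2-1)\Bigr|\,\sqrt n,
\]
and use $N(s)\,a_n\le a_n\sqrt n+N(s)^2/\sqrt n$. Failing that, I would truncate: on $\{N(s)\le\sqrt n\}$ the bound above suffices, and on the complement I would use the crude path-free bound $\tfrac12\bar w(Z_i^2-1)_-$ together with $N(s)/n\ge n^{-1/2}$ to dominate it by $C_\star N(s)/n$.

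Finally, $\Xi$ also multiplies $N(s)/n$ and is random. I would handle it by the same splitting: $\Xi N(s)/n\le C_\star N(s)/n+\Xi\,\mathbf 1\{\Xi>C_\star\}\cdot(\text{bounded})$. This last step is the one I consider least solid, and it may require enlarging the class of admissible random bounds or fixing $C_\star$ on an event of probability at least $1-\varepsilon$.

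Collecting all terms, with $r_n$ the sum of the path-free vanishing pieces, yields \eqref{eq:global_positive_comparison}.
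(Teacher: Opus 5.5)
Your proposal has a genuine gap, located exactly where you expect it, and it cannot be closed in the form the statement is written. The Abel-summation/Kolmogorov estimate $(N(s)+1)\,O_P(n^{-1/2})$ for $\frac1n\sum_i w_{i,s(t_{i-1})}(\theta)(Z_i^2-1)$ is correct. Neither absorption works, however.
In $N(s)a_n\le a_n\sqrt n+N(s)^2/\sqrt n$, the term $a_n\sqrt n$ is $O_P(1)$ rather than $o_P(1)$, and $N(s)^2/\sqrt n$ is not of the form $C_\star N(s)/n$.
In the truncation, on $\{N(s)\le\sqrt n\}$ your bound is only $O_P(1)$. For $\sqrt n<N(s)=o(n)$, the crude bound $\frac{\bar w}{2n}\sum_i(Z_i^2-1)_-$ tends to the positive constant $\frac{\bar w}{2}\mathbb E[(Z^2-1)_-]$, while $C_\star N(s)/n\to0$.

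More fundamentally, even granting your expansion on every cell, no argument yields an $s$-free $r_n=o_P(1)$ with a deterministic linear penalty. Fix $\theta$ with $w_2(\cdot,\theta)-w_1(\cdot,\theta)=:\Delta\neq0$ on an interval. Cut that interval into blocks of $L$ consecutive cells. Let $s$ use the larger-weight regime exactly on the blocks $B$ with $\sum_{i\in B}(1-Z_i^2)>0$.
By the law of large numbers over the $O(n/L)$ independent blocks, this path exceeds the constant path by an amount of order $|\Delta|L^{-1/2}$. Meanwhile $C_\star N(s)/n\le 2C_\star/L$, and the telescoping and switch-cell terms are $O_P(1/L)$.
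So for a fixed large $L$, the left side of \eqref{eq:global_positive_comparison} minus $C_\star N(s)/n$ stays above a fixed positive constant on an event of probability bounded away from zero. This is incompatible with $r_n=o_P(1)$. Since $M_n(\theta\mid s^{\Delta_n})$ depends on $s$ only through its grid pattern, such paths even carry positive $\Pi_Q$-mass, so a $\Pi_Q$-a.e.\ reading does not help.

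The paper's proof does not supply the missing ingredient either. On the regime-stable cells it simply asserts $\sup_\theta[M_n^{(0)}-M_\infty^{(0)}]_+\le r_n|G_n(s)|/n$, with $r_n$ independent of $s$, by applying Proposition~\ref{prop:rdcm_uniform} regime-wise. That proposition controls one regime over a full window, not sums over path-dependent selections of cells, which is exactly the difficulty you isolated. Within your decomposition, what the Abel argument does give is the bound uniformly over paths with $N(s)=o(\sqrt n)$. Paths with many switches would have to be controlled differently in Theorem~\ref{thm:srdcm_uniform}, not through a linear penalty.

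There are two further points.

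First, your claim that the expansions hold uniformly over label pairs (true, fitted), with $\sup|e_{i,j}|\le C\Delta_n$, fails for mismatched pairs. If the data on cell $i$ are generated by regime $h\neq j$, then $L_{i,j}-\lambda_{i,j}$ has mean approximately $\tfrac12\bigl[g_j^2(\theta_{0,j})-g_h^2(\theta_{0,h})\bigr]/g_j^2(\theta_j)$ (time arguments suppressed), which is of order one.
Consequently, for a path that disagrees with the latent regime path on a set of positive length, $M_n(\theta\mid s^{\Delta_n})-M_\infty(\theta\mid s)$ converges to a nonzero constant even when $N(s)=0$. It is positive when the fitted regime has the larger true volatility. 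Writing $g_{s(t_{i-1})}(\cdot,\theta_0)$ for the true coefficient conflates the candidate path $s$ with the data-generating one; the paper's regime-wise appeal to Proposition~\ref{prop:rdcm_uniform} makes the same identification.

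Second, the step you regard as least solid, the random factor $\Xi$ in front of $N(s)/n$, is only an artifact of telescoping. The paper avoids it on switch cells by writing $\log\tilde f=-\frac12\log(\sigma^2/\Delta_n)-(X_{t_i}-k)^2/(2\sigma^2)$ in bridge form. Its positive part is bounded by a deterministic $C_2$, by Lemma~\ref{Factor} and uniform ellipticity. On the remaining cells you can compare the bridge and unconstrained forms cell by cell. This gives a path-free discrepancy of total size $O_P(\sqrt{\Delta_n})$ and produces no endpoint terms at all.
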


\begin{myproof}
\textit{Regime-Stable Cells: global upper bound.}
We first study the uniform lim sup behaviour of the regime-stable cells contribution
\(
M_n^{(0)}(\theta\mid s^{\Delta_n})-M_\infty^{(0)}(\theta\mid s).
\)
For any \(s\in S_f([t_0,l])\), the number of regime-stable cells satisfies
\(
|G_n(s)|\le n.
\)
Since on each regime-stable cell the regime is constant, Proposition~\ref{prop:rdcm_uniform}
applies regime-wise. Because the corresponding bound is uniform on
\(
[t_0,l]\times\Theta^{\times2}
\)
and the number of regimes is finite, there exists a sequence of nonnegative random variables
\(
\{r_n\}_{n\ge1}
\)
with
\(
r_n=o_P(1),
\)
independent of \(s\), such that for every \(s\in S_f([t_0,l])\) and every \(n\ge1\),
\begin{equation}
\sup_{\theta\in\Theta^{\times2}}
\Big[
M_n^{(0)}(\theta\mid s^{\Delta_n})-M_\infty^{(0)}(\theta\mid s)
\Big]_+
\le
\frac{r_n|G_n(s)|}{n}
\le r_n.
\label{ehlafatica0}
\end{equation}

\medskip
\noindent
\textit{Switch Cells: global upper bound}. 
For the switch cells, we do not need a two-sided control of the whole local discrepancy. It is enough to bound from above the positive part of the regime-wise log-scaled bridge density
$
\Bigl[\log \tilde{f}_{t_i\mid t_{i-1},\,s^{\Delta_n}_{t_{i-1}}}(\theta)\Bigr]_+ ,
$
appearing in \eqref{eq:rescaled_bridge_dens_my0}.\footnote{We write \([\cdot]_+\) for the positive part operator.}
To prove this, it is convenient to express the local bridge density through the corresponding regime-wise bridge moments, namely
\[
k_{t_i\mid t_{i-1},\,s^{\Delta_n}_{t_{i-1}}}:=k_{t_i\mid t_{i-1},\,s^{\Delta_n}_{t_{i-1}}}
\bigl(X_{t_{i-1}};\theta_{s^{\Delta_n}_{t_{i-1}}}\bigr), \qquad \sigma^2_{t_i\mid t_{i-1},\,s^{\Delta_n}_{t_{i-1}}}:=\sigma^2_{t_i\mid t_{i-1},\,s^{\Delta_n}_{t_{i-1}}}
\bigl(\theta_{s^{\Delta_n}_{t_{i-1}}}\bigr)
\]
in \eqref{k}, and in \eqref{sigma} respectively.\footnote{For the bridge moments, we only display the arguments relevant for the present bound, namely the parameter block \(\theta\) and the stochastic input \(X_{t_{i-1}}\).} 
By Lemma~\ref{Factor}, the bridge variance
\(
\sigma^2_{t_i\mid t_{i-1},\,s^{\Delta_n}_{t_{i-1}}}
\)
factorizes as the product of
\(
v_{t_i\mid t_{i-1},\,s^{\Delta_n}_{t_{i-1}}}
\)
and the ratio
\(
R_{t_i\mid t_{i-1}}\in(0,1).
\)
Moreover, this ratio is uniformly bounded away from zero. Indeed, its numerator
\(
v_{T_{s^{\Delta_n}_{t_{i-1}}}\mid t_i}
\)
is bounded from below by a strictly positive constant, uniformly on
\(
[t_0,l]\times\Theta^{\times2},
\)
by the regime-wise ellipticity condition in Assumption~\ref{ass:rdcm_hf}.
\footnote{The denominator is the sum of the numerator and a nonnegative one-step variance term, and both quantities are uniformly controlled on the relevant space-time-parameter domain by continuity of the regime-wise diffusion coefficient and compactness, the ratio is uniformly bounded away from zero and above by \(1\).} Hence the rescaled bridge variance is uniformly controlled on
\(
[t_0,l]\times\Theta^{\times2}.
\)
The only stochastic term in the log-rescaled density is the quadratic Gaussian kernel, which enters with negative sign and is therefore bounded above by \(0\). It follows that there exists a deterministic constant \(C_2>0\), uniform on
\(
[t_0,l]\times\Theta^{\times2},
\)
such that
\begin{equation}
\Bigl[
\log \tilde{f}_{t_i\mid t_{i-1},\,s^{\Delta_n}_{t_{i-1}}}(\theta)
\Bigr]_+
\le C_2.
\label{Ehlafatica}
\end{equation}
Now we go directly on the upper limit bound of the switch cells. We have to consider the following set of inequalities that exploits the property $\left[\sum_{j}y_j\right]_{+}\leq \sum_{j}\left[y_j\right]_+$:
\begin{eqnarray}
M_n^{(J)}(\theta\mid s^{\Delta_n})&\leq&\left[M_n^{(J)}(\theta\mid s^{\Delta_n})\right]_{+}\leq \frac{1}{n}\underset{i \in J_n\left(s\right)}{\sum}
\left[\log
\tilde f_{t_i\mid t_{i-1},s^{\Delta_n}_{t_{i-1}}}\right]_+\nonumber\\
&\leq& \frac{C_2\left|J_{n}\left(s\right)\right|}{n}\leq \frac{C_2\left|N\left(s\right)\right|}{n}.\nonumber\\
\label{ehlafatica2}
\end{eqnarray}
the last inequality holds by Lemma~\ref{lem:one_jump_per_cell}, definitely we have
\(
|J_n(s)|\le N(s).
\)
Moreover, by point (ii) in Lemma~\ref{lem:pathwise_uniform_switch},
\[
\sup_{\theta\in\Theta^{\times2}}
\left|M_\infty^{(J)}(\theta\mid s)\right|
\leq
\frac{C_1 N(s)}{n}.
\]
Therefore,
\[
\Bigl[
M_n^{(J)}(\theta\mid s^{\Delta_n})
-
M_\infty^{(J)}(\theta\mid s)
\Bigr]_+
\leq
\Bigl[
M_n^{(J)}(\theta\mid s^{\Delta_n})
\Bigr]_+
+
\Bigl|
M_\infty^{(J)}(\theta\mid s)
\Bigr|,
\]
and hence
\begin{equation}
\sup_{\theta\in\Theta^{\times2}}
\Bigl[
M_n^{(J)}(\theta\mid s^{\Delta_n})
-
M_\infty^{(J)}(\theta\mid s)
\Bigr]_+
\leq
\frac{(C_1+C_2)N(s)}{n}.
\label{ehlafatica3}
\end{equation}
Setting
\(
C_\star:=C_1+C_2,
\)
and combining \eqref{ehlafatica0} with \eqref{ehlafatica3}, we obtain \eqref{eq:global_positive_comparison}.
\end{myproof}
We now upgrade the path-wise uniform control obtained in Lemma~\ref{lem:pathwise_uniform_switch}
to cylinder neighborhoods, which will be used in the lower-bound argument for the
log-sum-exp contrast. The key point is that, on a cylinder set, the number of switches
and the regime ordering are fixed, while the switch times vary only within compact disjoint
intervals.

Hence the path-wise estimates derived above extend uniformly over the whole
cylinder. The next lemmas should therefore be viewed as routine adaptations of the
path-wise control in Lemma~\ref{lem:pathwise_uniform_switch} to the present switching
setting, combining the standard switch-time viewpoint for finite-state continuous-time
Markov chains; see \cite{norris1997markov,andersonkurtz2015ctmc}, with a switch-time
approximation logic on finer and finer grids, in the spirit of
\cite{maller2008garch,iacus2018discrete}.

\begin{mylemma}
\label{lem:positive_cylinder_prob}
Every non-empty cylinder neighborhood $C\subset S_f([t_0,l])$ has strictly positive
$\Pi_Q$-probability.
\end{mylemma}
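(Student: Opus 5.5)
Since "cylinder neighborhood" is not formally defined before the statement, I will use the description given just above it. A non-empty cylinder $C$ fixes a number of switches $N$, an ordered regime sequence $(j_0,\dots,j_N)$ with $j_{r}\neq j_{r-1}$, and pairwise disjoint compact intervals $I_1<\dots<I_N$ in $(t_0,l)$, each of positive length, with $\kappa_r(s)\in I_r$. The plan is to compute the $\Pi_Q$-probability of $C$ explicitly through the standard holding-time/jump-chain construction of the CTMC, following \cite{norris1997markov}, and then show that the resulting density is strictly positive. If a cylinder is instead defined through open intervals, or through intervals containing a nondegenerate subinterval, the same argument applies after shrinking to a compact subinterval of positive length.

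\textbf{Step 1 (initial state).} Every regime has positive initial probability; this is standing in the paper, since $\pi_{0,j}>0$. Conditionally on $S_{t_0}=j_0$, the CTMC on $\{1,2\}$ has holding rates $\lambda_j:=-q_{jj}=q_{j,3-j}\in[\underline q,\bar q]$ by Assumption~\ref{ass:srdcm_hf}(1). Its jump chain alternates deterministically, so the regime ordering prescribed by $C$ is automatically the one realized, given $N$ jumps.

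\textbf{Step 2 (joint density of switch times).} On the event of exactly $N$ jumps in $[t_0,l]$, the switch times $(\kappa_1,\dots,\kappa_N)$ have joint density on $\{t_0<u_1<\dots<u_N<l\}$ given by
\[
\prod_{r=1}^{N}\lambda_{j_{r-1}}e^{-\lambda_{j_{r-1}}(u_r-u_{r-1})}\cdot e^{-\lambda_{j_N}(l-u_N)},\qquad u_0:=t_0 .
\]
This density is bounded below by $\underline q^{N}e^{-\bar q(l-t_0)}>0$.

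\textbf{Step 3 (integration).} Integrating this density over $I_1\times\dots\times I_N$, which lies inside the ordered simplex because the intervals are disjoint and ordered, gives
\[
\Pi_Q(C)\ \ge\ \pi_{0,j_0}\,\underline q^{N}e^{-\bar q(l-t_0)}\prod_{r=1}^N|I_r|>0 .
\]
The case $N=0$ reduces to $\pi_{0,j_0}e^{-\lambda_{j_0}(l-t_0)}>0$.

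The only delicate point is the one flagged at the outset: the result depends on what exactly a cylinder neighborhood is. The argument needs each switch-time window to have positive Lebesgue measure, and it must check that non-emptiness forces compatibility, namely alternating regimes and ordered windows. An incompatible specification would give an empty set, which the statement excludes. With these points verified, the positivity bound of Step 3 closes the proof.
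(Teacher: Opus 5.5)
Your proof is correct and follows the paper's route. The paper argues in one line: a cylinder fixes the regime sequence and confines the switch times to nondegenerate intervals, and the finite-dimensional switch-time density of the CTMC is strictly positive there because the off-diagonal intensities are. Your holding-time computation makes this quantitative through the bound $\Pi_Q(C)\ge \pi_{0,j_0}\,\underline q^{N}e^{-\bar q(l-t_0)}\prod_{r}|I_r|$, and it also makes explicit two requirements the paper leaves implicit: a positive initial mass $\pi_{0,j_0}>0$, and switch-time windows of positive length.
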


\begin{myproof}
A cylinder neighborhood fixes a finite sequence of regimes and constrains finitely many
switch times to lie in open intervals. Since all off-diagonal intensities are strictly
positive, the corresponding finite-dimensional switch-time density is strictly positive
on those open sets.
\end{myproof}

\begin{mylemma}
\label{lem:lipschitz_switch}
For every cylinder neighborhood $C\subset S_f([t_0,l])$ there exists $L_n(C)=O_p(1)$ such that
\[
\sup_{s\in C}
\sup_{\theta\neq\vartheta\in\Theta^{\times 2}}
\frac{|M_n(\theta\mid s^{\Delta_n})-M_n(\vartheta\mid s^{\Delta_n})|}
{\|\theta-\vartheta\|_2}
\le L_n(C).
\]
Moreover, there exists $L_\infty<\infty$ such that
\[
\sup_{s\in S_f([t_0,l])}
\sup_{\theta\neq\vartheta\in\Theta^{\times 2}}
\frac{|M_\infty(\theta\mid s)-M_\infty(\vartheta\mid s)|}
{\|\theta-\vartheta\|_2}
\le L_\infty.
\]
\end{mylemma}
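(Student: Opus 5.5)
Fix a cylinder neighborhood $C$ and a path $s\in C$, and consider $\theta,\vartheta\in\Theta^{\times 2}$. I will treat the two claims separately. In both cases the Lipschitz constant comes from a uniform bound on the $\theta$-sensitivity of each one-step term. The mechanism is that $g_j^2(T_j-t,\theta_j)\ge c_g$ uniformly, and $(t,\theta)\mapsto g_j(T_j-t,\theta)$ is uniformly Lipschitz, regime-wise by Assumption~\ref{ass:srdcm_hf}(2). Hence $\theta_j\mapsto\log g_j^2$ and $\theta_j\mapsto g_j^2(\cdot,\theta_{0,j})/g_j^2(\cdot,\theta_j)$ are Lipschitz in $\theta_j$ with constants depending only on $c_g$, the Lipschitz constant and $\sup g_j^2$. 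The same holds for $v_{t_i\mid t_{i-1},j}(\theta)/\Delta_n$, since it is a weighted average of $g_j^2$ over the cell with weights $e^{-2a_j(t_i-u)}/\Delta_n$ that do not depend on $\theta$. The same argument applies to the terminal variances $v_{T_j\mid t}(\theta)$, which are bounded below uniformly thanks to the gap condition, as in Step~1 of the proof of Proposition~\ref{prop:rdcm_uniform}.

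\emph{Deterministic part.} $M_\infty(\theta\mid s)$ is a normalized integral of $\Psi_{j_r(s)}(u,\theta_{j_r(s)})$ over a partition of $[t_0,l]$. I will bound $|\Psi_j(u,\theta_j)-\Psi_j(u,\vartheta_j)|\le L_\Psi\|\theta_j-\vartheta_j\|_2$ uniformly in $u$ and $j$. Then $|M_\infty(\theta\mid s)-M_\infty(\vartheta\mid s)|\le \frac{1}{2(l-t_0)}\,(l-t_0)\,L_\Psi\|\theta-\vartheta\|_2$, independently of $s$, because the pieces of the partition have total length $l-t_0$ whatever $N(s)$ is. So $L_\infty=L_\Psi/2$.

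\emph{Random part.} For each $i$, the term $\log\tilde f_{t_i\mid t_{i-1},j}(\theta)$ in \eqref{eq:rescaled_bridge_dens_my0} splits into three pieces, each Lipschitz in $\theta_j$.
\begin{itemize}
\item The log-variance term is deterministic with Lipschitz constant $\le C$.
\item The quadratic term equals $(X_{t_i}-m_{t_i\mid t_{i-1},j})^2/\Delta_n$ times the Lipschitz function $\Delta_n/v_{t_i\mid t_{i-1},j}(\theta)$, so its constant is $C\,\xi_{i,j}$ with $\xi_{i,j}:=(X_{t_i}-m_{t_i\mid t_{i-1},j}(X_{t_{i-1}}))^2/\Delta_n$.
\item The terminal ratio $\tilde R_{i,j}$ is a difference of two Gaussian log-densities at $0$. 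Their means $m_{T_j\mid t}(X_t)$ do not depend on $\theta$ (the drift is fixed), so $\theta$ enters only through $v_{T_j\mid t}(\theta)$, and the constant is $\le C(1+\sup_{t\le l}|X_t|^2+1)$.
\end{itemize}
Averaging over $i$ gives $L_n(C)\le C\big(1+\frac1n\sum_{i=1}^n\max_{j}\xi_{i,j}+\sup_{t\in[t_0,l]}X_t^2\big)$.

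The obstacle is uniformity over $s\in C$, because the realized increments depend on the true regime path, not on $s$. I will handle it by noting that the bound above no longer involves $s$: it takes $\max_j$ over the regime used in the contrast. For the true path $S$, the increment over a cell is a Brownian term plus a drift piece of order $\Delta_n(1+\sup|X|)$ on stable cells (Lemma~\ref{lem:prediction_error_rdcm}, applied regime-wise). On the finitely many switch cells, Lemma~\ref{lem:one_jump_per_cell} applied to the a.s. finite path $S$ gives the same order. The mean mismatch $m_{\cdot,j}-m_{\cdot,S}$ is $O(\Delta_n(1+|X_{t_{i-1}}|))$. Therefore
\[
\mathbb E\Big[\frac1n\sum_i\max_j\xi_{i,j}\Big]\le C\big(1+\mathbb E\sup_t X_t^2\big)<\infty,
\]
and $\sup_t X_t^2=O_p(1)$ by path continuity on the compact interval $[t_0,l]$. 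Markov's inequality then gives $L_n(C)=O_p(1)$, in fact uniformly over all of $S_f([t_0,l])$, which is stronger than required.
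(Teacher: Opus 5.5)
Your argument is correct in substance, but it gets uniformity in $s$ by a different mechanism than the paper.

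\textbf{How the two routes differ.} The paper splits the grid according to the hypothesized path $s$. On the regime-stable cells it invokes the RDCM-type Lipschitz dependence and sums $O(n)$ terms against the prefactor $1/n$. On the switch cells it uses that a cylinder fixes the number of switches, so that part is $O_p(1/n)$. Uniformity over $s\in C$ therefore comes from the cylinder structure.

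You instead bound the $\theta$-modulus of every one-step term, stable or switch, by the $s$-free quantity $C(1+\max_j\xi_{i,j}+\sup_t X_t^2)$. This makes the stable/switch distinction irrelevant and dispenses with the cylinder. It also gives the stronger uniformity over all of $S_f([t_0,l])$.

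Your route makes explicit two points the paper's sketch leaves implicit:
\begin{itemize}
\item The increments are generated by the true regime path, not by $s$. Your $\max_j$, together with the $O(\Delta_n(1+|X_{t_{i-1}}|))$ mean mismatch, handles this.
\item Switch-cell terms need a bound on their modulus. Continuity on the compact parameter set, which is what the paper invokes, only controls their size.
\end{itemize}
The paper's route has the advantage of reusing the decomposition of Lemma~\ref{lem:pathwise_uniform_switch}. Your deterministic part coincides with the paper's.

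\textbf{Two points need tightening.}

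First, the claim that ``the same argument applies'' to $v_{T_j\mid t}(\theta)$ is not covered by Assumption~\ref{ass:rdcm_hf}. This variance integrates $g_j^2$ over $[t,T_j]$. Item 4(b) gives Lipschitz continuity only on $[t_0,T_j-\delta]$, and item 4(a) gives mere continuity beyond that. The endpoint terms $\log f_{T_j\mid\cdot}(0;\theta_j)$ survive even after telescoping within regime blocks. So you need the extra hypothesis that $\theta_j\mapsto\int_{T_j-\delta}^{T_j}e^{-2a_j(T_j-u)}g_j^2(T_j-u,\theta_j)\,du$ is Lipschitz. It holds for \eqref{eq:phiandg}, because $x^{\theta^+}|\log x|\le 1/(e\,\underline{\theta}^+)$ on $(0,1]$. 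The paper's sketch relies on it silently as well.

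Second, $\mathbb E\sup_t X_t^2<\infty$ does not follow from path continuity. You can derive it from the linear growth of the bridge drift and the boundedness of the diffusion coefficient, via Gronwall plus Burkholder--Davis--Gundy. Alternatively, avoid expectations altogether: $\frac1n\sum_i (X_{t_i}-X_{t_{i-1}})^2/\Delta_n$ equals $(l-t_0)^{-1}$ times the realized quadratic variation and is therefore $O_p(1)$.
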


\begin{myproof}
On each regime-stable cell, the one-step contrast is of RDCM type, and its dependence on the
relevant diffusion block is uniformly Lipschitz by the same argument used in the RDCM
proof. Summing over $O(n)$ regime-stable cells and using the prefactor $1/n$ gives an overall
$O_p(1)$ modulus. On switch cells there are only finitely many terms on a cylinder, and
each is continuous on the compact parameter set, so the switch contribution is $O_p(1/n)$.
The bound for $M_\infty$ follows immediately from \eqref{eq:Uinfty_path_switch_main}
and the uniform Lipschitz continuity of the integrands $\Psi_j$.
\end{myproof}

\begin{mylemma}
\label{lem:cylinder_uniform_switch}
For every cylinder neighborhood $C\subset S_f([t_0,l])$,
\[
\sup_{s\in C}
\sup_{\theta\in\Theta^{\times 2}}
|M_n(\theta\mid s^{\Delta_n})-M_\infty(\theta\mid s)|
\stackrel{\mathbb P}{\longrightarrow}0.
\]
\end{mylemma}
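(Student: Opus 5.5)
The plan is to upgrade the path-wise argument of Lemma~\ref{lem:pathwise_uniform_switch} from a single path to a whole cylinder. The key fact is that on a cylinder $C$ the number of switches $N(s)\equiv N_C$ and the regime sequence $(j_0,\dots,j_{N_C})$ are fixed. Only the switch times $\kappa_r(s)$ vary, and each ranges over a compact interval $I_r$ with the $I_r$ pairwise disjoint. Let $d_C>0$ be the minimal distance between these intervals. For $n$ with $\Delta_n<d_C$, which is a bound uniform in $s\in C$, Lemma~\ref{lem:one_jump_per_cell} holds simultaneously for all $s\in C$, so that $|J_n(s)|\le N_C$.

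For every $s\in C$ I would split $M_n(\theta\mid s^{\Delta_n})-M_\infty(\theta\mid s)$ as in \eqref{eq:DecomMnPath} and \eqref{eq:limit_decop_pathwise} into a regime-stable part and a switch-cell part. The switch-cell part is handled as in points (i)--(ii) of Lemma~\ref{lem:pathwise_uniform_switch}. The limit piece is bounded deterministically by $C N_C/n$, since the $\Psi_j$ are bounded on the compact domain. For the empirical piece, $\sup_{s\in C}\sup_\theta|M_n^{(J)}|\le (N_C/n)\,\max_{i\le n}\sup_{j,\theta}|\log\tilde f_{t_i\mid t_{i-1},j}(\theta)|$. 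This needs a maximal bound over all cells rather than a bound for a fixed set of cells, because the switch cells move with $s$. The log-variance and $\tilde R_{i,j}$ terms are uniformly bounded. The quadratic term is at most $C\max_i(X_{t_i}-m_{t_i\mid t_{i-1},j})^2/\Delta_n$, which is $O_p(\log n)$ by the Gaussian maximal inequality applied to the $Z_i$, together with Lemma~\ref{lem:prediction_error_rdcm} and a union bound on the remainders. Hence the switch part is $O_p(N_C\log n/n)=o_p(1)$ uniformly over $C$.

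For the regime-stable part, I would note that every stable cell $i\in G_n(s)$ carries regime $j$, and its summand is exactly the RDCM-type quantity $\log\tilde f_{t_i\mid t_{i-1},j}(\theta_j)$, which does not depend on $s$. I would therefore write $M_n^{(0)}(\theta\mid s^{\Delta_n})=\sum_{j}\frac1n\sum_{i\le n}\mathbf 1\{i\in G_n(s),\,s_{t_{i-1}}=j\}\,\ell_{i,j}(\theta)$. The next step is to center $\ell_{i,j}(\theta)$ by its deterministic approximation $-\tfrac12\Psi_j(t_{i-1},\theta_j)$, using Lemmas~\ref{lem:variance_expansion_rdcm}--\ref{lem:quadratic_expansion_rdcm} and the analogue of Step~1 of Proposition~\ref{prop:rdcm_uniform} for $\tilde R_{i,j}$. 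This yields a drift-type error whose $\frac1n\sum_i|\cdot|$ is $o_p(1)$ uniformly in $\theta$, which bounds any indicator-weighted subsum uniformly in $s$. What remains is the martingale-type piece $\frac1n\sum_i w_{i,j}(\theta)(Z_i^2-1)\mathbf 1_{A_{i}(s)}$. Because the indicator set $\{i: s_{t_{i-1}}=j\}\cap G_n(s)$ is a union of at most $N_C+1$ grid intervals, this piece is bounded by $(N_C+1)\max_{k\le k'}\bigl|\frac1n\sum_{i=k}^{k'}w_{i,j}(\theta)(Z_i^2-1)\bigr|$. That maximum is controlled uniformly in $\theta$ by Doob/Kolmogorov's maximal inequality applied to the partial sums, which gives an $L^2$ bound $O(\Delta_n)$ as in \eqref{eq:Control_Estimated_Variance}. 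Uniformity in $\theta$ then follows from a finite $\eta$-net combined with the Lipschitz bound of Lemma~\ref{lem:lipschitz_switch}. Finally, the deterministic Riemann sums $\frac1n\sum\mathbf 1\{\cdot\}\Psi_j(t_{i-1},\theta_j)$ converge to $M_\infty^{(0)}(\theta\mid s)$ uniformly over $s\in C$ and $\theta$, by uniform continuity of $\Psi_j$ on $[t_0,l]\times\Theta$, since the interval endpoints are only displaced by at most $\Delta_n$.

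The main obstacle is this uniformity over the continuum of switch times. The fixed-path argument of Proposition~\ref{prop:rdcm_uniform} does not transfer directly, because the set of cells attributed to each regime depends on $s$. The step I expect to carry the argument is the reduction to maxima of partial sums over intervals of consecutive cells, controlled by the Kolmogorov/Doob inequality, together with the logarithmic maximal bound for the switch cells. Once these two maximal bounds are in place, the triangle inequality combines all pieces into the claimed $\sup_{s\in C}\sup_\theta$ convergence.
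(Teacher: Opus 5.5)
Within the paper's framework your proof is correct. It shares the paper's skeleton: on a cylinder $C$ the number of switches $N_C$ and the regime ordering are fixed, the contrast is split into regime-stable and switch cells, the stable part is handled by the regime-wise RDCM expansion, and the switch part is $N_C/n$ times a random factor. The difference is that the paper only asserts that both pieces are uniform over $C$, whereas you prove it, and in doing so you show where the assertion is inaccurate.

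\begin{itemize}
\item \emph{Switch cells.} Because the switch cells move with $s$, the relevant factor is $\max_{i\le n}\sup_{j,\theta}|\log\tilde f_{t_i\mid t_{i-1},j}|$. Through $\max_i Z_i^2$ this is $O_p(\log n)$, not the $O_p(1)$ claimed in the paper. The conclusion survives only because $N_C\log n/n\to0$.
\item \emph{Stable cells.} Uniformity over a continuum of switch times does not follow from Proposition~\ref{prop:rdcm_uniform} applied path by path. What justifies $\sup_{s\in C}$ is your argument: the regime-$j$ stable set is reduced to at most $N_C+1$ runs of consecutive cells, Doob/Kolmogorov is applied to partial sums of $w_{i,j}(\theta)(Z_i^2-1)$, and all other errors are controlled through $\frac1n\sum_i\sup_\theta|\cdot|$, which dominates every indicator-weighted subsum.
\end{itemize}
The paper's sketch is shorter; yours supplies the maximal inequalities that uniformity over $C$ actually requires.

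A few minor repairs are needed.
\begin{itemize}
\item The term $\tilde R_{i,j}$ is not uniformly bounded. However, $\max_i\sup_\theta|\tilde R_{i,j}(\theta)|=O_p(1)$ by path continuity, and that suffices.
\item Step~1 of Proposition~\ref{prop:rdcm_uniform} controls the telescoped sum, not $\frac1n\sum_i|\tilde R_{i,j}|$. Use instead either the per-cell bound $|\tilde R_{i,j}|\le C(1+\sup_t X_t^2)(|X_{t_i}-X_{t_{i-1}}|+\Delta_n)$, which gives $O_p(\sqrt{\Delta_n})$ after averaging, or telescoping within each of the at most $N_C+1$ runs, which gives $O_p(N_C/n)$.
\item For the $\eta$-net, use the Lipschitz continuity of $\theta\mapsto w_{i,j}(\theta)$ directly. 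The paper's proof of Lemma~\ref{lem:lipschitz_switch} has the same moving-switch-cell issue, so relying on it is less clean.
\end{itemize}

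Finally, there is a caveat that you inherit rather than introduce. Centering $\ell_{i,j}$ at $-\frac12\Psi_j(t_{i-1},\theta_j)$ via Lemmas~\ref{lem:prediction_error_rdcm}--\ref{lem:quadratic_expansion_rdcm} presupposes that the increment on cell $i$ has diffusion coefficient $g_j(T_j-t_{i-1},\theta_{0,j})$ with $j=s_{t_{i-1}}$, i.e.\ that the data were generated along $s$. Under one realized latent path $S$ this cannot hold for all $s\in C$ simultaneously. On cells where $s$ and $S$ disagree, the numerator of $\Psi_j$ would have to be $g^2_{S(u)}(T_{S(u)}-u,\theta_{0,S(u)})$. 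So the lemma, and your proof, are valid only under the convention already built into the definition of $M_\infty(\theta\mid s)$ and Lemma~\ref{lem:pathwise_uniform_switch}.
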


\begin{myproof}
All paths in a cylinder have the same number of switches and the same regime ordering, while
switch times vary in compact disjoint intervals. The RDCM expansion on the regime-stable cells is
uniform on the cylinder, and the switch-remainder is uniformly bounded by $K/n$ times an
$O_p(1)$ random factor, where $K$ is the common number of switches in the cylinder.
Therefore the same argument as in Lemma~\ref{lem:pathwise_uniform_switch} applies uniformly.
\end{myproof}

\begin{myproof}[\textbf{Proof of Theorem~\ref{thm:srdcm_uniform}}]\label{proof:srdcm_uniform}
We prove an upper and a lower bound.

\medskip
\noindent
\textit{Upper bound.}
By Lemma~\ref{lem:global_positive_comparison} 
% \footnote{Lemma~\ref{lem:pathwise_uniform_switch}
% already yields, for each fixed path $s\in S_f([t_0,l])$, a bound of the form
% $o_p(1)+O_p(1)N(s)/n$. Hence, on classes of paths with uniformly bounded jump counts,
% the path-wise control is already available. The role of
% Assumption~\ref{ass:srdcm_global_upper} is instead to prevent highly switching
% trajectories, that is, paths with large $N(s)$, from dominating the upper bound after
% integration in the observed log-sum-exp contrast. This is also consistent with the
% economic interpretation of the model, since frequent revisions of the perceived transition
% deadline would correspond to repeated costly adjustments in agents' decisions.}
, for every
$s\in S_f([t_0,l])$ and every $\theta\in\Theta^{\times 2}$,
\[
M_n(\theta\mid s^{\Delta_n})
\le
M_\infty(\theta\mid s)+r_n+\frac{C_\ast}{n}N(s),
\]
where the nonnegative quantity $r_n=o_{\mathbb P}(1)$ is independent of $(s,\theta)$.
Exponentiating and integrating with respect to $\Pi_Q$, we obtain
\[
\int_{S_f([t_0,l])}
\exp\!\bigl(nM_n(\theta\mid s^{\Delta_n})\bigr)\,\Pi_Q(ds)
\le
e^{nr_n}
\int_{S_f([t_0,l])}
\exp\!\bigl(nM_\infty(\theta\mid s)\bigr)e^{C_\ast N(s)}\,\Pi_Q(ds).
\]
Since
\[
M_\infty(\theta\mid s)\le M_\infty(\theta),
\qquad
\forall s\in S_f([t_0,l]),
\]
it follows that
\[
\int_{S_f([t_0,l])}
\exp\!\bigl(nM_n(\theta\mid s^{\Delta_n})\bigr)\,\Pi_Q(ds)
\le
e^{n(M_\infty(\theta)+r_n)}
E_{\Pi_Q}\!\bigl[e^{C_\ast N(S)}\bigr].
\]
Taking logarithms and dividing by $n$ gives
\[
M_n(\theta)-M_\infty(\theta)
\le
r_n+\frac1n\log E_{\Pi_Q}\!\bigl[e^{C_\ast N(S)}\bigr].
\]
Hence
\[
\sup_{\theta\in\Theta^{\times 2}}
\bigl(M_n(\theta)-M_\infty(\theta)\bigr)^+
\le
r_n+\frac1n\log E_{\Pi_Q}\!\bigl[e^{C_\ast N(S)}\bigr].
\]
By Assumption~\ref{ass:srdcm_hf}, the latent process is a finite-state continuous-time
Markov chain on the bounded interval $[t_0,l]$, hence $N(S)$ admits finite
exponential moments. Therefore
\[
\frac1n\log E_{\Pi_Q}\!\bigl[e^{C_\ast N(S)}\bigr]\to 0,
\]
and consequently
\[
\sup_{\theta\in\Theta^{\times 2}}
\bigl(M_n(\theta)-M_\infty(\theta)\bigr)^+
\xrightarrow{\mathbb P}0.
\]

\medskip
\noindent
\textit{Lower bound.}
Fix $\varepsilon>0$. By compactness of $\Theta^{\times 2}$, choose a finite
$\eta$-net
\[
\{\theta^{(1)},\dots,\theta^{(m)}\}\subset\Theta^{\times 2}.
\]
For each $k=1,\dots,m$, choose a cylinder neighborhood $C_k\subset S_f([t_0,l])$
such that
\[
\inf_{s\in C_k}M_\infty(\theta^{(k)}\mid s)\ge M_\infty(\theta^{(k)})-\varepsilon.
\]
By Lemma~\ref{lem:positive_cylinder_prob}, $\Pi_Q(C_k)>0$.

By Lemma~\ref{lem:cylinder_uniform_switch},
\[
\sup_{s\in C_k}\sup_{\theta\in\Theta^{\times 2}}
|M_n(\theta\mid s^{\Delta_n})-M_\infty(\theta\mid s)|
\xrightarrow{\mathbb P}0.
\]
Hence, for each fixed $k$,
\[
\inf_{s\in C_k}M_n(\theta^{(k)}\mid s^{\Delta_n})
\ge
M_\infty(\theta^{(k)})-2\varepsilon
\]
with probability tending to one.

Now fix $\theta\in\Theta^{\times 2}$ and let $k(\theta)\in\{1,\dots,m\}$ be such that $\|\theta-\theta^{(k(\theta))}\|_2\le\eta.$
% \[
% \|\theta-\theta^{(k(\theta))}\|_2\le\eta.
% \]
By Lemma~\ref{lem:lipschitz_switch}, for every $s\in C_{k(\theta)}$,
\[
M_n(\theta\mid s^{\Delta_n})
\ge
M_n(\theta^{(k(\theta))}\mid s^{\Delta_n})-L_n^\ast\eta,
\]
where $L_n^\ast:=\max_{1\le k\le m}L_n(C_k)=O_{\mathbb P}(1).$
% \[
% L_n^\ast:=\max_{1\le k\le m}L_n(C_k)=O_{\mathbb P}(1).
% \]
Therefore,
\[
\begin{aligned}
M_n(\theta)
&=
\frac1n\log
\int_{S_f([t_0,l])}
\exp\!\bigl(nM_n(\theta\mid s^{\Delta_n})\bigr)\,\Pi_Q(ds)
\\
&\ge
\frac1n\log
\int_{C_{k(\theta)}}
\exp\!\bigl(nM_n(\theta\mid s^{\Delta_n})\bigr)\,\Pi_Q(ds)
\\
&\ge
\frac1n\log
\int_{C_{k(\theta)}}
\exp\!\bigl(nM_n(\theta^{(k(\theta))}\mid s^{\Delta_n})-nL_n^\ast\eta\bigr)\,\Pi_Q(ds)
\\
&\ge
M_\infty(\theta^{(k(\theta))})-2\varepsilon-L_n^\ast\eta
+\frac1n\log \Pi_Q(C_{k(\theta)})
\end{aligned}
\]
with probability tending to one.

By Lemma~\ref{lem:lipschitz_switch}, the limit contrast is Lipschitz on
$\Theta^{\times 2}$, so
\[
M_\infty(\theta^{(k(\theta))})\ge M_\infty(\theta)-L_\infty\eta.
\]
Moreover, since the family $\{C_k\}_{k=1}^m$ is finite and each cylinder has strictly
positive $\Pi_Q$-probability,
\[
\min_{1\le k\le m}\Pi_Q(C_k)>0,
\qquad
\inf_{1\le k\le m}\frac1n\log \Pi_Q(C_k)=o(1).
\]
Thus
\[
M_n(\theta)\ge M_\infty(\theta)-2\varepsilon-(L_\infty+L_n^\ast)\eta+o(1)
\]
uniformly in $\theta$, with probability tending to one.

Choose first $\eta>0$ so small that $(L_\infty+A)\eta<\varepsilon$ on the event
$\{L_n^\ast\le A\}$, and then use that $L_n^\ast=O_{\mathbb P}(1)$. Since
$\varepsilon>0$ is arbitrary,
\[
\liminf_{n\to\infty}
\inf_{\theta\in\Theta^{\times 2}}
\bigl(M_n(\theta)-M_\infty(\theta)\bigr)\ge 0
\qquad\text{in probability.}
\]

Combining the upper and lower bounds yields
\[
\sup_{\theta\in\Theta^{\times 2}}
|M_n(\theta)-M_\infty(\theta)|
\xrightarrow{\mathbb P}0.
\]
\end{myproof}

\begin{myproof}[\textbf{Proof of Corollary~\ref{cor:srdcm_post_tau}}]
\label{proof:srdcm_post_tau}
By Theorem~\ref{thm:srdcm_uniform},
% \[
% \sup_{\theta\in\Theta^{\times 2}}
% |M_n(\theta)-M_\infty(\theta)|
% \stackrel{\mathbb P}{\longrightarrow}0.
% \]
under $l>\max\{\tau_1,\tau_2\}$, the whole diffusion block is visible in both regimes,
and the parameter space is compact. Hence the argmax theorem yields $\hat\theta_n\stackrel{\mathbb P}{\longrightarrow}\theta_0$.
\end{myproof}

\begin{myproof}[\textbf{Proof of Corollary~\ref{cor:srdcm_pre_tau}}]
\label{proof:srdcm_pre_tau}
If $l\le \tau_j$ for some regime $j$, then the regime-$j$ contribution to the limiting
contrast does not depend on $\theta_j^+$ and is therefore flat in that component. Hence
only the visible block $\theta_j^-$ is identifiable. Consistency of the corresponding
estimator follows the same arguments in Corollary \ref{cor:srdcm_post_tau} on the visible block.
\end{myproof}

\end{document}